\documentclass[11pt]{article}
\usepackage{graphics,graphicx,tikz,color,float, hyperref}
\usepackage{mathtools}
\usepackage{amsmath}
\usepackage{amssymb}
\usepackage{amsfonts}
\usepackage{amsthm}
\usepackage{fullpage}
\usepackage {comment}

\usepackage{xcolor}
\newcommand{\mycomment}[1]{\textup{{\color{red}#1}}}

\newtheorem{fact}{Fact}

\newcommand*{\nmasw}{\mathsf{ wqnm\mhyphen state}}
\newcommand*{\qpasw}{\mathsf{ wqpa\mhyphen state}}

\newcommand*{\qmas}{\mathsf{qma\mhyphen state}}
\newcommand{\tab}{\hspace*{2em}}
\newcommand{\beq}{\begin{equation}}
	\newcommand{\enq}{\end{equation}}
\newcommand{\bel}{\begin{lemma}}
	\newcommand{\enl}{\end{lemma}}
\newcommand{\bet}{\begin{theorem}}
	\newcommand{\ent}{\end{theorem}}

\newcommand{\tr}{\mathrm{Tr}}

\newcommand{\ketbra}[1]{|#1\rangle\langle#1|}

\newcommand{\err}{\mathrm{err}}
\newcommand{\eps}{\varepsilon}

\newcommand*{\cH}{\mathcal{H}}

\newcommand*{\cX}{\mathcal{X}}

\newcommand*{\cZ}{\mathcal{Z}}
\newcommand*{\cE}{\mathcal{E}}

\newcommand*{\IP}{\mathsf{IP}}
\newcommand{\Ext}{\mathsf{Ext}}
\newcommand{\Q}{\mathcal{Q}}
\newcommand{\supp}{\mathrm{supp}}
\newcommand{\suppress}[1]{}

\newcommand{\defeq}{\ensuremath{ \stackrel{\mathrm{def}}{=} }}
\newcommand{\F}{\mathrm{F}}

\newcommand {\br} [1] {\ensuremath{ \left( #1 \right) }}

\newcommand {\minusspace} {\: \! \!}
\newcommand {\smallspace} {\: \!}
\newcommand {\fn} [2] {\ensuremath{ #1 \minusspace \br{ #2 } }}

\newcommand {\relentalpha} [3] {\fn{\mathrm{D}_{#3}}{#1 \middle\| #2}}
\newcommand {\dmax} [2] {\fn{\mathrm{D}_{\max}}{#1 \middle\| #2}}

\newcommand {\mutinf} [2] {\fn{\mathrm{I}}{#1 \smallspace : \smallspace #2}}
\newcommand {\imax}{\ensuremath{\mathrm{I}_{\max}}}

\newcommand {\condmutinf} [3] {\mutinf{#1}{#2 \smallspace \middle\vert \smallspace #3}}
\newcommand {\hminn} [2] {\fn{ \mathrm{ \tilde{H} }_{\min}}{#1 \middle | #2}}
\newcommand {\hmin} [2] {\fn{ \mathrm{H }_{\min}}{#1 \middle | #2}}
\newcommand {\hmineps} [3] {\fn{\mathrm{H}^{#3}_{\min}}{#1 \middle | #2}}

\newcommand {\id} {\ensuremath{\mathrm{I}}}

\newcommand {\Hmin}{\mathrm{H}_{\min}}

\newcommand{\QCC}{\mathrm{QCC}}

\newcommand*{\eff}{ \tilde{\mathsf{eff}}}

\newcommand*{\qpas}{\mathsf{qpa\mhyphen state}}
\newcommand*{\qcas}{\mathsf{qca\mhyphen state}}

\newcommand*{\qmras}{\mathsf{qMara\mhyphen state}}
\newcommand*{\qias}{\mathsf{qia\mhyphen state}}
\newcommand*{\geas}{\mathsf{gea\mhyphen state}}
\newcommand*{\qma}{\mathsf{qma}}
\newcommand*{\qpa}{\mathsf{qpa}}
\newcommand*{\qca}{\mathsf{qca}}
\newcommand*{\qbsa}{\mathsf{qbsa}}
\newcommand*{\qmra}{\mathsf{qMara}}
\newcommand*{\qia}{\mathsf{qia}}
\newcommand*{\gea}{\mathsf{gea}}
\newcommand*{\nma}{\mathsf{qnma}}
\newcommand*{\nmext}{\mathsf{nmExt}}

\newcommand{\mac}{\mathsf{MAC}}
\newcommand{\X}{\mathcal{X}}
\newcommand{\Y}{\mathcal{Y}}
\newcommand{\Z}{\mathcal{Z}}

\newcommand*{\cL}{\mathcal{L}}

\newcommand{\bra}[1]{\langle #1|}
\newcommand{\ket}[1]{|#1 \rangle}

\mathchardef\mhyphen="2D

\makeatletter
\newcommand*{\rom}[1]{\expandafter\@slowromancap\romannumeral #1@}
\makeatother

\mathchardef\mhyphen="2D

\newtheorem*{remark}{Remark}
\newtheorem{definition}{Definition}
\newtheorem{claim}{Claim}

\newtheorem{theorem}{Theorem}
\newtheorem{lemma}{Lemma}
\newtheorem{corollary}{Corollary}

\title{Quantum Measurement Adversary}
\author{
	Divesh Aggarwal~\footnote{Centre for Quantum Technologies and Department of Computer Science, National University of Singapore, \texttt{dcsdiva@nus.edu.sg}}\qquad
	Naresh Goud Boddu~\footnote{NTT Research, Sunnyvale, USA, \texttt{naresh.boddu@ntt-research.com}}\qquad
Rahul Jain~\footnote{Centre for Quantum Technologies and Department of Computer Science, 
  National University of Singapore and MajuLab, UMI 3654, Singapore,  \texttt{rahul@comp.nus.edu.sg}}\qquad
Maciej Obremski~\footnote{Center for Quantum Technologies, National University of Singapore, \texttt{obremski.math@gmail.com}}
}
\newfloat{Protocol}{tbhp}{lop}
\begin{document}
\maketitle

  \begin{abstract}
  	{\em Multi-source extractors} are functions that extract uniform randomness from multiple ({\em weak}) sources of randomness. Quantum multi-source extractors were considered by Kasher and Kempe ~\cite{KK10} (for the {\em quantum independent adversary} and the {\em quantum bounded storage adversary}), Chung, Li and Wu~\cite{CLW14} (for the {\em general entangled adversary}) and Arnon-Friedman, Portmann and Scholz~\cite{APS16} (for the {\em quantum Markov adversary}).

One of the main objectives of this work is to unify all the existing quantum multi-source adversary models. We propose two new models of adversaries: 1) the quantum measurement adversary ($\qma$), which generates side information using entanglement and on post-measurement and 2) the quantum communication adversary ($\qca$), which generates side information using entanglement and communication between multiple sources. We show that,
 \begin{enumerate}
     \item $\qma$ is the strongest adversary among all the known adversaries, in the sense that the side information of all other adversaries can be generated by $\qma$.
       \item The (generalized) inner-product function (in fact a general class of two-wise independent functions) continues to work as a good extractor with matching parameters as that of Chor and Goldreich~\cite{CG85}  against classical adversaries. 
    \item A {\em non-malleable extractor} proposed by Li~\cite{Li12c} (against classical adversaries) continues to be secure against quantum side information. This result implies a non-malleable extractor result of Aggarwal,  Chung,  Lin and  Vidick~\cite{ACLV18} with uniform seed. We strengthen their result via a completely different proof to make the non-malleable extractor of  Li secure against quantum side information even when the seed is not uniform.
     \item A modification (working with weak local randomness instead of uniform local randomness) of the Dodis and Wichs \cite{DW09} protocol for {\em privacy-amplification}  is secure against active quantum adversaries (those who arbitrarily modify the messages exchanged in the protocol). This strengthens on a recent result due to Aggarwal,  Chung,  Lin and  Vidick~\cite{ACLV18} which uses uniform local randomness.
   
    \item A tight {\em efficiency} lower bound for the (generalized) inner-product function (in fact a general class of two-wise independent functions).
 \end{enumerate}
  	\end{abstract}
  	\pagebreak
\section{Introduction}
Randomized algorithms are designed under the assumption that randomness used within the algorithms is uniformly distributed. However, random sources in nature are often not necessarily uniform (aka weak). Thus, it is important to understand how to extract uniform randomness from weak-sources. Extractors are functions that transform weak-sources to uniform randomness. Extractors have numerous applications including privacy-amplification, pseudo-randomness, derandomization,  expanders, combinatorics  and cryptography.
A general model of weak-sources is the so-called {\em min-entropy} source (please refer to Section~\ref{sec:prelims} for definitions of information-theoretic quantities). Let $n, t, m$ be positive integers and  $k, k_1, k_2, k_1', k_2', b_1, b_2, l$ be positive reals.

\suppress{
The min-entropy of a random variable $X$ is given by 
\[    \Hmin(X) =\min_{x \in     \text{supp}(X)       } \log\left(\frac{1}{\Pr(X=x)} \right).\]
}
\begin{definition}[\cite{ZUC90}]\label{weaksource}
	An $(n,k)$-source denotes an $n$-bit source $X\in \{ 0,1\}^n$ with the min-entropy bound $k$, i.e. $\Hmin(X)  \geq k$. 
\end{definition}
It can be argued that no deterministic function can extract even one uniform bit given an (arbitrary) $(n,k)$-source as long as $k \leq n-1$ \cite{CG85}. This lead to designing extractors using an additional uniform source (aka {\em seed}) called {\em seeded extractors}~\cite{FS04}.  Another approach is to  consider multiple independent weak-sources.  
\begin{definition}[\cite{ZUC90, CLW14}]\label{tsource}
	An $(n,k_1,k_2, \ldots, k_t)$-source denotes $t$ independent $n$-bit sources $X_1, X_2, \ldots, X_t$ with min-entropy bounds $ \Hmin(X_i)  \geq k_i$ for every $i \in [t].$
\end{definition}
Multi-source extractors are functions that transform multiple weak-sources to uniform randomness. Multi-source extractors have been studied extensively in the classical setting~\cite{Bou05,CG85,KLRZ08,R06,Raz05,KLR09}. With the advent of quantum computers, it is natural to investigate the security of extractors against a quantum adversary with quantum side information on weak-sources. 
As expected quantum side information presents many more challenges compared to classical side information. Gavinsky et al.~\cite{GKKRW07} gave an example of a seeded extractor secure against a classical adversary but not secure against a quantum adversary (even with a very small side information). Several definitions of quantum multi-source adversaries have been proposed in the literature. Kasher and Kempe~\cite{KK10} introduced quantum bounded storage adversary ($\qbsa$), where the adversary has bounded memory. They also introduced quantum independent adversary ($\qia$) which obtains independent side information from various sources. Arnon-Friedman, Portmann and Scholz~\cite{APS16} introduced quantum Markov adversary ($\qmra$), such that $X-E-Y$ forms a {\em Markov-chain} ($\condmutinf{X}{Y}{E}=0$)~\footnote{$\condmutinf{A}{B}{C}_\rho$ represents conditional mutual information between registers $A,B$ given register $C$ in state $\rho$.}, where $E$ is adversary's side information. Chung, Li and Wu~\cite{CLW14} introduced general entangled adversary ($\gea$).   A natural question that arises is what is the relationship/hierarchy between these different adversary models. Previous works~\cite{CLW14,KK10,APS16} also ask if there is a model which is stronger than the existing models. To quote~\cite{KK10} \begin{quote}
    In light of this, it is not clear if and how entangled guessing entropy sources can be incorporated into the model, and hence we only consider bounded storage adversaries in the entangled case.
\end{quote}A difficulty that they point is that in the quantum setting (unlike the classical setting), measuring the adversary side information might break the independence of the sources~\cite{KK10}. Quantum entanglement between various parties, used to generate side information raises additional issues. Entanglement is of course known to yield several unexpected effects with no classical counterparts, e.g., {\em non-local correlations}~\cite{Bell64} and {\em super-dense coding}~ \cite{CS92} etc. 

%

A main objective of this work is to unify all the existing quantum multi-source adversary models. We propose two new models of adversaries: 1) the quantum measurement adversary ($\qma$, Definition~\ref{qmadv}), which generates side information using entanglement and conditioned on post-measurement outcomes and 2) the quantum communication adversary ($\qca$, Definition~\ref{def:qcadv}), which generates side information using entanglement and communication between multiple sources.
\begin{definition}[$l\mhyphen\qma$,~$l\mhyphen\qmas$]\label{qmadv} 
	Let $\tau_{X\hat{X}}$, $\tau_{Y\hat{Y}}$ be canonical purifications of uniform sources $X,Y$ respectively (registers $\hat{X}\hat{Y}$ with Reference). 
	\begin{enumerate}
		\item  Alice and Bob hold $X,Y$ respectively. They also share an entangled state $\tau_{NM}$ (Alice holds $N$, Bob holds $M$).
		\item  Alice applies an (safe) isometry $V_A :  \cH_{X} \otimes \cH_{N}   \rightarrow   \cH_{X} \otimes \cH_{N'} \otimes \cH_{A}$ and Bob applies an (safe) isometry  $V_B :    \cH_Y \otimes \cH_{M} \rightarrow   \cH_{Y} \otimes \cH_{M'} \otimes \cH_{B}$. Registers $A, B$ are single qubit registers. Let $$\rho_{X\hat{X}AN'M'BY\hat{Y}} = (V_A \otimes V_B) (\tau_{X\hat{X}} \otimes \tau_{NM} \otimes \tau_{Y\hat{Y}})(V^\dagger_A \otimes V^\dagger_B).$$
		\item Alice and Bob perform a measurement in the  computational basis on the registers $A$ and $B$ respectively. Let 
		\[l = \log\left( \frac{1}{ \Pr(A=1, B=1)_{\rho}}  \right) \quad ; \quad \Phi_{X\hat{X}N'M'Y\hat{Y}} =\rho_{X\hat{X}AN'M'BY\hat{Y}} \vert (A=1,B=1). \]
		\item  Adversary gets access to either $\Phi_{XN'}$ or $\Phi_{M'Y}.$ The pure state $\Phi$ is called an $l\mhyphen\qmas$. 
	\end{enumerate}
\end{definition}

 \noindent \textbf{Motivation for the $l\mhyphen\qma$ model.}  A general model classically for two sources with side information is the Markov model of the form \[\mathcal{C} = \{ X-E-Y : \hmin{X}{E}  \geq k_1 \ \textnormal{and} \ \hmin{Y}{E}  \geq k_2\}.\]

We can generate Markov-chain as above via the following procedure. Let Alice and Bob hold independent and uniform sources $X'$ and $Y'$ respectively. In addition, let Alice and Bob share independent randomness $E'$. Let Alice generate a local random variable $A\in\{0,1\}$ (using $X'E'$) and Bob generate a local random variable $B\in\{0,1\}$ (using $Y'E'$). It can be readily verified that $(X'E'Y'~|~A=B=1)$ is a Markov-chain. It can also be checked that any Markov-chain  $X-E-Y$ can be generated using this procedure.  

The natural analogue of this procedure in the quantum setting is as described in Definition~\ref{qmadv}. A key difference with the classical setting is that, in the quantum setting, the resultant post-measurement state, does not form a Markov-chain and thus bringing the elusive nature of quantum side information. This brings us to consider sources~\footnote{Even though the state contain adversary side information, we call entire pure state as source for simplicity.} of the form 
\[\mathcal{Q} = \{ \sigma_{X\hat{X}N'M'Y\hat{Y}} : \sigma_{X\hat{X}N'M'Y\hat{Y}} \ \textnormal{is an} \ l\mhyphen\qmas\}.\]
A question may be asked if one can provide both the registers $M'$ and $N'$ as side information to the  adversary. However this may allow adversary to gain complete knowledge of $X,Y$ (since $N'$ may contain a copy of $X$ and $M'$ may contain a copy of $Y$) making the model trivial. Thus we settle on the model as in Definition~\ref{qmadv}.

We next define $2$-source quantum communication adversary inspired from quantum communication protocols.
\begin{definition}[$(k_1,k_2)\mhyphen\qca$,~$(k_1,k_2) \mhyphen \qcas$]~\label{def:qcadv}
	Let $\tau_{X\hat{X}}$, $\tau_{Y\hat{Y}}$ be the canonical purifications of the independent sources $X,Y$ respectively (registers $\hat{X}\hat{Y}$ with Reference). 
	\begin{enumerate}
	\item  Alice and Bob hold $X,Y$ respectively. They also share an entangled pure  state $\tau_{NM}$ (Alice holds $N$, Bob holds $M$).
	\item  Alice and Bob execute a quantum communication protocol, at the end of which the final state is $\ket{\Phi}_{XY\hat{X}\hat{Y}N'M'}$ (Alice holds $XN'$ and Bob holds $YM'$),
		with $\hmin{X}{M'Y\hat{Y}}_{\Phi} \geq k_1$ and 
		$\hmin{Y}{N'X\hat{X}}_{\Phi} \geq k_2$.
		\item   Adversary gets access to either one of  $\Phi_{XN'}$ or $\Phi_{YM'}$ of its choice. The state $\Phi$ is called a $(k_1,k_2) \mhyphen \qcas.$
	\end{enumerate}
\end{definition}
\begin{remark}
We note to the reader that indeed every $(k_1,k_2) \mhyphen \qcas$ is  a 
$(k_1,k_2)\mhyphen\qpas$ (where, $\qpas$ stands for quantum purified adversary state) defined in Boddu, Jain and Kapshikar~\cite{BJK21} (see Definition~\ref{qmadvk1k2}). But, it is apriori not clear if every $(k_1,k_2)\mhyphen\qpas$ can be generated via a communication protocol as in Definition~\ref{def:qcadv}. 
\end{remark}

\subsection*{Our results}
We show that,
  \begin{enumerate}
     \item $\qma$ is the strongest adversary among all the known adversaries (Theorem~\ref{thm2-intro1}), in the sense that the side information of all other adversaries can be generated by $\qma$.
       \item The (generalized) inner-product function (in fact a general class of two-wise independent functions) continues to work as a good extractor against $\qma$ (Theorem~\ref{corr:iphminhminintro223}) with matching parameters as that of Chor and Goldreich~\cite{CG85}  against classical adversaries. 
    \item A non-malleable extractor proposed by Li~\cite{Li12c} (against classical adversaries) continues to be secure against quantum side information (Theorem~\ref{refnmext_new}). A non-malleable extractor ($\nmext$) for sources ($X,Y$) is an extractor such that $\nmext (X,Y)$ is uniform and independent of $\nmext (X,Y')YY'$, where $Y'\ne Y$ is generated by the adversary using $Y$ and the side information on $X$. 
    
    This result implies a non-malleable extractor result of Aggarwal,  Chung,  Lin and  Vidick\newline~\cite{ACLV18} with uniform $Y$. We strengthen their result via a completely different proof to make the non-malleable extractor of  Li ($\nmext(X,Y) = \langle X, Y \vert \vert Y^2 \rangle$) secure against quantum side information even when $Y$ is not uniform.
     \item A modification (working with weak local randomness instead of uniform local randomness) of the Dodis and Wichs~\cite{DW09} protocol for privacy-amplification (PA) is secure against active quantum adversaries (those who arbitrarily modify the messages exchanged in the protocol). This strengthens on a recent result due to~\cite{ACLV18} which uses uniform local randomness.  
    \item We also show a tight efficiency lower bound (Corollary~\ref{refipbound_newintro}) for the (generalized) inner-product function (in fact a general class of two-wise independent functions). 
 \end{enumerate}


\subsection*{$\qma$ can simulate other adversaries}
We show that the side information of all the adversaries can be simulated~(see Definition~\ref{def:simulation}) in the model of $\qma$.

\begin{theorem}\label{thm2-intro1}
	Quantum side information of $(b_1, b_2)$-$\qbsa$ (see Definition~\ref{qbsadv}), $(k_1, k_2)$-$\qia$ (see Definition~\ref{qiadv}), $(k_1, k_2) $-$\gea$ (see Definition~\ref{geadv}),  acting on an $(n,k'_1,k'_2)$-source can be simulated by an $l\mhyphen\qma$ for some $l\leq 2\min \{b_1,b_2 \} +2n-k'_1-k'_2$, $l \leq 2n-k_1-k_2$, $l \leq 2n-k_1-k_2$ respectively.
	
	Quantum side information of  $(k_1, k_2) $-$\qmra$ (see Definition~\ref{madv}) and $(k_1, k_2) $-$\qca$ (see Definition~\ref{def:qcadv}) can be simulated $\eps$-approximately by an $l\mhyphen\qma$ for some $l \leq 2n-k_1-k_2+25+6 \log(1/\eps) $, $l \leq 2n-k_1-k_2+25+6 \log(1/\eps) $ respectively.
\end{theorem}

\subsection*{Inner-product is secure against $\qma$}
A $2$-source extractor secure against $l \mhyphen\qma$ is defined as follows:
\begin{definition}\label{qma2source}
An $(n,n,m)$-$2$-source extractor  $2\Ext : \{0,1\}^n \times \{0,1\}^n \to \{0,1\}^m$ is said to be $(l,\eps)$-quantum secure
against $l \mhyphen\qma$ if for every $l \mhyphen \qmas$ $\Phi$ (chosen by $l \mhyphen\qma$), we have 
$$  \|\Phi_{2\Ext(X,Y)N'} - U_m \otimes \Phi_{N'} \|_1 \leq \eps \tab \text{and} \tab  \| \Phi_{2\Ext(X,Y)M'} - U_m \otimes \Phi_{M'} \|_1 \leq \eps.$$ The extractor is called $Y$-strong if
$$  \| \Phi_{2\Ext(X,Y)M'Y} - U_m \otimes \Phi_{M'Y} \|_1 \leq \eps, $$
and $X$-strong if
$$  \|\Phi_{2\Ext(X,Y)N'X} - U_m \otimes \Phi_{N'X} \|_1 \leq \eps. $$
\end{definition}

\suppress{
\begin{definition}\label{qma2sourcek1k2}
An $(n,n,m)$-$2$-source extractor  $2\Ext : \{0,1\}^n \times \{0,1\}^n \to \{0,1\}^m$ is said to be $(k_1,k_2,\eps)$-quantum secure
against $(k_1,k_2) \mhyphen\qpa$ if for every $(k_1,k_2) \mhyphen \qpas$~$\Phi_{X\hat{X}N'M'Y\hat{Y}}$ (chosen by $(k_1,k_2)\mhyphen\qpa$), we have 
$$  \|\Phi_{2\Ext(X,Y)N'} - U_m \otimes \Phi_{N'} \|_1 \leq \eps \tab \text{and} \tab  \| \Phi_{2\Ext(X,Y)M'} - U_m \otimes \Phi_{M'} \|_1 \leq \eps.$$ The extractor is called $Y$-strong if
$$  \| \Phi_{2\Ext(X,Y)M'Y} - U_m \otimes \Phi_{M'Y} \|_1 \leq \eps, $$
and $X$-strong if
$$  \|\Phi_{2\Ext(X,Y)N'X} - U_m \otimes \Phi_{N'X} \|_1 \leq \eps. $$
\end{definition}}
We show that the inner-product extractor (in fact a general class of $\X$-two-wise independent function (Definition~\ref{def:infoquant}~[\ref{2wisefunction}])) of Chor and Goldreich \cite{CG85} continues to be secure against $l \mhyphen\qma$.

\begin{theorem}\label{corr:iphminhminintro223}
Let $p=2^m$ and $n'= n \log p$.  Let $\rho_{X \hat{X} N Y \hat{Y} M}$ be an $l\mhyphen\qmas$ such that $\vert X \vert = \vert \hat{X} \vert= \vert Y \vert= \vert \hat{Y} \vert =n'$ and $XY$ classical (with copies $\hat{X}\hat{Y}$ respectively). Let $f : \X \times \Y \to \Z$ be a $\X $-two-wise independent function such that $\X  =  \Y = \mathbb{F}_p^{n}$, $\Z = \mathbb{F}_p$ and $(X,Y) \in (\X,\Y)$. Let $Z= f(X, Y)\in \Z$. Then, \[ \|\rho_{Z Y M} - U_m  \otimes \rho_{YM} \|_1 \leq \eps,  \]for parameters 
$l \leq \left(n'-m-40+8 \log \left( \eps \right)\right)/2.$

Symmetric results follow for a $\Y$-two-wise independent function $f : \X \times \Y \to \Z$ by exchanging $(N,X) \leftrightarrow (M,Y)$ above.
\end{theorem}
\suppress{

More generally we show any $\X$-two-wise independent function (Definition~\ref{def:infoquant}~[\ref{2wisefunction}]) continues to be secure against $\qma$.  
\begin{theorem}\label{thm1-intro}
	 Let  $f : \X \times \Y \to \Z$ be a $\X$-two-wise independent function such that $\vert \X \vert = \vert \Y \vert$. 
%
	\begin{enumerate}
		\item Let $\tau = \tau_{XX_1} \otimes \tau_{NM} \otimes \tau_{YY_1}$, where $\tau_{XX_1}$ is the canonical purification of $\tau_{X}$ (maximally mixed in $\X$), $\tau_{YY_1}$ is the  canonical purification of $\tau_{Y}$ (maximally mixed in $\Y$)  and $\tau_{NM}$ is a pure state. 
		\item Let  $\psi_A :  \cH_{X_1} \otimes \cH_{N} \rightarrow   \cH_{X_1} \otimes \cH_{N'} \otimes \cH_{A}$ be an isometry. Let $\rho = (\psi_A \tau \psi^\dagger_A~|~A=1).$
		\item  Let  $\psi_B :    \cH_Y \otimes \cH_{M} \rightarrow   \cH_{Y} \otimes \cH_{M'} \otimes \cH_{B}$ be an isometry. Let $\Theta = \psi_B \rho \psi^\dagger_B$ and $\Phi= (\Theta~|~B=1)$. 
		\item Let $Z= f(X, Y)\in \Z$ and $\eps \defeq \| \Phi_{Z Y M'} - U_Z  \otimes \Phi_{YM'} \|_1 $.
	\end{enumerate}
Then 
$$\hminn{X}{M}_\rho -\log \vert \Z \vert  +  \log \left(\Pr(B=1)_\Theta \right) \leq   2 \log \frac{1}{\eps}.$$Additionally if $\tau_M= \rho_M$, we further have,
	$$ \log \vert \X \vert - \log \vert \Z \vert  +  \log \left(\Pr(A=1, B=1)_{(\psi_A \otimes \psi_B)\tau(\psi_A \otimes \psi_B)^\dagger} \right) \leq   2 \log \frac{1}{\eps}.$$Symmetric results follow for a $\Y$-two-wise independent function $f : \X \times \Y \to \Z$ by exchanging $(N,A,X) \leftrightarrow (M,B,Y)$ above. \mycomment{changed the sentence}
	\suppress{Also,
$$ \log \vert \X \vert - \log \vert \Z \vert  +  \log \left(\Pr(A=1, B=1)_{(\psi_A \otimes \psi_B)\tau(\psi_A \otimes \psi_B)^\dagger} \right) \leq   2 \log \frac{1}{\eps}.$$
Symmetric results follow for a $\Y$-two-wise independent function $f : \X \times \Y \to \Z$ by exchanging $(N,A,X) \leftrightarrow (M,B,Y)$ above. \mycomment{changed the sentence}}
\end{theorem}
}
As a corollary of Theorem~\ref{corr:iphminhminintro223}, we also get a  tight efficiency (Definition~\ref{def:eff}) lower bound for a $\X$-two-wise independent function.
\begin{corollary}[Efficiency lower bound for a $\X$-two-wise independent function]\label{refipbound_newintro} Let function  $f : \X \times \Y \to \Z$ be a $\X$-two-wise independent function such that $\vert \X \vert = \vert \Y \vert$. Let $U$ be the uniform distribution on $\X \times \Y$.
		 For any $\gamma > 0$,
		 $$ \log \left( \eff_{\gamma}(f,U) \right) \geq  \frac{1}{2} \left(  \log \vert \X \vert - \log \vert \Z \vert -40+ 8 \log \left( 1-\gamma-\frac{1}{ \vert \Z \vert } \right) \right).$$
\end{corollary}

\suppress{
We provi
	Let the inputs $X \in \X$ and $Y \in \Y$ be given to Alice and Bob respectively according to distribution $U$. Consider an optimal zero-communication protocol $\Pi$ with error of protocol under $U$ on non-abort being $\gamma$. Let the state  shared between Alice and Bob after their local operations be $\tau_{XN'M'YAB}$. Let $\perp$ represent the abort symbol. Let $\Pr(A= \perp  \vee B= \perp)_\tau \leq 1- \eta,$ and  $$\Phi_{XN'M'YAB} = (\tau_{XN'M'YAB}|A \ne \perp \wedge B \ne \perp)\enspace,$$ where Alices holds $XN'A$ and Bob holds $M'YB$. We have,
\[ \Pr(B \ne f(X,Y))_{\Phi}  \le \gamma  \tab \implies \tab  \Pr(B = f(X,Y))_{\Phi}  \ge 1-\gamma  . \]
Let $\| \Phi_{BYM'}- U_{ \log \vert\Z\vert} \otimes \Phi_{YM'} \|_1 \defeq \eps $. This implies,   $1-\gamma \leq  \Pr(B = f(X,Y))_{\Phi}  \leq \frac{1}{\vert \Z \vert}+\eps$. Noting $A \ne \perp$ (here) as $A=1$ (in Definition~\ref{qmadv}), $B \ne \perp$ (here) as $B=1$ (in Definition~\ref{qmadv}), state $\Phi$ is an $l \mhyphen \qmas$ with $l =\log \left( \frac{1}{\Pr(A \ne \perp  \wedge B \ne \perp)_\tau}\right).$ Since, $\| \Phi_{BYM'}- U_{ \log \vert\Z\vert} \otimes \Phi_{YM'} \|_1 = \eps \geq 1-\gamma- \frac{1}{\vert\Z \vert} $, using Theorem~\ref{corr:iphminhminintro223} we have 
 $$ \log \left( \eff_{\gamma}(f,U) \right) \geq \log \left( \frac{1}{\Pr(A \ne \perp  \wedge B \ne \perp)_\tau}\right)   \geq  \frac{1}{2} \left( \log \vert \X \vert - \log \vert \Z \vert -40+8 \log \left( 1-\gamma-\frac{1}{ \vert \Z \vert } \right) \right)$$
 which gives the desired.
 \mycomment{added till here}}

 Noting entanglement-assisted communication complexity of a function $ f$ is lower bounded by efficiency of a function $f$ (Fact~\ref{qcclowereff})  and that the  (generalized) inner-product function is a $\X$-two-wise independent function (note $\X = \mathbb{F}^n_p$), we immediately get the following as a corollary. 

\begin{corollary}\label{ipsecuritycorr:intro}	Let $\IP^n_p:  \mathbb{F}^n_p \times\mathbb{F}^n_p \to \mathbb{F}_p$ be defined as, 
	$$\IP_p^n(x,y) = \sum_{i=1}^{n}x_iy_i \mod p\enspace.$$  We have,
	$$ \Q_{\gamma}(\IP_p^n) \geq \frac{(n-1) \log p}{4} + \log \left(1-\gamma - \frac{1}{p}\right) -20\enspace.$$
\end{corollary}

\suppress{

By choosing $p=2^m$, $\vert \X \vert = p^n$ and $\vert \Z \vert = p$, we have
\begin{corollary}\label{corr:iplqma}Let $n'= n \log p$. Then $\IP^{n}_p$ is $Y$-strong (also $X$-strong) \mycomment{added $X$-strong} $(l,\eps)$-quantum secure $(n',n',m)$-$2$-source extractor against $\qma$ for the following parameters 
	$$n'-m   \geq  l+ 2 \log \left(\frac{1}{\eps}\right) \enspace.$$
\end{corollary}
\begin{proof}
We first show that $\IP^{n}_p$ is $Y$-strong $(l,\eps)$-quantum secure $(n',n',m)$-$2$-source extractor against $\qma$. The analogous $X$-strong result follows noting $\IP^{n}_p$ is both $\X$-two-wise independent function and $\Y$-two-wise independent function (note $\X =\Y= \mathbb{F}^n_p$). \mycomment{added the above lines}

    Let $\Phi_{X\hat{X}N'M'Y\hat{Y}}$ be an $l \mhyphen \qmas$. Let 
     $ \Vert \Phi_{ZYM'} -U_m \otimes \Phi_{YM'} \Vert_1 = \eps'$, where $Z= \IP^n_p(X,Y)$. From Theorem~\ref{thm1-intro}, we have $n'-m-l \leq 2 \log \left(\frac{1}{\eps'}\right)$. Rearranging terms we get $\eps' \leq 2^{\frac{l+m-n'}{2}}.$ For the choice of parameters $n'-m   \geq  l+ 2 \log \left(\frac{1}{\eps}\right),$ we get  $ \Vert \Phi_{ZYM'} -U_m \otimes \Phi_{YM'} \Vert_1 = \eps' \leq \eps$ which completes the proof.
\end{proof}}

%
%
%

\subsection*{A quantum secure weak-seeded  non-malleable extractor}
Quantum secure non-malleable extractors~\cite{ACLV18} are studied  when the seed is completely uniform. That is for a state $\sigma_{XMY}$, we have  $\sigma_{XMY} =\sigma_{XM} \otimes \sigma_Y$,
 $\sigma_Y=U_{Y}$ and we have $\hmin{X}{M}_\sigma \geq k.$ In this work, we extend the definition to study when the seed is not uniform.

Let $\sigma_{XMY}$ be such that  $\sigma_{XMY} =\sigma_{XM} \otimes \sigma_Y$, $\hmin{X}{M}_\sigma \geq k_1$ and $\Hmin(Y)_\sigma \geq k_2$. One may consider register $Y$ as weak-seed and register $M$ as  adversary quantum side information on source $X$. We consider the pure state extension of $\sigma_{XMY}$ denoted by  $\sigma_{X\hat{X}NMY\hat{Y}} = \sigma_{X\hat{X}NM} \otimes \sigma_{Y\hat{Y}}$ and call it a $(k_1,k_2)\mhyphen\qpasw$. 

\begin{definition}[$(k_1,k_2)\mhyphen\qpasw$]\label{qmadvk1k2weak}
We call a pure state $\sigma_{X\hat{X}NMY\hat{Y}}$, with $(XY)$ classical and $(\hat{X}\hat{Y})$ copy of $(XY)$,  a  $(k_1,k_2)\mhyphen\qpasw$ iff 
\[\sigma_{X\hat{X}NMY\hat{Y}} =\sigma_{X\hat{X}NM} \otimes \sigma_{Y\hat{Y}} \quad ;\quad  \hmin{X}{M}_\sigma \geq k_1 \quad ; \quad \Hmin(Y)_\sigma \geq k_2.\]
\end{definition}
A non-malleable extractor ($\nmext$) is an extractor such that $\nmext(X,Y)$ is uniform and independent of $\nmext(X,Y')YY'M'$, where $Y'\ne Y$ is generated by the adversary using $Y$ and the side information on $X$, i.e. $M$. Without any loss of generality, we consider the adversary operation to be isometry (since one can consider Stinespring extension of a CPTP map as an isometry~(see Fact~\ref{fact:stinespring}) if the adversary operation is a CPTP map). This leads us to consider $(k_1,k_2)\mhyphen\nmasw$.
\begin{definition}[$(k_1,k_2)\mhyphen\nmasw$]\label{def:2source-qnmadversarydefweak}
     Let $\sigma_{X\hat{X}NMY\hat{Y}}$ be a $(k_1,k_2)\mhyphen\qpasw$. Let $V: \cH_Y \otimes \cH_M \rightarrow \cH_Y \otimes \cH_{Y'} \otimes  \cH_{\hat{Y}'} \otimes \cH_{M'}$ be an isometry  such that for $\rho = V\sigma V^\dagger,$ we have $Y'$ classical (with copy $\hat{Y}'$) and $\Pr(Y \ne Y^\prime)_\rho =1.$ 
     We call state $\rho$ a $(k_1,k_2)\mhyphen\nmasw$.
\end{definition}
Since we require the non-malleable extractor to extract from every $(k_1,k_2)\mhyphen\nmasw$, we phrase an adversary $\nma$ (short for quantum non-malleable adversary) to choose the $(k_1,k_2)\mhyphen\nmasw$.
\begin{definition}[Quantum secure weak-seeded  non-malleable extractor]\label{def:2nm}
		An $(n_1,n_2,m)$-non-malleable extractor $\nmext : \{0,1\}^{n_1} \times \{0,1\}^{n_2} \to \{0,1\}^m$ is $(k_1,k_2,\eps)$-secure against $\nma$ if for every $(k_1,k_2)\mhyphen\nmasw$ $\rho$ (chosen by the adversary $\nma$),
	$$  \Vert \rho_{ \nmext(X,Y)\nmext(X,Y^\prime) Y  Y^\prime M^\prime} - U_m \otimes \rho_{ \nmext(X ,Y^\prime) Y  Y^\prime M^\prime} \Vert_1 \leq \eps. $$
\end{definition}
Following is an inner-product based non-malleable extractor proposed by Li~\cite{Li12c}. 
\begin{definition}[\cite{Li12c}]\label{ipnme}
	 Let $p \ne 2$ be a prime and 
	 $n$ be an integer. Define $\nmext : \mathbb{F}^n_p \times  \mathbb{F}^{n/2}_p \to  \mathbb{F}_p$ given by $\nmext(X,Y) \defeq \langle X, Y \vert \vert Y^2 \rangle$, where $\vert \vert$ represents concatenation of strings and $Y^2$ is computed via multiplication in   $\mathbb{F}_{p^{n/2}}$.
\end{definition}
We show that the inner-product based non-malleable extractor proposed by Li \cite{Li12c} continues to be secure against quantum side information.

\begin{theorem}\label{refnmext_new}
		Let $p \ne 2$ be a prime, $n$ be an even integer and $\eps > 0$. The function  $\nmext(X,Y)$ is a $(k_1,k_2,\eps)$-quantum secure weak-seeded non-malleable extractor against $\nma$ for the parameters $k_1 + k_2 \geq (n+17) \log p +33+16 \log \left( \frac{1}{\eps}  \right)$.
	
\end{theorem}

\subsection*{Privacy-amplification with weak-sources} 
We study the problem of privacy-amplification~(PA)~\cite{BBR88,Mau92,BBCM95,MW97}. In this problem, two parties, Alice and Bob, share a weak secret $X$ (with $\hmin{X}{E}\geq k$, where $E$ is adversary Eve side information). Using $X$ and an insecure communication channel, Alice and Bob would like to securely agree on a secret key $R$ that is close to uniformly random to an active adversary Eve who may have full control over their communication channel. In all prior protocols including~\cite{ACLV18}, we assume that Alice and Bob have local access to uniform sources of randomness. In practice, uniform sources are hard to come by, and it is more reasonable to assume that Alice and Bob have only weak-sources of randomness. For this we make use of breakthrough result by Dodis and Wichs \cite{DW09}, who were first to show the existence of a two-round PA protocol with optimal (up to constant factors) entropy loss, for any initial min-entropy. We modify the protocol of  \cite{DW09}, to accommodate the non-uniform local randomness at Alice and Bob side. Based on our construction of a quantum secure weak-seeded non-malleable extractor  (Theorem~\ref{refnmext_new}) we  obtain a PA protocol working with weak local sources of randomness and is secure against active quantum adversaries as long as the initial secret $X$ has min-entropy rate of more than half.

\subsection*{Proof overview} 

\suppress{
For the proof of Theorem~\ref{thm2-intro1}, we first establish Claims~\ref{claim:measure1}~and~\ref{claim:measure2} using the standard quantum information-theoretic techniques. Claim~\ref{claim:measure1} corresponds to correlating independent and uniform $X$ with quantum side information while Claim~\ref{claim:measure2} corresponds to correlating independent and uniform $Y$ with quantum side information conditioned on post-measurement by Alice. In particular, we use the quantum analogue of rejection-sampling argument of~\cite{Jain:2009} to prove them. We then prove  Lemma~\ref{lemma:simeverything}~(using Claim~\ref{claim:measure2}). Using Lemma~\ref{lemma:simeverything}, we generate side information (of other adversary models) in the $l\mhyphen \qma$ model, since it suffices to showing appropriate conditional-min-entropy and modified-conditional-min-entropy bounds of various adversary models in the purification picture. }

For the proof of Theorem~\ref{thm2-intro1}, we first prove Lemma~\ref{lemma:simeverything}. To prove Lemma~\ref{lemma:simeverything}, we first establish Claims~\ref{claim:measure1}~and~\ref{claim:measure2} using the standard quantum information-theoretic techniques.  Claim~\ref{claim:measure1} corresponds to correlating independent and uniform $X$ with quantum side information while Claim~\ref{claim:measure2} corresponds to correlating independent and uniform $Y$ with quantum side information conditioned on post-measurement by Alice. In particular, we used a quantum analogue of the rejection-sampling argument of~\cite{Jain:2009} to prove them. Next, using Lemma~\ref{lemma:simeverything}, we generate side information (of other adversary models) in the $l\mhyphen \qma$ model, since it suffices to showing appropriate conditional-min-entropy and modified-conditional-min-entropy bounds of various adversary models in the purification picture.

For the proof of Theorem~\ref{modgame}, let initial state of $l\mhyphen \qma$ model be $\tau$, state after the Alice measurement outcome $A=1$ be $\rho$ and state after the Bob measurement outcome $B=1$ conditioned on Alice post-measurement outcome $A=1$ be $\Phi$. We first reduce the task of bounding $ \Vert \Phi_{ZYM} -U_Z \otimes \Phi_{YM} \Vert_1 $ to bounding the collision-probability $\Gamma (\rho_{ZYM} |\rho_{YM})$ at the expense of multiplicative factor given by $ \frac{\vert \supp(Z) \vert}{\Pr(B=1)}$ using Cauchy Schwarz inequality as key ingredient. Notice, in state $\rho$, we have $\rho_{YX\hat{X}N'}= \rho_Y \otimes \rho_{ X\hat{X}N'}$. Thus, this enables us to use the argument of Renner~\cite{Renner05} to further bound the collision-probability $\Gamma (\rho_{ZYM} |\rho_{YM})$ to be exponentially small in min-entropy $\hminn{X}{YM}_\rho$ for pairwise-independent function $f$ such that $Z=f(X,Y)$. Additionally when $\tau_M =\rho_M$, the proof follows by noting the relation of $\hminn{X}{YM}_\rho$ with probability of Alice's measurement outcome $A=1$ ($\Pr(A=1)$).

\suppress{
For the proof of Theorem~\ref{corr:iphminhminintro}, we use the arguments of~\cite{BJK21} involving Facts~\ref{lem:hmin_and_tilde_relation},~\ref{uhlmann}~and~\ref{fact:substate_perturbation} to relate $(k_1,k_2) \mhyphen \qpas$ to an $l \mhyphen \qmas$ (with $l \approx 2n'-k_1-k_2$). To conclude Theorem~\ref{corr:iphminhminintro} from Theorem~\ref{modgame}, we need such a relation of a $(k_1,k_2) \mhyphen \qpas$ to a particular type of $l \mhyphen \qmas$ (with $\tau_M=\rho_M$ in Theorem~\ref{modgame}). This is additional novelty over the arguments of~\cite{BJK21}.}

For the proof of Theorem~\ref{refnmext_new}, we first make use the result of~\cite{ACLV18} which reduces the task of showing non-malleable extractor security of Li's extractor to showing hardness of inner-product in a guessing game. We note that it is equivalent to showing 
hardness of inner-product in a $(k_1,k_2)\mhyphen\nmasw$ (see Definition~\ref{def:2source-qnmadversarydefweak}). We next show that we can get the conditional-min-entropy bounds required to make use of Fact~\ref{corr:iphminhminintro} to simulate 
$(k_1,k_2)\mhyphen\nmasw$ as $(k_1,k_2) \mhyphen\qpas$. Thus, hardness of inner-product in a guessing game further reduces to showing security of inner-product against $(k_1,k_2) \mhyphen\qpas$. Using Fact~\ref{corr:iphminhminintro}, the proof now follows.

\subsection*{Comparison with~\cite{ACLV18}}
Both~\cite{ACLV18}~and~Theorem~\ref{refnmext_new} have considered the inner-product based non-malleable extractor proposed by Li~\cite{Li12c}.~\cite{ACLV18} extends the first step of classical proof, the reduction provided by the non-uniform XOR lemma, to the quantum case. This helps in reducing the task of showing non-malleable extractor property of inner-product to showing security of inner-product in a certain communication game. They then approach the problem of showing security of inner-product in a communication game by using the “reconstruction paradigm” of~\cite{DPVR09} to guess the entire input $X$ from the modified side information.

On the other hand, in Theorem~\ref{refnmext_new}, we reduce the security of inner-product in a communication game to the security of inner-product against the quantum measurement adversary. In the process, both~\cite{ACLV18} and~Theorem~\ref{refnmext_new} crucially use the combinatorial properties of inner-product. For example, in the proof of Theorem~\ref{refnmext_new}, we heavily uses the pairwise independence property of inner-product.
	
\subsection*{Other related works} 
	Seeded extractors have been studied extensively in the classical setting~\cite{ILM89,FS04}. K\"{o}nig et al. \cite{KT08} showed that any one-bit output extractor is also secure against quantum adversaries, with roughly the same parameters. Ta-Shma~\cite{TA09}, De and Vidick \cite{DV10}, and later De et al. \cite{DPVR09} gave  seeded extractors  with short seeds that are secure against quantum side information and can extract almost all of min-entropy and are based on Trevisan’s extractor \cite{Trevisan01}. The extractor of Impagliazzo et al.~\cite{ILM89} was shown to be secure against quantum side information by~\cite{KMR05,Renner05,RK05}.

In the multi-source setting, a probabilistic argument shows the existence of $2$-source extractors for min-entropy $k =\log n+ O(1)$. Explicit constructions of multi-source extractors  with access to more than $2$-sources has been considered in the successful line of work~\cite{BIRW06,Li12c,Li11,R09,Li13} leading to a near optimal $3$-source extractor that works for polylogarithmic min-entropy and has negligible error~\cite{Li15}. Explicit constructions of $2$-source extractors has been first considered in \cite{CG85} who showed that inner-product is a $2$-source extractor for min-entropy $k \geq n/2$. After nearly two decades, Bourgain \cite{Bou05} broke the “half entropy barrier”, and constructed a $2$-source extractor for min-entropy $(1/2-\delta)n$, for some tiny constant $\delta > 0.$ A long line of research starting with~\cite{CG85,Bou05,Raz05,CZ19,Li19} leading to a near optimal $2$-source
extractor that works for polylogarithmic min-entropy and has inverse polynomial error~\cite{CZ19,Li19}.

\subsection*{Subsequent works}
Inspired from our work, Boddu, Jain and Kapshikar~\cite{BJK21} have defined $(k_1,k_2)\mhyphen\qpas$ as specified below. 
\begin{definition}[$(k_1,k_2)\mhyphen\qpas$]\label{qmadvk1k2}
A pure state $\sigma_{X\hat{X}NMY\hat{Y}}$, with $(XY)$ classical and $(\hat{X}\hat{Y})$ copy of $(XY)$,  a  $(k_1,k_2)\mhyphen\qpas$ iff 
\[ \hmin{X}{MY\hat{Y}}_\sigma \geq k_1 \quad ; \quad \hmin{Y}{NX\hat{X}}_\sigma \geq k_2.\]
\end{definition}
They showed that every $l\mhyphen\qmas$ is also a  $(k_1,k_2)\mhyphen\qpas$ as stated in the below fact.
\begin{fact}[\cite{BJK21}]\label{fact:lqmaisk1k2qma} Let $\sigma_{X\hat{X}NM Y \hat{Y}}$ be an $l\mhyphen\qmas$ such that $\vert X \vert = \vert \hat{X} \vert= \vert Y \vert= \vert \hat{Y} \vert =n$. There exists $k_1,k_2$ such that $\sigma$ is a $(k_1,k_2)\mhyphen\qpas$, $k_1 \geq n-l$ and $k_2 \geq n-l$.
\end{fact}

They also showed that for every $(k_1,k_2)\mhyphen\qpas$ there is a close-by $l\mhyphen\qmas$ as stated in the below fact.
\begin{fact}[\cite{BJK21}] \label{lemma:nearby_rho_prime_prime4} Let $\rho_{X \hat{X} N Y \hat{Y} M}$ be a $(k_1,k_2)\mhyphen\qpas$ such that $\vert X \vert = \vert \hat{X} \vert= \vert Y \vert= \vert \hat{Y} \vert =n$. There exists an  $l\mhyphen\qmas$ $\rho^{(1)}$  such that,
\[ \Delta_B(\rho^{(1)}; \rho) \leq 6\eps \quad and \quad l \leq 2n- k_1 - k_2+ 4 +  6\log \left( \frac{1}{\eps} \right). \]
Furthermore, 
\[ \hminn{X}{MY\hat{Y}}_{\rho^{(1)}} \geq k_1-2\log \left( \frac{1}{\eps} \right) \quad ; \quad \hminn{Y}{NX\hat{X}}_{\rho^{(1)}} \geq k_2 -4 -  4\log \left( \frac{1}{\eps} \right).\]

\end{fact}

They used $(k_1,k_2)\mhyphen\qpas$ framework to construct the first explicit quantum secure non-malleable extractor for (source) min-entropy $\geq  \mathsf{polylog} \left( \frac{n}{\eps} \right)$ and seed length of $\mathsf{polylog} \left( \frac{n}{\eps} \right)$ ($n$ is the length of the source and $\eps$ is the error parameter)
 which lead to a $2$-round privacy-amplification protocol that is secure against active quantum adversaries with communication $\mathsf{polylog} \left( \frac{n}{\eps} \right)$, exponentially improving upon the linear communication required by the protocol due to~\cite{ACLV18}.

In addition, using our result, the security of inner-product against $l\mhyphen\qma$ as key ingredient, they constructed the first explicit quantum secure $2$-source non-malleable extractor for min-entropy $k_1,k_2 \geq n- n^{\Omega(1)}$, with an output of size $n/4$~and error $2^{- n^{\Omega(1)}}$. 

Also, recently Aggarwal, Boddu and Jain~\cite{ABJ22} have extended the connection of~\cite{CG14b} between $2$-source non-malleable extractors and non-malleable codes in the split-state model in classical setting to quantum setting. They further used quantum secure $2$-source non-malleable extractor~\cite{BJK21} to construct the first explicit quantum secure non-malleable code in the split-state model for message length $m=n^{\Omega(1)}$, error $\eps=2^{-n^{\Omega(1)}}$ and codeword size $2n$.

Additionally, Jain and Kundu~\cite{JK21} have used our efficiency lower bound result (Corollary~\ref{refipbound_new}) to obtain a direct-product result for two-wise independent functions including for the generalized inner-product function ($\IP$).


\subsection*{Organization} 
In Section~\ref{sec:prelims}, we present our notations, definitions and other information-theoretic preliminaries. In Section~\ref{sec3:ip}, we present the proof of Theorem~\ref{modgame}~and~Fact~\ref{corr:iphminhminintro}.  In Section~\ref{sec:prevext}, we present the proof of Theorem~\ref{thm2-intro1}. In Section~\ref{sec:app} we present the proof of Theorem~\ref{refnmext_new}.

\section{Preliminaries}
\label{sec:prelims}
\subsection*{Quantum information theory}

Let $\X, \Y, \Z$ be finite sets (we only consider finite sets in this paper). We use $x \leftarrow \X$ to denote $x$ drawn uniformly from $\X$. All the logarithms are evaluated to the base $2$. Consider a finite dimensional Hilbert space $\cH$ endowed with an inner-product $\langle \cdot, \cdot \rangle$ (we only consider finite dimensional Hilbert-spaces). A quantum state (or a density matrix or a state) is a positive semi-definite matrix on $\cH$ with trace equal to $1$. It is called {\em pure} if and only if its rank is $1$.  Let $\ket{\psi}$ be a unit vector on $\cH$, that is $\langle \psi,\psi \rangle=1$.  With some abuse of notation, we use $\psi$ to represent the state and also the density matrix $\ketbra{\psi}$, associated with $\ket{\psi}$. Given a quantum state $\rho$ on $\cH$, {\em support of $\rho$}, called $\text{supp}(\rho)$ is the subspace of $\cH$ spanned by all eigenvectors of $\rho$ with non-zero eigenvalues.
 
A {\em quantum register} $A$ is associated with some Hilbert space $\cH_A$. Define $\vert A \vert := \log \dim(\cH_A)$. Let $\mathcal{L}(\cH_A)$ represent the set of all linear operators on $\cH_A$. For operators $O, O'\in \cL(\cH_A)$, the notation $O \leq O'$ represents the L\"{o}wner order, that is, $O'-O$ is a positive semi-definite matrix. We denote by $\mathcal{D}(\cH_A)$, the set of quantum states on the Hilbert space $\cH_A$. State $\rho$ with subscript $A$ indicates $\rho_A \in \mathcal{D}(\cH_A)$. If two registers $A,B$ are associated with the same Hilbert space, we shall represent the relation by $A\equiv B$. For two states $\rho_A, \sigma_B$, we let $\rho_A \equiv \sigma_B$ represent that they are identical as states, just in different registers. Composition of two registers $A$ and $B$, denoted $AB$, is associated with the Hilbert space $\cH_A \otimes \cH_B$.  For two quantum states $\rho\in \mathcal{D}(\cH_A)$ and $\sigma\in \mathcal{D}(\cH_B)$, $\rho\otimes\sigma \in \mathcal{D}(\cH_{AB})$ represents the tensor product ({\em Kronecker} product) of $\rho$ and $\sigma$. The identity operator on $\cH_A$ is denoted $\id_A$. Let $U_A$ denote maximally mixed state in $\cH_A$. We also use $U_m$ to denote uniform distribution supported on $m$-bit strings. Let $\rho_{AB} \in \mathcal{D}(\cH_{AB})$. Define
$$ \rho_{B} \defeq \tr_{A}{\rho_{AB}} \defeq \sum_i (\bra{i} \otimes \id_{B})
\rho_{AB} (\ket{i} \otimes \id_{B}) , $$
where $\{\ket{i}\}_i$ is an orthonormal basis for the Hilbert space $\cH_A$.
The state $\rho_B\in \mathcal{D}(\cH_B)$ is referred to as the marginal state of $\rho_{AB}$. Unless otherwise stated, a missing register from subscript in a state will represent partial trace over that register. Given $\rho_A\in\mathcal{D}(\cH_A)$, a {\em purification} of $\rho_A$ is a pure state $\rho_{AB}\in \mathcal{D}(\cH_{AB})$ such that $\tr_{B}{\rho_{AB}}=\rho_A$. Purification of a quantum state is not unique. Suppose $A\equiv B$. Given $\{\ket{i}_A\}$ and $\{\ket{i}_B\}$ as orthonormal bases over $\cH_A$ and $\cH_B$ respectively, the \textit{canonical purification} of a quantum state $\rho_A$ is $\ket{\rho_A} \defeq (\rho_A^{\frac{1}{2}}\otimes\id_B)\left(\sum_i\ket{i}_A\ket{i}_B\right)$. 

A quantum {map} $\cE: \mathcal{L}(\cH_A)\rightarrow \mathcal{L}(\cH_B)$ is a completely positive and trace preserving (CPTP) linear map (mapping states in $\mathcal{D}(\cH_A)$ to states in $\mathcal{D}(\cH_B)$). A {\em unitary} operator $V_A:\cH_A \rightarrow \cH_A$ is such that $V_A^{\dagger}V_A = V_A V_A^{\dagger} = \id_A$. The set of all unitary operators on $\cH_A$ is  denoted by $\mathcal{U}(\cH_A)$. An {\em isometry}  $V:\cH_A \rightarrow \cH_B$ is such that $V^{\dagger}V = \id_A$ and $VV^{\dagger} = \id_B$. A {\em POVM} element is an operator $0 \le M \le \id$. We use shorthand $\bar{M} \defeq \id - M$, where $\id$ is clear from the context. We use shorthand $M$ to represent $M \otimes \id$, where $\id$ is clear from the context.

\begin{definition}[Classical register in a pure state]\label{def:classicalinpurestate}Let $\X$ be a set. A {\em classical-quantum} (c-q) state $\rho_{XE}$ is of the form \[ \rho_{XE} =  \sum_{x \in \X}  p(x)\ket{x}\bra{x} \otimes \rho^x_E , \] where ${\rho^x_E}$ are states.

Let $\rho_{XEA}$ be a pure state. We call $X$ a classical register in $\rho_{XEA}$, if $\rho_{XE}$ (or $\rho_{XA}$) is a c-q state. We identify random variable $X$ with the register $X$, with $\Pr(X=x) =p(x)$.
\suppress{
In a pure state $\rho_{XEA}$ in which $\rho_{XE}$ (or $\rho_{XA}$) is c-q, we call $X$ a classical register and identify random variable $X$ with it with $\Pr(X=x) =p(x)$.}
\end{definition}

\begin{definition}[Copy of a classical  register]\label{def:copyofaclassicalregister}
Let $\rho_{X\hat{X}E}$ be a pure state with $X$ being a classical register in $\rho_{X\hat{X}E}$ (see Definition~\ref{def:classicalinpurestate}) taking values in $\cX$. Similarly, let $\hat{X}$ be a classical register in $\rho_{X\hat{X}E}$ taking values in $\cX$. Let $\Pi_{\mathsf{Eq}} = \sum_{x \in \cX} \ketbra{x} \otimes \ketbra{x}$ be the \emph{equality} projector acting on the registers $X\hat{X}$. We call $X$ and $\hat{X}$ copies of each other (in the computational basis) if $\tr\left(\Pi_{\mathsf{Eq}} \rho_{X\hat{X}}\right) =1$.
\end{definition}

\begin{definition}[Conditioning] \label{def:conditioning}
Let  
\[ \rho_{XE} =  \sum_{x \in \{0,1\}^n}  p(x)\ket{x}\bra{x} \otimes \rho^x_E , \]
be a c-q state. For an event $\mathcal{S} \subseteq \{0,1\}^n$, define  $$\Pr(\mathcal{S})_\rho \defeq  \sum_{x \in \mathcal{S}} p(x) \quad ; \quad (\rho|X\in \mathcal{S})\defeq \frac{1}{\Pr(\mathcal{S})_\rho} \sum_{x \in \mathcal{S}} p(x)\ket{x}\bra{x} \otimes \rho^x_E.$$
We sometimes shorthand $(\rho|X\in \mathcal{S})$ as $(\rho|\mathcal{S})$ when the register $X$ is clear from the context. 

Let $\rho_{AB}$ be a state with $|A|=n$. We define 
$(\rho|A \in \mathcal{S}) \defeq (\sigma|\mathcal{S})$, where $\sigma_{AB}$ is the c-q state obtained by measuring the register $A$ in $\rho_{AB}$ in the computational basis. In case $\mathcal{S}=\{s\}$ is a singleton set, we shorthand $(\rho|A = s) \defeq \tr_A (\rho|A =s)$.
\end{definition}

\begin{definition}[Extension] \label{def:extension} Let $$\rho_{XE}=  \sum\limits_{x \in \{0,1\}^n}  p(x)\ket{x}\bra{x} \otimes \rho^x_E,$$
be a c-q state. For a function $Z:\cX \rightarrow \cZ$, define the following extension of $\rho_{XE}$, 
\[ \rho_{ZXE} \defeq  \sum_{x\in \cX}  p(x) \ket{Z(x)}\bra{Z(x)} \otimes \ket{x}\bra{x} \otimes  \rho^{x}_E.\]
\end{definition} 

\begin{definition}[Safe maps] \label{def:safe}
We call an isometry $V: \cH_X \otimes \cH_A \rightarrow \cH_X \otimes \cH_B$, {\em safe} on $X$ iff there is a collection of isometries $V_x: \cH_A\rightarrow \cH_B$ such that the following holds.  For all states $\ket{\psi}_{XA} = \sum_x \alpha_x \ket{x}_X \ket{\psi^x}_A$,
$$V  \ket{\psi}_{XA} =  \sum_x \alpha_x \ket{x}_X V_x \ket{\psi^x}_A.$$
We call a CPTP map $\Phi: \mathcal{L}( \cH_X \otimes \cH_A) \rightarrow \mathcal{L}(\cH_X \otimes \cH_B)$, {\em safe} on classical register $X$ iff there is a collection of CPTP maps $\Phi_x: \mathcal{L}(\cH_A)\rightarrow \mathcal{L}(\cH_B)$ such that the following holds.  For all c-q states $\rho_{XA} = \sum_x \Pr(X=x)_{\rho} \ketbra{x} \otimes  \rho^x_A$,
$$\Phi({\rho}_{XA}) =  \sum_x \Pr(X=x)_{\rho} \ketbra{x} \otimes \Phi_x( \rho^x_A).$$
\end{definition}
All isometries (or in general CPTP maps) considered in this paper are safe on classical registers that they act on. CPTP maps applied by adversaries can be assumed w.l.o.g as safe on classical registers, by the adversary first making a (safe) copy of classical registers and then proceeding as before. This does not reduce the power of the adversary.

\suppress{

A {\em classical-quantum} (c-q) state $\rho_{XE}$ (with $X$ a classical random variable) is of the form \[ \rho_{XE} =  \sum_{x \in \cX}  P_X(x)\ket{x}\bra{x} \otimes \rho^x_E , \] where ${\rho^x_E}$ are states and $P_X(x) \defeq \Pr(X=x)_\rho$. For an event $E \subseteq \cX$, define  $$\Pr(E)_\rho =  \sum_{x \in E} P_X(x) \quad ; \quad (\rho|E)\defeq \frac{1}{\Pr(E)_\rho} \sum_{x \in E} P_X(x)\ket{x}\bra{x} \otimes \rho^x_E.$$  
For a function $Z:\cX \rightarrow \cZ$, define \[ \rho_{ZXE} \defeq  \sum_{x\in \cX}  P_{X}(x) \ket{Z(x)}\bra{Z(x)} \otimes \ket{x}\bra{x} \otimes  \rho^{x}_E .  \]

All isometries considered in this paper are safe on classical registers that they act on. Isometries applied by adversaries can be assumed w.l.o.g as safe on classical registers, by the adversary first making a (safe) copy of classical registers and then proceeding as before. This does not reduce the power of the adversary. }

\begin{definition}~\label{def:infoquant}    
\begin{enumerate}
\item For $p \geq 1$ and matrix $A$,  let $\| A \|_p$ denote the {\em Schatten} $p$-norm.  
\item Let $\Delta(\rho ; \sigma) \defeq \frac{1}{2} \|\rho - \sigma\|_1$.
We write $\approx_\eps$ to denote $\Delta(\rho ; \sigma) \le \eps$. 
\item Let $d(X)_\rho \defeq \Delta(\rho_X;U_X)$ and  $d(X|Y )_\rho \defeq \Delta(\rho_{XY}; U_X \otimes \rho_Y)$.

\item  {\bf $\X$-two-wise independent function:}\label{2wisefunction} 
We call a function  $f : \X \times \Y \to \Z$,  $\X$-two-wise independent iff for any two distinct $x_1, x_2 \in \X$, 
$$ \Pr_{ y \leftarrow \Y }( f(x_1,y) =f(x_2,y)) = \frac{1}{\vert \Z \vert}.$$ 

\item {\bf Fidelity:}  For states $\rho,\sigma: ~\F(\rho;\sigma)\defeq\|\sqrt{\rho}\sqrt{\sigma}\|_1.$

\item {\bf Bures metric:}  For states $\rho,\sigma: \Delta_B(\rho;\sigma)\defeq \sqrt{1-\F(\rho;\sigma)}.$ Being a metric, it satisfies the triangle inequality. 

\item {\bf Max-divergence~\cite{Datta09, Jain:2009}:}  For states $\rho,\sigma$ such that $\supp(\rho) \subset \supp(\sigma)$, $$ \dmax{\rho}{\sigma} \defeq  \min\{ \lambda \in \mathbb{R} :   \rho  \leq 2^{\lambda} \sigma \}.$$ 
\item {\bf Sandwitched-R{\'e}nyi-divergence:} Let $1 \neq \alpha > 0$. For states $\rho,\sigma$ such that $\text{supp}(\rho) \subset \text{supp}(\sigma)$,
$$\relentalpha{\rho}{\sigma}{\alpha} \defeq \frac{1}{\alpha -1} \log \tr \left(\sigma^{\frac{1-\alpha}{2\alpha}}\rho\sigma^{\frac{1-\alpha}{2\alpha}}\right)^\alpha . $$
\item {\bf Collision-probability~\cite{Renner05}:} For state $\rho_{AB}$,
$$ \Gamma (\rho_{AB} |\rho_B) \defeq  \tr \left( \rho_{AB} (\id_A \otimes \rho_B^{-1/2}) \right)^2 =  2^{\relentalpha{\rho_{AB}}{\id_A \otimes \rho_B}{2}} = \|(\id_A \otimes \rho_B^{-1/4})\rho_{AB} (\id_A \otimes \rho_B^{-1/4})\|^2_2.$$  
\item {\bf Min-entropy:}  For a random variable $X$, 

$$\Hmin(X)=\min_{x \in     \text{supp}(X)       } \log\left(\frac{1}{P_X(x)} \right).$$

 \item {\bf Conditional-min-entropy:}\label{def:condminentropy} For state $\rho_{XE}$, min-entropy of $X$ conditioned on $E$ is defined as,
 $$ \hmin{X}{E}_\rho = - \inf_{\sigma_E \in  \mathcal{D}(\cH_{E}) } \dmax{\rho_{XE}}{\id_X \otimes \sigma_E}    .$$
 
 \item {\bf Max-information~\cite{Datta09}:}\label{def:maxinfo}
  For a state $\rho_{AB}$, $$ \imax(A:B)_{\rho} \defeq   \inf_{\sigma_{B}\in \mathcal{D}(\cH_B)}\dmax{\rho_{AB}}{\rho_{A}\otimes\sigma_{B}} .$$
  If $\rho$ is a classical state (diagonal in the computational basis) then the $\inf$ above is achieved by a classical state $\sigma_B$.

 \suppress{
 \item {\bf Smooth conditional-min-entropy:}  For state $\rho_{XE}$, smooth-min-entropy of $X$ conditioned on $E$ is defined as,
 $$ \hmineps{X}{E}{\eps}_\rho =  \sup_{\sigma_{XE} : \|\sigma_{XE} -\rho_{XE} \|_1 \leq \eps  }  \hmin{X}{E}_\sigma  .$$}
 
\item {\bf Modified-conditional-min-entropy:}\label{def:condminentropymod}  For state $\rho_{XE}$, the modified-min-entropy of $X$ conditioned on $E$ is defined as,
$$ \hminn{X}{E}_\rho = - \dmax{\rho_{XE}}{\id_X \otimes \rho_E}    .$$

%
\item {\bf Markov-chain:}\label{cqcmarkov}  A state $\rho_{XEY}$ forms a Markov-chain (denoted $(X-E-Y)_\rho$) iff $\condmutinf{X}{Y}{E}_\rho=0$. 
\end{enumerate}
\end{definition}

For the facts stated below without citation, we refer the reader to  standard text books~\cite{NielsenC00,WatrousQI,Wil12,Wat16}.
%
%
%
\begin{fact}\label{fact:lp}
    For even $p: ~ \| A \|^p_p  = \tr(A^\dagger A)^{p/2}.$
\end{fact}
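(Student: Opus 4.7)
The plan is to unwind the definition of the Schatten $p$-norm and use the spectral theorem applied to the positive semidefinite operator $A^\dagger A$. Recall that the Schatten $p$-norm is defined by $\|A\|_p = (\tr\,|A|^p)^{1/p}$, where $|A| \defeq (A^\dagger A)^{1/2}$ is the unique positive semidefinite square root of $A^\dagger A$. Raising both sides to the $p$-th power immediately gives $\|A\|_p^p = \tr\bigl((A^\dagger A)^{1/2}\bigr)^p$, so the whole content of the fact is the identity $\bigl((A^\dagger A)^{1/2}\bigr)^p = (A^\dagger A)^{p/2}$ inside the trace.

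I would justify this identity via the functional calculus: write the spectral decomposition $A^\dagger A = \sum_i \lambda_i \ket{v_i}\bra{v_i}$ with $\lambda_i \ge 0$, so that $(A^\dagger A)^{1/2} = \sum_i \lambda_i^{1/2}\ket{v_i}\bra{v_i}$ and $\bigl((A^\dagger A)^{1/2}\bigr)^p = \sum_i \lambda_i^{p/2}\ket{v_i}\bra{v_i}$. For even $p$, the exponent $p/2$ is a non-negative integer, so the right-hand side $(A^\dagger A)^{p/2}$ is just the ordinary integer power and coincides with the above expression on the nose. Taking traces yields $\|A\|_p^p = \tr(A^\dagger A)^{p/2}$, as claimed. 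There is no real obstacle here; the only thing to be slightly careful about is that for general (non-even) $p$, the right-hand side would need to be interpreted through functional calculus, but the hypothesis that $p$ is even removes that subtlety and lets the proof be a one-line invocation of definitions.
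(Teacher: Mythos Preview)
Your argument is correct. The paper itself does not supply a proof of this fact; it is stated as a standard identity and used without justification. Your derivation via the definition $\|A\|_p^p = \tr|A|^p$ with $|A|=(A^\dagger A)^{1/2}$, followed by the spectral decomposition of $A^\dagger A$ to identify $|A|^p$ with the integer power $(A^\dagger A)^{p/2}$ when $p$ is even, is exactly the standard verification one would give.
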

\begin{fact}[\cite{BJL21T}]
\label{rhoablessthanrhoaidentity}
 
For a c-q state  $\rho_{XA}$ ($X$ classical): $\rho_{XA}  \le \id_X \otimes \rho_{A}$ and   $\rho_{XA}  \le \rho_X \otimes \id_{A}$.

\end{fact}
\begin{fact}[\cite{BJL21T}]\label{fact:boundnew}
    Let $\rho_{XBD}$ be a c-q state ($X$ classical) such that $\rho_{XB} = \rho_X \otimes \rho_B$. Then, $$ \imax(X:BD)_\rho \leq  2|D|.$$
\end{fact} 
\begin{fact}
	\label{fact101}  
	For a quantum state $\rho_{XE}:~ \hmin{X}{E}_\rho  \geq \hminn{X}{E}_\rho  .$
\end{fact}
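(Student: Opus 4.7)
The plan is to observe that this is essentially immediate from the two definitions. Unfolding, $\hmin{X}{E}_\rho = -\inf_{\sigma_E \in \mathcal{D}(\cH_E)} \dmax{\rho_{XE}}{\id_X \otimes \sigma_E}$ while $\hminn{X}{E}_\rho = -\dmax{\rho_{XE}}{\id_X \otimes \rho_E}$. Since $\rho_E \in \mathcal{D}(\cH_E)$ is one valid choice of $\sigma_E$ in the infimum defining $\hmin{X}{E}_\rho$, the infimum is bounded above by the specific value attained at $\sigma_E = \rho_E$, namely
\[
\inf_{\sigma_E \in \mathcal{D}(\cH_E)} \dmax{\rho_{XE}}{\id_X \otimes \sigma_E} \;\leq\; \dmax{\rho_{XE}}{\id_X \otimes \rho_E}.
\]
Negating both sides flips the inequality and yields $\hmin{X}{E}_\rho \geq \hminn{X}{E}_\rho$, as desired.

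There is essentially no obstacle; the only point to check is that $\rho_E$ is indeed a valid state in $\mathcal{D}(\cH_E)$ (which it is by construction as a marginal of $\rho_{XE}$) and that $\supp(\rho_{XE}) \subseteq \supp(\id_X \otimes \rho_E)$ so that $\dmax{\rho_{XE}}{\id_X \otimes \rho_E}$ is well-defined. The latter holds because for any vector $\ket{\psi}$ in $\supp(\rho_{XE})$, its $E$-marginal lies in $\supp(\rho_E)$, so $\ket{\psi} \in \supp(\id_X \otimes \rho_E)$. Thus the fact follows in one line from the monotonicity of the infimum.
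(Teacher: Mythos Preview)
Your proof is correct and is exactly the standard one-line argument: $\rho_E$ is a feasible choice in the infimum defining $\hmin{X}{E}_\rho$, so the infimum is at most $\dmax{\rho_{XE}}{\id_X \otimes \rho_E}$, and negating gives the claim. The paper states this as a fact without proof, so there is nothing further to compare against.
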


\begin{fact}[\cite{BJK21}]\label{lem:hmin_and_tilde_relation} Let $\rho \in \mathcal{D}(\cH_{AB})$.
There exists $\rho^\prime \in \mathcal{D}(\cH_{AB})$ such that 
\[ \Delta_B(\rho; \rho^\prime) \leq \eps \quad ; \quad \hminn{A}{B}_{\rho} \leq \hmin{A}{B}_{\rho} \leq \hminn{A}{B}_{\rho^\prime} + 2\log\left( \frac{1}{\eps}\right).\]

\end{fact}

\begin{fact}[\cite{CLW14}]
	\label{fact102}  
	 Let $\Phi:\mathcal{L} (\cH_M ) \rightarrow   \mathcal{L}(\cH_{M'} )$ be a CPTP map and let $\sigma_{XM'} =(\id \otimes \Phi)(\rho_{XM})$. Then $$\hmin{X}{M'}_\sigma  \geq \hmin{X}{M}_\rho.$$
\end{fact}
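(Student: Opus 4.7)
The plan is to prove this directly from the definition of conditional min-entropy via the max-divergence, using the fact that any CPTP map (in particular $\id_X \otimes \psi$) preserves the L\"owner order. Concretely, letting $k \defeq \hmin{X}{M}_\rho$, by definition of $\dmax{\cdot}{\cdot}$ the infimum over $\sigma_M \in \mathcal{D}(\cH_M)$ is attained (or nearly attained) by some state $\tau_M$ satisfying
\[
\rho_{XM} \leq 2^{-k}\, \id_X \otimes \tau_M.
\]

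Next I would apply the map $\id_X \otimes \psi$ to both sides of this operator inequality. Since $\psi$ is completely positive, so is $\id_X \otimes \psi$, and completely positive maps preserve the L\"owner order. Therefore
\[
\sigma_{XM'} = (\id_X \otimes \psi)(\rho_{XM}) \leq 2^{-k}\, (\id_X \otimes \psi)(\id_X \otimes \tau_M) = 2^{-k}\, \id_X \otimes \psi(\tau_M).
\]
Because $\psi$ is trace preserving, $\tau_{M'} \defeq \psi(\tau_M)$ is a valid state in $\mathcal{D}(\cH_{M'})$, so this inequality exhibits a candidate state witnessing $\dmax{\sigma_{XM'}}{\id_X \otimes \tau_{M'}} \leq -k$. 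Taking the infimum over $\sigma_{M'} \in \mathcal{D}(\cH_{M'})$ in the definition of $\hmin{X}{M'}_\sigma$ yields $\hmin{X}{M'}_\sigma \geq k = \hmin{X}{M}_\rho$, as desired.

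There is no real obstacle here: the statement is the standard data processing inequality for conditional min-entropy under a CPTP map on the side-information system, and the only ingredient beyond the definition is complete positivity of $\id_X \otimes \psi$. If one wanted to be fully rigorous about the infimum possibly not being attained, one could either invoke compactness of the set of density operators (so the infimum is in fact a minimum in finite dimensions, which is our setting) or take an $\eps$-approximate optimizer and let $\eps \to 0$; either way the argument above goes through unchanged.
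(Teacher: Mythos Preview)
Your argument is correct and is exactly the standard proof of the data-processing inequality for conditional min-entropy: pick an (approximately) optimal $\tau_M$, push the operator inequality $\rho_{XM}\le 2^{-k}\,\id_X\otimes\tau_M$ through the CP map $\id_X\otimes\psi$, and observe that $\psi(\tau_M)$ is a valid density operator witnessing the same bound on the $M'$ side. Your remark about attainment of the infimum (either via compactness in finite dimensions or via an $\eps$-approximate optimizer) is the right way to close the only potential loose end.

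The paper itself does not give a proof of this statement; it is simply recorded as a Fact with a citation to~\cite{CLW14}. If you wanted to phrase your argument in the paper's language, note that it already records the monotonicity $\dmax{\Phi(\rho)}{\Phi(\sigma)}\le\dmax{\rho}{\sigma}$ under CPTP maps (Fact~\ref{data}); applying that with $\Phi=\id_X\otimes\psi$ to $\rho_{XM}$ and $\id_X\otimes\tau_M$ is literally the same step you carry out by hand. So your route and the one implicitly available from the paper's toolbox coincide.
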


\suppress{
\begin{fact}[\cite{KT08}]
\label{fact2}  
$ \hmin{X}{E}_\rho  \geq \Hmin(X) - \log(\dim(\cH_{E}) )   .$
\end{fact}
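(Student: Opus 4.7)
The plan is to prove the fact directly from the definition of conditional min-entropy by exhibiting an explicit choice of $\sigma_E$ that achieves the claimed bound. Since
$$ \hmin{X}{E}_\rho = -\inf_{\sigma_E \in \mathcal{D}(\cH_E)} \dmax{\rho_{XE}}{\id_X \otimes \sigma_E}, $$
any particular choice of $\sigma_E$ upper bounds the quantity $-\hmin{X}{E}_\rho$, hence lower bounds $\hmin{X}{E}_\rho$. The natural candidate is $\sigma_E = U_E$, the maximally mixed state on $\cH_E$, so that $\id_X \otimes U_E = \frac{1}{\dim(\cH_E)} \id_X \otimes \id_E$.

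With this choice, the task reduces to bounding $\rho_{XE}$ in the L\"{o}wner order by a scalar multiple of $\id_X \otimes \id_E$. Here I would chain two elementary observations. First, by Fact~\ref{rhoablessthanrhoaidentity} we have $\rho_{XE} \leq \rho_X \otimes \id_E$. Second, the standard min-entropy bound for a classical random variable $X$ gives $\rho_X \leq 2^{-\Hmin(X)} \id_X$ (the largest diagonal entry of $\rho_X$ is exactly $2^{-\Hmin(X)}$). Combining these yields
$$ \rho_{XE} \;\leq\; 2^{-\Hmin(X)} \, \id_X \otimes \id_E \;=\; 2^{-\Hmin(X)} \dim(\cH_E) \cdot \bigl( \id_X \otimes U_E \bigr). $$

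Taking logarithms and applying the definition of $\dmax{\cdot}{\cdot}$, this gives $\dmax{\rho_{XE}}{\id_X \otimes U_E} \leq -\Hmin(X) + \log \dim(\cH_E)$, and therefore
$$ \hmin{X}{E}_\rho \;\geq\; -\dmax{\rho_{XE}}{\id_X \otimes U_E} \;\geq\; \Hmin(X) - \log \dim(\cH_E), $$
as desired. There is no real obstacle here; the only thing to be careful about is that $\rho_{XE}$ is being treated as a c-q (or at worst arbitrary bipartite) state, but both inequalities used hold generally, so the argument goes through in one line after the witness is chosen.
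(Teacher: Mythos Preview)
Your proof is correct. The paper does not supply its own proof of this fact; it simply cites \cite{KT08}. Your argument---choosing $\sigma_E = U_E$, then chaining $\rho_{XE} \leq \rho_X \otimes \id_E$ (Fact~\ref{rhoablessthanrhoaidentity}) with $\rho_X \leq 2^{-\Hmin(X)}\id_X$---is the standard way to establish this bound and is entirely sound.
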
}

\begin{fact}[]
	\label{tracedis}
	Let $\rho, \sigma , \tau$  be quantum states. Then 
		 $ \Delta ( \rho  ; \sigma)  \le \Delta (\rho  ; \tau) +\Delta (\sigma  ; \tau).$    
\end{fact}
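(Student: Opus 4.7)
The plan is to reduce the claim to the triangle inequality for the Schatten $1$-norm. Recall from Definition~\ref{def:infoquant} that $\Delta(\rho;\sigma) = \frac{1}{2}\|\rho-\sigma\|_1$, so after clearing the factor $\frac{1}{2}$ the statement we must verify is simply
$$\|\rho-\sigma\|_1 \le \|\rho-\tau\|_1 + \|\tau-\sigma\|_1.$$
First I would add and subtract $\tau$ inside the first norm, writing $\rho-\sigma = (\rho-\tau) + (\tau-\sigma)$. Then the Schatten $p$-norms (with $p=1$ here) are genuine norms on the space of Hermitian operators, and in particular satisfy subadditivity $\|A+B\|_1 \le \|A\|_1 + \|B\|_1$ for any Hermitian $A,B$. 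Applying this with $A = \rho-\tau$ and $B = \tau-\sigma$ yields the displayed inequality, and dividing by $2$ gives the claim.

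The only step that is not entirely mechanical is justifying subadditivity of $\|\cdot\|_1$, but this is a standard property of Schatten norms: it follows, for instance, from the variational characterization $\|A\|_1 = \sup_{\|B\|_\infty \le 1} |\tr(AB)|$, since the right-hand side is a supremum of absolute values of linear functionals and such suprema are automatically subadditive. Alternatively one can use the singular value decomposition and the corresponding Ky Fan-type majorization, or simply invoke that every Schatten $p$-norm is a norm for $p \ge 1$. Any of these routes is standard and I would simply cite the norm property rather than reprove it.

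Thus the main (and essentially only) obstacle is recognizing that the statement is a direct consequence of the triangle inequality for $\|\cdot\|_1$ applied to Hermitian differences of density matrices; no quantum-specific machinery is needed, and the argument extends verbatim to any states $\rho,\sigma,\tau$, whether pure or mixed, classical, quantum, or classical-quantum.
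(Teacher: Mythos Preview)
Your proof is correct and is exactly the standard argument: the trace distance is half the Schatten $1$-norm of the difference, and the Schatten $1$-norm satisfies the triangle inequality. The paper states this as a Fact without proof, so there is nothing further to compare.
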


\begin{fact}[\cite{FvdG06}]
\label{fidelty_trace}
Let $\rho,\sigma$ be two states. Then,
\[  1-\F(\rho;\sigma) \leq \Delta(\rho ; \sigma) \leq \sqrt{ 1-\F^2(\rho;\sigma)} \quad ; \quad \Delta_B^2(\rho;\sigma) \leq \Delta(\rho ; \sigma) \leq  \sqrt{2}\Delta_B(\rho;\sigma).  \]

\end{fact}
\begin{fact}[Data-processing]
\label{data}
Let $\rho, \sigma$  be quantum states and $\Phi$ be a CPTP map. Then 
\begin{itemize}
    \item $ \Delta ( \Phi(\rho)  ; \Phi(\sigma))  \le \Delta (\rho  ; \sigma).$   
     \item $ \Delta_B ( \Phi(\rho)  ; \Phi(\sigma))  \le \Delta_B (\rho  ; \sigma).$    
    \item  $\dmax{ \Phi(\rho) }{ \Phi(\sigma) }  \le \dmax{\rho}{ \sigma} .$  
    \item (\cite{Beigi13}) For all $1 \neq \alpha > 0: ~\relentalpha{ \Phi(\rho) }{ \Phi(\sigma) }{\alpha}  \le \relentalpha{\rho}{ \sigma}{\alpha} .$  
\end{itemize}
\end{fact}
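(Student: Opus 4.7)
The plan is to handle each of the three inequalities separately. A unifying reduction, available for all three, is Stinespring's dilation: every CPTP map $\Phi:\cL(\cH_A)\to\cL(\cH_B)$ factors as $\Phi(\rho)=\tr_E(V\rho V^\dagger)$ for some isometry $V:\cH_A\to\cH_B\otimes\cH_E$. Because each of the three functionals is invariant under isometric conjugation (they depend only on the joint spectrum on the support), monotonicity under arbitrary CPTP maps reduces to monotonicity under an isometry followed by a partial trace, and the isometry piece is immediate. What remains in each case is a short argument tailored to the particular divergence.

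For the first inequality I would invoke the variational characterization $\Delta(\rho;\sigma)=\max_{0\le M\le \id}\tr(M(\rho-\sigma))$. Given an optimal POVM element $M^\star$ on the output space, its Heisenberg-picture pullback $\Phi^\dagger(M^\star)$ is again a POVM element on the input space, because $\Phi^\dagger$ is positive and unital ($\Phi^\dagger(\id)=\id$ being dual to trace-preservation of $\Phi$). Duality then gives $\tr(M^\star(\Phi(\rho)-\Phi(\sigma)))=\tr(\Phi^\dagger(M^\star)(\rho-\sigma))\le\Delta(\rho;\sigma)$, and maximizing over $M^\star$ yields the claim.

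The max-divergence inequality is the most direct. Any feasible $\lambda$ in the definition satisfies $\rho\le 2^\lambda\sigma$; applying $\Phi$ preserves this L\"owner inequality because $\Phi$ is positive, so $\Phi(\rho)\le 2^\lambda\Phi(\sigma)$. The required support condition $\supp(\Phi(\rho))\subseteq\supp(\Phi(\sigma))$ is inherited from the one on $(\rho,\sigma)$, since any image vector picked up by $\Phi(\rho)$ already lies in the image of $\supp(\rho)$ under a Stinespring extension. Infimizing over feasible $\lambda$ gives $\dmax{\Phi(\rho)}{\Phi(\sigma)}\le\dmax{\rho}{\sigma}$.

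For the sandwiched R\'enyi monotonicity I would simply invoke \cite{Beigi13}, which proves exactly this statement in the stated range $0<\alpha\ne 1$. The main obstacle to a from-scratch proof is the absence of a clean variational formula valid uniformly in $\alpha$; Beigi's argument relies on complex interpolation (Stein--Hirschman) applied to a suitable operator-valued analytic function, which is the real technical content. All the remaining steps (Stinespring reduction to partial trace and verification of the support condition) are routine once the interpolation step is accepted as a black box.
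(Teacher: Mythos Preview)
The paper does not give a proof of this fact at all; it is recorded in the preliminaries as a known result, with the third item attributed to \cite{Beigi13}. Your sketches for the trace distance and max-divergence cases are the standard textbook arguments and are correct. One small caveat on the third item: the data-processing inequality for the sandwiched R\'enyi divergence is established in \cite{Beigi13} (and by Frank--Lieb) only for $\alpha\ge 1/2$; for $0<\alpha<1/2$ it actually fails in general, so the range ``$0<\alpha\ne 1$'' as written in the statement (and repeated in your proposal) is slightly too generous. This does not affect anything downstream in the paper, which only uses $\alpha=2$.
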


\begin{fact}[\cite{H04}]\label{fact:markov}
    A Markov-chain $(X-E-Y)_\rho$ can be decomposed as follows: $$\rho_{XEY} = \sum_{t} \Pr(T=t) \ketbra{t} \otimes \left(\rho^t_{XE_1} \otimes \rho^t_{YE_2} \right),$$ where $T$ is some classical register over a  finite alphabet. 
\end{fact}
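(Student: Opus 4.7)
The plan is to derive the decomposition from the Petz characterization of equality in strong subadditivity (SSA). The starting observation is that $\condmutinf{X}{Y}{E}_\rho = 0$ is equivalent to $S(XE)_\rho + S(EY)_\rho = S(E)_\rho + S(XEY)_\rho$, i.e.\ to equality in SSA. By Petz's theorem on the equality case of SSA, this equality is in turn equivalent to the operator identity
\[
\log \rho_{XEY} + \log(\id_X \otimes \rho_E \otimes \id_Y) \;=\; \log(\rho_{XE} \otimes \id_Y) + \log(\id_X \otimes \rho_{EY}),
\]
equivalently to the statement that the Petz recovery map $\mathcal{R}_{E \to XE}(\sigma) \defeq \rho_{XE}^{1/2}\, \rho_E^{-1/2} \sigma \rho_E^{-1/2}\, \rho_{XE}^{1/2}$ (acting as the identity on the $Y$ factor) reconstructs $\rho_{XEY}$ from $\rho_{EY}$ exactly.

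Given this recovery identity, the next step is to extract a von Neumann algebra structure on $\cH_E$. I would consider the smallest unital $*$-subalgebra $\mathcal{A} \subset \cL(\cH_E)$ generated by the operators $\rho_E^{-1/2}\, \tr_X\!\bigl[(|x\rangle\langle x| \otimes \id_E)\rho_{XE}\bigr]\, \rho_E^{-1/2}$ over all $x$; this is exactly the algebra encoding the $X$-information that is recoverable from $E$. By the Artin--Wedderburn theorem in finite dimensions, there exists an orthogonal decomposition
\[
\cH_E \;=\; \bigoplus_{t} \cH_{E_1^t} \otimes \cH_{E_2^t}
\]
such that $\mathcal{A} = \bigoplus_t \cL(\cH_{E_1^t}) \otimes \id_{E_2^t}$. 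I would then show, using the Markov operator identity above, that $\rho_{XE}$ is block-diagonal in $t$ with each block supported on $\cH_X \otimes \cH_{E_1^t}$, while the $E_2^t$-part is decoupled from $X$; a symmetric argument applied on the $Y$ side (using that the analogous recovery $\mathcal{R}_{E \to EY}$ also succeeds perfectly) shows that $\rho_{EY}$ is block-diagonal in $t$ with the $Y$-correlations supported on the $E_2^t$ factors.

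Combining these two compatible block structures yields the claim: within each block labeled by $t$, the joint state factorizes as $\rho_{XE_1^t} \otimes \rho_{YE_2^t}$, and the weight $\Pr(T=t)$ is the trace of $\rho_{XEY}$ restricted to the $t$-th block. The main obstacle is the middle step: one must argue that the Wedderburn block decomposition induced by the $X$-side Petz recovery coincides with the one induced by the $Y$-side, i.e.\ that the commutation of $\log \rho_{XE} - \log \rho_E$ (which lives on the $E_1^t$ factors) with $\log \rho_{EY} - \log \rho_E$ (which lives on the $E_2^t$ factors) forces a common block structure. This alignment, which turns the two one-sided Koashi--Imoto-type decompositions into a single direct-sum-of-tensor-products, is the technical heart of the Hayden--Jozsa--Petz--Winter argument; the remaining steps are essentially routine verifications once the joint block structure is in place.
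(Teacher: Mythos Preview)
The paper does not prove this statement at all: it is recorded as a \emph{Fact} and simply attributed to~\cite{H04} (Hayden--Jozsa--Petz--Winter). Your proposal is a faithful outline of that original argument---equality in SSA $\Leftrightarrow$ perfect Petz recovery $\Leftrightarrow$ the operator log-identity, followed by extracting a finite-dimensional $*$-algebra on $\cH_E$ whose Wedderburn decomposition yields the block structure---and you correctly flag the alignment of the $X$-side and $Y$-side block structures as the nontrivial step. One minor point: in the HJPW paper the relevant algebra is obtained not directly from the conditional operators $\rho_E^{-1/2}\rho_E^x\rho_E^{-1/2}$ but by exponentiating the log-identity to produce a one-parameter unitary group $t \mapsto \rho_{XE}^{it}\rho_E^{-it}$ on $\cH_E$ and then passing to its commutant; your formulation is morally equivalent but would require a short additional argument to reach the same block decomposition.
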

\begin{fact}[\cite{anshu2021oneshot}]\label{fact:markov2}
    For a Markov-chain $(X-E-Y)_\rho$, there exists a CPTP map $\Phi:\mathcal{L}( \cH_E ) \rightarrow \mathcal{L}( \cH_E \otimes \cH_Y)$ such that $\rho_{XEY} =({\id_X} \otimes \Phi) \rho_{XE}$.
\end{fact}  
\begin{fact}[Hölder’s inequality]
	\label{holders2}
	For matrices $A, B$, for any real $p,q >0$ and $\frac{1}{p} + \frac{1}{q} =1$ we have $ \vert \tr(A^\dagger B) \vert \le   \| A \|_p   \| B\|_q.$
\end{fact}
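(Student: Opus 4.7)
The plan is to prove the matrix Hölder inequality by reducing it to the scalar Hölder inequality applied to singular values, via von Neumann's trace inequality. The Schatten $p$-norm satisfies $\|A\|_p = \left(\sum_i \sigma_i(A)^p\right)^{1/p}$ where $\sigma_1(A) \geq \sigma_2(A) \geq \cdots \geq 0$ are the singular values of $A$, so the target inequality becomes a statement about singular values once we pass through the trace.

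First, I would establish von Neumann's trace inequality, namely $|\tr(A^\dagger B)| \leq \sum_i \sigma_i(A)\,\sigma_i(B)$. Write the singular value decompositions $A = U_A \Sigma_A V_A^\dagger$ and $B = U_B \Sigma_B V_B^\dagger$ with $\Sigma_A, \Sigma_B$ diagonal and non-negative. Then $\tr(A^\dagger B) = \tr(\Sigma_A W_1 \Sigma_B W_2)$ with $W_1 = V_A^\dagger U_B$ and $W_2 = V_B^\dagger U_A$ both unitary. Expanding in the standard basis gives
\[
\tr(A^\dagger B) \;=\; \sum_{i,j} \sigma_i(A)\,\sigma_j(B)\,(W_1)_{ij}(W_2)_{ji},
\]
so $|\tr(A^\dagger B)| \leq \sum_{i,j} S_{ij}\,\sigma_i(A)\,\sigma_j(B)$ with $S_{ij} = |(W_1)_{ij}|\,|(W_2)_{ji}|$. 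A short Cauchy--Schwarz argument on rows/columns (using that $W_1, W_2$ are unitary) shows that $S$ is doubly substochastic, so by Birkhoff's theorem it lies in the convex hull of substochastic permutation matrices; combined with the rearrangement inequality and the fact that the $\sigma_i$'s are sorted in decreasing order, this yields $\sum_{i,j} S_{ij}\,\sigma_i(A)\,\sigma_j(B) \leq \sum_i \sigma_i(A)\,\sigma_i(B)$.

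Second, I would apply the scalar Hölder inequality to the sequences $\{\sigma_i(A)\}$ and $\{\sigma_i(B)\}$, which gives
\[
\sum_i \sigma_i(A)\,\sigma_i(B) \;\leq\; \left(\sum_i \sigma_i(A)^p\right)^{1/p}\!\left(\sum_i \sigma_i(B)^q\right)^{1/q} \;=\; \|A\|_p\,\|B\|_q,
\]
completing the proof. The two inequalities chain together to produce exactly the claimed bound.

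The main obstacle is the first step: von Neumann's trace inequality is not entirely trivial, and the cleanest justification requires either Birkhoff's theorem on doubly stochastic matrices together with the rearrangement inequality, or an equivalent majorization argument. Once one accepts this inequality as a clean spectral statement, the rest is just scalar Hölder. An alternative route I might take if the Birkhoff argument were to be avoided is complex interpolation à la Riesz--Thorin, interpolating between the trivial endpoints $(p,q) = (1,\infty)$ (where $|\tr(A^\dagger B)| \leq \|A\|_1 \|B\|_\infty$ follows from $|\tr(XY)| \leq \|X\|_1 \|Y\|_\infty$) and $(p,q) = (\infty,1)$; this avoids singular-value rearrangement entirely but replaces it with analytic machinery, which would be heavier to present.
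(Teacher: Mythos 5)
Your proposed proof is correct and is the standard argument (von Neumann's trace inequality followed by scalar H\"older on the sorted singular values); the only minor point worth tightening is that for the $S_{ij}$ matrix you want the extension of Birkhoff's theorem to doubly \emph{sub}stochastic matrices, whose extreme points are the sub-permutation matrices, since $S$ need not have row and column sums exactly $1$. However, the paper gives no proof of this statement at all: it is recorded as a ``Fact'' and treated as a known result from the literature on Schatten norms, so there is no paper proof to compare against. Your write-up is a perfectly reasonable way to fill that gap, and your remark that the Riesz--Thorin interpolation route from the endpoints $(1,\infty)$ and $(\infty,1)$ would also work is accurate, though as you note it trades a combinatorial lemma for analytic machinery.
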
 

\begin{fact}[Hölder’s inequality, special case]
\label{holders}
For matrices $A, B: ~ \| BAB^\dagger \|_1 \le   \| A \|_2   \| B\|^2_4  .$
\end{fact}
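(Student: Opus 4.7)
The statement to prove is the Schatten-norm inequality $\|BAB^\dagger\|_1 \le \|A\|_2 \|B\|_4^2$. My plan is to deduce it as a direct consequence of the generalized (three-factor) Hölder inequality for Schatten norms, namely
\[
\|XYZ\|_1 \le \|X\|_p \, \|Y\|_q \, \|Z\|_r \qquad \text{whenever } \frac{1}{p}+\frac{1}{q}+\frac{1}{r}=1,
\]
which itself follows by two applications of the standard two-factor Hölder inequality (Fact~\ref{holders2}).

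The plan is to apply this with $X=B$, $Y=A$, $Z=B^\dagger$, and with the exponents chosen as $(p,q,r)=(4,2,4)$, which satisfy $\tfrac{1}{4}+\tfrac{1}{2}+\tfrac{1}{4}=1$. This immediately yields
\[
\|BAB^\dagger\|_1 \le \|B\|_4 \, \|A\|_2 \, \|B^\dagger\|_4.
\]
To finish, I would invoke the fact that any Schatten $p$-norm is invariant under taking adjoints, $\|B^\dagger\|_4 = \|B\|_4$, which is immediate since the singular values of $B$ and $B^\dagger$ coincide (both equal the square roots of the eigenvalues of $B^\dagger B$ and $B B^\dagger$, which are the same). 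Substituting gives exactly $\|BAB^\dagger\|_1 \le \|A\|_2 \|B\|_4^2$.

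If one prefers to stay within the two-factor Hölder inequality already stated in Fact~\ref{holders2}, an equivalent route is: first write $\|BAB^\dagger\|_1 = \|(BA)B^\dagger\|_1 \le \|BA\|_{4/3}\,\|B^\dagger\|_4$ using exponents $(4/3, 4)$, then bound $\|BA\|_{4/3} \le \|B\|_4 \|A\|_2$ using exponents $(4, 2)$ since $\tfrac{1}{4}+\tfrac{1}{2}=\tfrac{3}{4}$, and again use $\|B^\dagger\|_4=\|B\|_4$. The only nontrivial ingredient in either route is the general Hölder inequality for Schatten norms; the adjoint-invariance of $\|\cdot\|_p$ is elementary, so there is no real obstacle.
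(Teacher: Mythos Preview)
Your proof is correct: the stated inequality is exactly the three-factor H\"older inequality for Schatten norms with exponents $(4,2,4)$, combined with $\|B^\dagger\|_4=\|B\|_4$. The paper itself does not supply a proof of Fact~\ref{holders}; it is simply listed as a known inequality in the preliminaries, so there is nothing further to compare against.

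One minor remark on your alternative route: Fact~\ref{holders2} as stated in the paper is only the \emph{trace} form $|\tr(A^\dagger B)|\le\|A\|_p\|B\|_q$ for $1/p+1/q=1$, not the product form $\|XY\|_r\le\|X\|_p\|Y\|_q$ for $1/p+1/q=1/r$. Thus the step $\|BA\|_{4/3}\le\|B\|_4\|A\|_2$ does not literally follow from Fact~\ref{holders2} alone. The product form is of course a standard consequence (via duality and unitary invariance of Schatten norms), so there is no real gap, but it is worth being aware that this step goes slightly beyond what the paper has explicitly recorded.
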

\begin{fact}[Cyclicity of trace]
	\label{cyctrace}
	For matrices $A, B, C: ~ \tr(ABC) = \tr(BCA) = \tr(CAB) .$
\end{fact}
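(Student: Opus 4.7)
The plan is to reduce the three-factor cyclicity to the two-factor identity $\tr(XY) = \tr(YX)$, which is the fundamental statement, and then apply the two-factor identity twice (once with the grouping $X = AB$, $Y = C$, and once with $X = A$, $Y = BC$) to recover both cyclic shifts.

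First, I would establish the two-factor swap. Writing the trace as a sum of diagonal entries, for matrices $X \in \mathbb{C}^{m \times n}$ and $Y \in \mathbb{C}^{n \times m}$, I would compute
\[
\tr(XY) \;=\; \sum_{i=1}^{m} (XY)_{ii} \;=\; \sum_{i=1}^{m}\sum_{j=1}^{n} X_{ij} Y_{ji} \;=\; \sum_{j=1}^{n}\sum_{i=1}^{m} Y_{ji} X_{ij} \;=\; \sum_{j=1}^{n} (YX)_{jj} \;=\; \tr(YX).
\]
The key observation is that the double sum is absolutely convergent (the matrices are finite-dimensional, as stipulated in the preliminaries), so Fubini justifies interchanging the order of summation, and the scalar entries of course commute. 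This gives $\tr(XY) = \tr(YX)$ whenever both products are defined and square.

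Next, for the three-factor version, I would apply this identity with the grouping $X := AB$, $Y := C$, obtaining $\tr((AB)C) = \tr(C(AB))$, i.e. $\tr(ABC) = \tr(CAB)$. Applying it again with $X := A$, $Y := BC$ yields $\tr(A(BC)) = \tr((BC)A)$, i.e. $\tr(ABC) = \tr(BCA)$. Chaining these two equalities delivers $\tr(ABC) = \tr(BCA) = \tr(CAB)$, which is exactly the claim. Implicitly one uses associativity of matrix multiplication to identify $(AB)C$ with $A(BC)$ inside the trace.

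The only real subtlety is the shape-compatibility bookkeeping: the two-factor identity requires $X$ and $Y$ to be of shapes that let both $XY$ and $YX$ be formed and square. In the three-factor setting this forces $A$, $B$, $C$ to have shapes $m \times n$, $n \times p$, $p \times m$, which is automatically the case here since the paper only invokes the fact for operators on a common finite-dimensional Hilbert space. There is no real obstacle; the argument is a two-line index computation followed by a double application. I would present it with the two-factor lemma stated explicitly and then the three-factor identity as a one-line corollary.
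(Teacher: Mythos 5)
Your proof is correct, and it is the standard textbook argument for this fact. The paper itself states cyclicity of trace as a \emph{Fact} without proof, so there is nothing to compare against; your two-step reduction (first establish $\tr(XY)=\tr(YX)$ by writing out the double sum over matrix entries, then apply it twice to the groupings $(AB)C$ and $A(BC)$) is exactly how one would prove it. One tiny stylistic remark: invoking Fubini is unnecessary here, since the double sum is finite and can be reordered freely without any convergence considerations; citing Fubini slightly overstates the subtlety of the interchange. Otherwise the argument is complete, the shape-compatibility discussion is correct, and the appeal to associativity to identify $(AB)C$ with $A(BC)$ is the only implicit step worth making explicit, which you do.
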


\begin{fact}
\label{measuredmax}
Let $\rho_{AB} \in \mathcal{D}(\cH_A \otimes \cH_B)$ be a state and $M \in \cL(\cH_B)$ such that $M^\dagger M \leq \id_B$. Let $\hat{\rho}_{AB}= \frac{M \rho_{AB} M^\dagger}{\tr{M \rho_{AB} M^\dagger}}$. Then, 
$$\dmax{\hat{\rho}_A}{\rho_A} \leq \log \left(\frac{1}{\tr{M \rho_{AB} M^\dagger}}\right).$$
\end{fact}

\suppress{
\mycomment{added below fact}
\begin{fact}[Substate Perturbation Lemma~\cite{JK21}] \label{fact:substate_perturbation}
Let $\sigma_{XB}$,  $\psi_X$ and $\rho_B$ be states such that~\footnote{The statement in~\cite{JK21} is more general and is stated for {\em purified distance}, however it holds for any fidelity based distance including the Bures metric.} ,
\[  \quad   \sigma_{XB} \leq 2^c \left( \psi_X \otimes \sigma_{B}\right) \quad ; \quad \Delta_B\left(\sigma_B; \rho_B\right) \leq \delta_1  .\] 
For any $\delta_0 > 0$, there exists state $\rho^\prime_{XB}$ satisfying
\[\Delta_B\left(\rho^\prime_{XB}; \sigma_{XB} \right)\leq \delta_0 +\delta_1 \quad and \quad \rho^\prime_{XB} \leq 2^{c+1} \left(1+ \frac{4}{\delta_0^2}  \right) \left(\psi_X \otimes \rho_B\right). \]
\end{fact}
}
\begin{fact}[Stinespring isometry extension~\cite{WatrousQI}]\label{fact:stinespring}
		 Let $\Phi :    \mathcal{L} (\cH_X ) \rightarrow   \mathcal{L}(\cH_Y )$ be a CPTP map. There exists an isometry $V :  \cH_{X} \rightarrow   \cH_{Y} \otimes \cH_{Z}$ (Stinespring isometry extension of $\Phi$) such that $\Phi(\rho_X)= \tr_{Z}(V \rho_X V^\dagger)$ for every state $\rho_X$.
\end{fact} 

\begin{fact}[Corollary 5.5 in \cite{WatrousQI}]
	\label{measurediso}
	Let $\rho_{AB} \in \mathcal{D}(\cH_A \otimes \cH_B)$ be a pure state and $V_B : \mathcal{L} (\cH_{B}) \rightarrow   \mathcal{L}(\cH_{B'} \otimes \cH_{C})$ be an isometry such that $\vert C \vert =1$. Let   $\sigma_{AB'C} = (\id_A \otimes V_B) \rho_{AB} (\id_A \otimes V_B)^\dagger $ and  $\Phi_{AB'} = (\sigma_{AB'C} \vert C=1)$. There exists an operator $M_B$  such that $0 \le  M^\dagger_{B}M_{B} \le \id_{B}$ and $$\Phi_{AB'} =\frac{(\id_A \otimes M_{B} ) \rho_{AB} (\id_A \otimes M_{B} )^\dagger }{\tr(\id_A \otimes M_{B} ) \rho_{AB} (\id_A \otimes M_{B} )^\dagger} \quad ; \quad \Pr[C=1] = \tr M_B \rho_B M_B^\dagger.$$
\end{fact}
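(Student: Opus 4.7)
The plan is to exploit the fact that $\vert C \vert = 2$ to block-decompose the isometry $V_B$ along the computational basis of $\cH_C$, so that conditioning on the outcome $C=1$ simply isolates one of the resulting blocks.

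First I would fix the computational basis $\{\ket{0}_C,\ket{1}_C\}$ of $\cH_C$ and define, for each $c\in\{0,1\}$, the operator $M_c \defeq (\id_{B'} \otimes \bra{c}_C)\, V_B : \cH_B \to \cH_{B'}$, so that $V_B = \ket{0}_C \otimes M_0 + \ket{1}_C \otimes M_1$. The isometry relation $V_B^\dagger V_B = \id_B$ then reads $M_0^\dagger M_0 + M_1^\dagger M_1 = \id_B$, which immediately yields $0 \le M_1^\dagger M_1 \le \id_B$ (the other term being positive semidefinite). I would then set $M_B \defeq M_1$.

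Second, expanding the output of the isometry in this decomposition,
$$\sigma_{AB'C} \;=\; \sum_{c,c' \in \{0,1\}} \br{(\id_A \otimes M_c)\, \rho_{AB}\, (\id_A \otimes M_{c'})^\dagger} \otimes \ket{c}\bra{c'}_C.$$
Measuring $C$ in the computational basis, postselecting on the outcome $1$, and tracing out $C$ kills every term except $c=c'=1$, so the resulting unnormalized state on $AB'$ is $(\id_A \otimes M_B)\rho_{AB}(\id_A \otimes M_B)^\dagger$. Normalizing then reproduces the displayed expression for $\Phi_{AB'}$.

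Third, for the probability identity I would compute the normalization explicitly: using cyclicity of the trace (Fact~\ref{cyctrace}) followed by tracing out $A$,
$$\tr\br{(\id_A \otimes M_B)\,\rho_{AB}\,(\id_A \otimes M_B)^\dagger} \;=\; \tr\br{M_B^\dagger M_B\, \rho_B} \;=\; \tr\br{M_B\, \rho_B\, M_B^\dagger},$$
which by the Born rule is exactly $\Pr[C=1]_\sigma$. The argument is essentially bookkeeping; the only bit that requires care is the handling of the cross terms in the decomposition of $\sigma_{AB'C}$ and the verification that $M_1^\dagger M_1 \le \id_B$. Notably, the purity of $\rho_{AB}$ plays no role, reflecting that the fact is really a structural statement about isometries with a two-dimensional flag register, specialized here to the pure-state setting in which it will be applied.
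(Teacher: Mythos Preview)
Your argument is correct and is the standard derivation of this statement. Note, however, that the paper does not supply its own proof here: the fact is simply cited as Corollary~5.5 of~\cite{WatrousQI}, so there is nothing to compare against beyond observing that your block-decomposition of the isometry along the flag register $C$ is precisely the mechanism underlying that corollary. Your remark that purity of $\rho_{AB}$ is irrelevant is also accurate.
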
 

\begin{fact}[Lemma 5.4.3~\cite{Renner05}]
	\label{renato2wise}
	Let $\rho_{XM}$ be a c-q state ($X$ classical). Let $Z =f( X) $, where $f \leftarrow \mathcal{F}$ is drawn from a two-wise independent hash function family $ \mathcal{F}$. Let $\sigma_M \in \mathcal{D}(\cH_M)$ be any state. Then,
	\begin{align*}
		\mathbb{E}_{f \leftarrow  \mathcal{F}} \left[   \tr \left( (\id_Z  \otimes \sigma_M^{-1/2}) (\rho_{f(X)M} - U_Z \otimes \rho_M)  \right)^2 \right]  \le  \tr \rho_{XM}(\id_X \otimes \sigma_M^{-1/2})    \rho_{XM}   (\id_X \otimes \sigma_M^{-1/2}).
	\end{align*}
\end{fact}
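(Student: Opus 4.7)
The plan is to compute the expectation explicitly by expanding the square, then use two‑wise independence to bound the cross terms. Write the c‑q state as $\rho_{XM} = \sum_{x} \ket{x}\bra{x}\otimes\sigma_x$ with sub‑normalized operators $\sigma_x \defeq P_X(x)\rho_M^x$, so that $\rho_M = \sum_x \sigma_x$ and $\rho_{f(X)M} = \sum_x \ket{f(x)}\bra{f(x)}\otimes\sigma_x$. Let $T \defeq \rho_M^{-1/2}$ and set
\[
A \defeq (\id_Z\otimes T)\,\rho_{f(X)M}, \qquad B \defeq (\id_Z\otimes T)\,(U_Z\otimes\rho_M).
\]
The quantity on the left is $\tr\bigl((A-B)^2\bigr) = \tr(A^2) - 2\tr(AB) + \tr(B^2)$, so the whole proof reduces to evaluating these three traces and taking the expectation over $f$.

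Next I would carry out the three bookkeeping computations. Using $\bra{f(x)}f(x')\rangle = [f(x)=f(x')]$ and $\bra{z}z'\rangle=[z=z']$:
\begin{align*}
\tr(A^2) &= \sum_{x,x'}[f(x)=f(x')]\,\tr(T\sigma_x T\sigma_{x'}),\\
\tr(AB) &= \tfrac{1}{|\Z|}\sum_{x}\tr(T\sigma_x T\rho_M) = \tfrac{1}{|\Z|}\,\tr(T\rho_M T\rho_M),\\
\tr(B^2) &= \tfrac{1}{|\Z|}\,\tr(T\rho_M T\rho_M).
\end{align*}
Now I take the expectation over $f\leftarrow\cF$. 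Two‑wise independence gives $\E_f[[f(x)=f(x')]] = 1/|\Z|$ whenever $x\neq x'$, while the diagonal $x=x'$ contributes $1$. Splitting the sum in $\E_f[\tr(A^2)]$ into its diagonal and off‑diagonal pieces and using $\tr(T\rho_M T\rho_M) = \sum_{x,x'}\tr(T\sigma_x T\sigma_{x'})$ to simplify the subtracted $\tr(AB)+\tr(B^2)$ contribution, the off‑diagonal terms cancel exactly and I obtain
\[
\E_f\!\left[\tr\bigl((A-B)^2\bigr)\right] \;=\; \Bigl(1-\tfrac{1}{|\Z|}\Bigr)\sum_{x}\tr(T\sigma_x T\sigma_x) \;\le\; \sum_{x}\tr(T\sigma_x T\sigma_x).
\]

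Finally I match this to the right‑hand side of the claim. A direct expansion shows
\[
\tr\bigl(\rho_{XM}(\id_X\otimes T)\,\rho_{XM}(\id_X\otimes T)\bigr) \;=\; \sum_{x}\tr(\sigma_x T \sigma_x T) \;=\; \sum_{x}\tr(T\sigma_x T\sigma_x),
\]
using the orthogonality of $\ket{x}\bra{x}\ket{x'}\bra{x'}$, which collapses the double sum to the diagonal. Combined with the previous inequality this is exactly the stated bound. The only subtle point is ensuring that the $-2\tr(AB)+\tr(B^2) = -\tfrac{1}{|\Z|}\tr(T\rho_M T\rho_M)$ combination precisely kills both the $x\neq x'$ off‑diagonal terms and the $\tfrac{1}{|\Z|}$ fraction of the diagonal coming from $\E_f[\tr(A^2)]$; this is where careful accounting of the cross terms is the main (only) potential pitfall, and where two‑wise independence of $\cF$ is used in an essential way.
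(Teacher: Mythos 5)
Your proof is correct. Note that the paper does not prove this statement at all; it is stated as a Fact and cited directly from Renner's thesis (Lemma 5.4.3). Your argument — decompose $\rho_{XM}=\sum_x\ket{x}\bra{x}\otimes\sigma_x$, expand $\tr\bigl((A-B)^2\bigr)=\tr(A^2)-2\tr(AB)+\tr(B^2)$, take the expectation over $f$ using two‑wise independence, and observe that the off‑diagonal terms cancel leaving $\bigl(1-\tfrac{1}{|\Z|}\bigr)\sum_x\tr(T\sigma_xT\sigma_x)\le\tr\bigl(\rho_{XM}(\id_X\otimes T)\rho_{XM}(\id_X\otimes T)\bigr)$ — is exactly the standard leftover‑hash‑lemma computation, and matches the cited source. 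The bookkeeping is right, including the implicit facts that $\tr(T\sigma_xT\sigma_x)=\tr\bigl[(\sigma_x^{1/2}T\sigma_x^{1/2})^2\bigr]\ge 0$ (needed to drop the factor $1-\tfrac{1}{|\Z|}$) and that $\supp(\sigma_x)\subseteq\supp(\rho_M)$ so the pseudo‑inverse $T=\rho_M^{-1/2}$ behaves as expected.
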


\begin{fact}[\cite{Renner05}]
\label{renato}
Let $p$ be a prime number and $n$ be a positive integer. Let $\rho_{XM}$ be a c-q state ($X$ classical) with $\rho_X\in  \mathbb{F}^n_p$. \suppress{Let $\sigma_M \in \mathcal{D}(\cH_M)$ be a state that achieves the infimum in Definition~\ref{def:infoquant}~[\ref{def:condminentropy}] for state $\rho_{XM}$.}Let $Z =f(X) $, where $f \leftarrow \mathcal{F}$ is drawn from a two-wise independent hash function family $ \mathcal{F}$. Then,

\begin{align*}
\mathbb{E}_{f \leftarrow \mathcal{F}} \left[   \tr \left( (\id_Z  \otimes \rho_M^{-1/2}) (\rho_{f(X)M} - U_Z \otimes \rho_M)  \right)^2 \right]  \le 2^{- \hminn{X}{M}_\rho}.
\end{align*}

\suppress{

Let $\rho_{XMY} = \rho_{XM} \otimes U_Y$ be a c-q state ($XY$ classical) with $\rho_X\in  \mathbb{F}^n_p$  and $U_Y\in  \mathbb{F}^n_p$ being uniform. Let $Z = \langle X, Y \rangle$. Then,
\begin{align*}
\mathbb{E}_{y \leftarrow U_Y} \left[   \tr \left( (\id_Z  \otimes \rho_M^{-1/2}) (\rho^y_{ZM} - U_Z \otimes \rho_M)  \right)^2 \right]  \le 2^{- \hmin{X}{M}_\rho}.
\end{align*}}
\end{fact}

\begin{proof}
Consider,
\begin{align*}
&\mathbb{E}_{f \leftarrow \mathcal{F}} \left[   \tr \left( (\id_Z  \otimes \rho_M^{-1/2}) (\rho_{f(X)M} - U_Z \otimes \rho_M)  \right)^2 \right]  \\ 
&\le   \tr \rho_{XM}(\id_X \otimes \rho_M^{-1/2})    \rho_{XM}   (\id_X \otimes \rho_M^{-1/2}) & \mbox{(Fact~\ref{renato2wise})}\\
&\le  \|(\id_X \otimes \rho_M^{-1/2})    \rho_{XM}   (\id_X \otimes \rho_M^{-1/2}) \|_\infty & \mbox{(Fact~\ref{holders2})}\\
&= 2^{- \hminn{X}{M}_\rho}. & \mbox{(Definition~\ref{def:infoquant}~[\ref{def:condminentropymod}])}
\end{align*}
\end{proof}

%
\begin{fact}[Uhlmann's theorem~\cite{uhlmann76}]
\label{uhlmann}
Let $\rho_A,\sigma_A\in \mathcal{D}(\cH_A)$. Let $\rho_{AB}\in \mathcal{D}(\cH_{AB})$ be a purification of $\rho_A$ and $\sigma_{AC}\in\mathcal{D}(\cH_{AC})$ be a purification of $\sigma_A$. There exists an isometry $V$ (from a subspace of $\cH_C$ to a subspace of $\cH_B$) such that,
 $$\F(\ketbra{\theta}_{AB}; \ketbra{\rho}_{AB}) = \F(\rho_A;\sigma_A) ,$$
 where $\ket{\theta}_{AB} = (\id_A \otimes V) \ket{\sigma}_{AC}$.
\end{fact}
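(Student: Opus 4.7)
The plan is to reduce the claim to an explicit computation on canonical purifications via the standard $|\Omega\rangle$-representation of purifications, and then to exploit the polar decomposition of $\sqrt{\rho_A}\sqrt{\sigma_A}$ both to upper bound $|\langle\rho|\theta\rangle|$ and to exhibit an isometry $V$ that achieves $\F(\rho_A,\sigma_A)$. First, fix an orthonormal basis $\{|i\rangle\}$ of $\cH_A$, introduce $A'\equiv A$, and set $|\Omega\rangle_{AA'}:=\sum_i|i\rangle_A|i\rangle_{A'}$. After enlarging $\cH_B,\cH_C$ trivially by ancillas if necessary so that $\dim\cH_B,\dim\cH_C\ge\dim\cH_A$, the Schmidt decomposition lets me write
\[
|\rho\rangle_{AB}=(\sqrt{\rho_A}\otimes W_1)|\Omega\rangle_{AA'},\qquad |\sigma\rangle_{AC}=(\sqrt{\sigma_A}\otimes W_2)|\Omega\rangle_{AA'},
\]
for some isometries $W_1:\cH_{A'}\to\cH_B$ and $W_2:\cH_{A'}\to\cH_C$. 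For any candidate $V:\cH_C\to\cH_B$, setting $|\theta\rangle_{AB}=(\id_A\otimes V)|\sigma\rangle_{AC}$ gives $|\theta\rangle=(\sqrt{\sigma_A}\otimes VW_2)|\Omega\rangle$, and the transpose identity $\langle\Omega|(M\otimes N)|\Omega\rangle=\tr(MN^T)$ yields $\langle\rho|\theta\rangle=\tr\bigl(\sqrt{\rho_A}\sqrt{\sigma_A}\,(W_1^\dagger VW_2)^T\bigr)$.

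Next, I would apply Fact~\ref{holders2} with $(p,q)=(1,\infty)$, together with the fact that transposition preserves operator norm and that a composition of isometries has operator norm at most $1$, to obtain
\[
|\langle\rho|\theta\rangle|\;\le\;\|\sqrt{\rho_A}\sqrt{\sigma_A}\|_1\cdot\|W_1^\dagger VW_2\|_\infty\;\le\;\F(\rho_A,\sigma_A).
\]
Since $\F(\ketbra{\rho},\ketbra{\theta})=|\langle\rho|\theta\rangle|$ for pure states, this establishes the upper bound for every choice of $V$. For the matching lower bound, I would use the polar decomposition $\sqrt{\rho_A}\sqrt{\sigma_A}=U\,|\sqrt{\rho_A}\sqrt{\sigma_A}|$ with $U$ a partial isometry on $\cH_{A'}$, so that $\tr(\sqrt{\rho_A}\sqrt{\sigma_A}\,U^\dagger)=\|\sqrt{\rho_A}\sqrt{\sigma_A}\|_1=\F(\rho_A,\sigma_A)$. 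It then suffices to choose $V$ so that $(W_1^\dagger VW_2)^T=U^\dagger$, equivalently $VW_2=W_1\overline{U}$ (entrywise complex conjugate). On $\mathrm{range}(W_2)$ this assignment is forced and isometric, and on the orthogonal complement I extend $V$ to a full isometry using the spare dimensions of $\cH_B$ guaranteed by the initial enlargement.

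The main obstacle is the dimension bookkeeping in the last step: confirming that $V$ can be extended to a bona fide isometry from all of $\cH_C$ into $\cH_B$. This is routine once one has embedded both purifying spaces into a common ``large enough'' ancilla, which is consistent with how purifications are used in the paper (we are free to choose purifying spaces). All other ingredients---the $|\Omega\rangle$ representation of purifications, the transpose identity, Hölder's inequality (Fact~\ref{holders2}), and the polar decomposition---are elementary.
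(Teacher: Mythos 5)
The paper cites this as a Fact from the literature and gives no proof of its own, so there is nothing to compare against; I am evaluating your proposal on its merits.

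Your strategy — represent both purifications in $|\Omega\rangle$-form, reduce $\langle\rho|\theta\rangle$ to $\tr\bigl(\sqrt{\rho_A}\sqrt{\sigma_A}\,(W_1^\dagger VW_2)^T\bigr)$ via the transpose identity, upper-bound with H\"older, and saturate with a polar decomposition — is the standard Uhlmann argument and is correct in outline. Two points need tightening. First, you take $U$ in the polar decomposition $\sqrt{\rho_A}\sqrt{\sigma_A}=U|\sqrt{\rho_A}\sqrt{\sigma_A}|$ to be a partial isometry, but then the claim that the assignment $V(W_2x)=W_1\overline{U}x$ is isometric on $\mathrm{range}(W_2)$ is false: for $x$ outside the (conjugate of the) initial space of $U$ one has $\|W_1\overline{U}x\|<\|x\|=\|W_2x\|$. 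You should instead take $U$ unitary, which is always possible here because $\sqrt{\rho_A}\sqrt{\sigma_A}$ is a square operator on $\cH_A$ in finite dimensions; the identity $\tr\bigl(\sqrt{\rho_A}\sqrt{\sigma_A}\,U^\dagger\bigr)=\|\sqrt{\rho_A}\sqrt{\sigma_A}\|_1$ survives this replacement, and then $x\mapsto W_1\overline{U}x$ genuinely preserves norm. Second, the dimension bookkeeping is not as routine as you suggest, because the statement fixes $\cH_B$ and $\cH_C$: if you enlarge $\cH_C$ by an ancilla, the map you build is an isometry out of the \emph{enlarged} space, not out of the original $\cH_C$, and the statement asks for $V:C\to B$ on the given registers. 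The cleaner route is not to enlarge, but to drop the requirement that $W_2$ be a full isometry: take $W_2$ a partial isometry with initial space $\mathrm{supp}(\sigma_A^T)$, which exists for any valid purifying space since $\dim\cH_C\ge\mathrm{rank}(\sigma_A)$, and observe that the trace identity only sees $W_2$ on that support because of the $\sqrt{\sigma_A}$ factor. You then define $V$ on $\mathrm{range}(W_2)$ and extend to all of $\cH_C$, which is possible exactly when $\dim\cH_C\le\dim\cH_B$; but that inequality is an implicit hypothesis of the statement in any case, since otherwise no isometry $C\to B$ exists at all.
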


\begin{fact}[Rejection-sampling~\cite{Jain:2009}]
	\label{fact:crejectionsampling}
	Let $X, Y$ be random variables such that $\dmax{X}{Y} \leq k$. There exits a random variable $Z \in \{0,1\}$, correlated with $Y$ such that $X  \equiv (Y | Z=1)$ and $\Pr(Z=1) \geq 2^{-k}$.
\end{fact} 

\begin{fact}[\cite{Jain:2009}]
\label{fact:rejectionsampling}
Let $\rho_{A'B}, \sigma_{AB}$ be pure states such that $\dmax{\rho_B}{\sigma_B} \leq k$. Let Alice and Bob share $\sigma_{AB}$. There exists an isometry $V: A \rightarrow A'C$ such that,
\begin{enumerate}
\item  $(V \otimes \id_B) \sigma_{AB}(V \otimes \id_B)^\dagger  = \phi_{A'BC}$, where $C$ is a single qubit register. 
\item Let $C$ be the outcome of measuring $\phi_C$ in the standard basis. Then $\Pr(C=1) \geq 2^{-k}$.
\item Conditioned on outcome $C=1$, the state shared between Alice and Bob is $\rho_{A'B}$.  
\end{enumerate}
\end{fact}
We present a proof here for completeness.
\begin{proof}
Since $\dmax{\rho_B}{\sigma_B} \leq k$, let $\sigma_{B} = 2^{-k} \rho_{B} + (1 - 2^{-k}) \tau_{B}$. Let $\tau_{A'B}$ be a purification of $\tau_B$. Consider,
$$ \phi_{CA'B} =  \sqrt{(1 - 2^{-k})} \ket{0}_C \otimes \tau_{A'B} +\sqrt{2^{-k}}\ket{1}_C \otimes \rho_{A'B} .$$
Notice $\phi_{A'BC} $ is purification of $\sigma_{B}$. From Fact~\ref{uhlmann}, there exists an isometry $V: A \rightarrow A'C$ such that $(V \otimes \id_B) \sigma_{AB} (V \otimes \id_B)^\dagger = \phi_{A'BC}.$ The desired properties can be readily verified. 
 \end{proof}

\begin{fact}[\cite{BJK21}]\label{fact:minentropydecrease_on_measuremen}
Let $\rho_{ABC} \in \mathcal{D}(\cH_A \otimes \cH_B \otimes \cH_C)$ be a state and $M \in \cL(\cH_C)$ such that $M^\dagger M \leq \id_C$. Let $\hat{\rho}_{ABC}= \frac{M \rho_{ABC} M^\dagger}{\tr{M \rho_{ABC} M^\dagger}}$. Then, 
\[ \hmin{A}{B}_{\hat{\rho}} \geq   \hmin{A}{B}_{\rho} - \log \left(\frac{1}{\tr{M \rho_{ABC} M^\dagger}}\right). \] 
Furthermore~\footnote{The proof in~\cite{BJK21} can be easily modified to prove the inequality for $\hminn{.}{.}$.} if $\hat{\rho}_B =\rho_B$, we also get, 
\[ \hminn{A}{B}_{\hat{\rho}} \geq   \hminn{A}{B}_{\rho} - \log \left(\frac{1}{\tr{M \rho_{ABC} M^\dagger}}\right). \]
\end{fact} 
\begin{fact}[\cite{BJK21}] \label{claim:traingle_rho_rho_prime}Let $\rho_{ZA}, \rho'_{ZA}$ be states such that $\Delta(\rho; \rho') \leq \eps'$. If $d(Z\vert A)_{\rho'} \leq \eps$ then $d(Z\vert A)_{\rho} \leq 2\eps' + \eps$.~\footnote{Claim holds even when $\Delta_B()$ is replaced with $\Delta()$.}
\end{fact}
\begin{claim}
	\label{claim:measure1}
Let $\phi_{XX'AB}$ be a pure state  such that $\Hmin({X \vert B})_\phi  \geq k$. Let $X$ be a classical register (with copy $X'$). Let $\theta_{X_1X_2}$ be the canonical purification of $\theta_{X_2}$ such that $\theta_{X_2} \equiv U_X$. Let $\theta_{X_1X_2}$ be shared between Reference ($X_1$) and Alice ($X_2$). There exists a pure state $\sigma_{AB}$ such that when shared between Alice ($A$) and Bob ($B$), Alice can perform a measurement which succeeds with probability at least $2^{k- \vert X \vert }$ and on success  joint shared state is $\phi_{X_1X'AB}$ between Reference ($X_1$), Alice ($X'A$) and Bob ($B$) such that  $\phi_{X_1X'AB} \equiv \phi_{XX'AB}$.
\end{claim}

\begin{proof}
	Since $\Hmin({X \vert B})_\phi  \geq k$, we have 
	$$ \inf_{\sigma_{B}} \dmax{\phi_{XB}}{U_{X} \otimes \sigma_{B}} \leq \vert X \vert -k.$$ 
	Let the infimum above be achieved by $\sigma_{B}$ and  let  $\sigma_{AB}$ be its purification shared between Alice ($A$) and Bob ($B$). The desired now follows from Fact~\ref{fact:rejectionsampling} by treating Bob (in Fact~\ref{fact:rejectionsampling}) as Reference and Bob (here), state $\sigma_{AB}$ (in Fact~\ref{fact:rejectionsampling}) as $\theta_{X_1X_2} \otimes \sigma_{AB}$ (here) and state $\rho_{AB}$ (in Fact~\ref{fact:rejectionsampling}) as $\phi_{X_1X'AB}$ (here).
\end{proof}

\begin{claim}
	\label{claim:measure2}
	Let $\phi_{XX'AYY'B}$ be a pure state such that 
	$$\Hmin({X \vert BYY'})_\phi  \geq k_1 \quad \text{     and     } \quad \hminn{Y}{XX'A}_\phi \geq k_2 .$$ Let $X, Y$ be  classical registers (with copy $X'$ and $Y'$ respectively). Let $\theta_{X_1X_2}$ be a  canonical purification of $\theta_{X_2}$ such that $\theta_{X_2} \equiv U_X$. Let $\theta_{X_1X_2}$ be shared between Reference ($X_1$) and Alice ($X_2$). Let $\tau_{Y_1Y_2}$ be a  canonical purification of $\tau_{Y_2}$ such that $\tau_{Y_2} \equiv U_Y$.
	Let $\tau_{Y_1Y_2}$ be shared between Reference ($Y_1$) and Bob ($Y_2$). There exists a pure state $\sigma_{ABYY'}$ such that when shared between Alice ($A$) and Bob ($BYY'$), Alice and Bob can each perform a measurement which jointly succeeds with probability at least $2^{k_1+k_2- \vert X \vert- \vert Y \vert }$ and on success  the joint shared state is $\phi_{X_1X'AY_1Y'B}$ between Reference ($X_1Y_1$), Alice ($X'A$) and Bob ($Y'B$), such that $\phi_{X_1X'AY_1Y'B} \equiv \phi_{XX'AYY'B}$. 
\end{claim}
\begin{proof}
Since $\Hmin({X \vert BYY'})_\phi  \geq k_1$, we have 
	$$ \inf_{\sigma_{BYY'}} \dmax{\phi_{XBYY'}}{U_{X} \otimes \sigma_{BYY'}} \leq \vert X \vert -k_1.$$ Let the infimum above be achieved by $\sigma_{BYY'}$ and  let  $\sigma_{ABYY'}$ be its purification shared between Alice ($A$) and Bob ($BYY'$). From Claim~\ref{claim:measure1}, Alice can do a measurement such that on success $\phi_{X_1X'AYY'B}$ is shared between Reference ($X_1$), Alice ($X'A$) and Bob ($YY'B$) (such that $\phi_{X_1X'AYY'B} \equiv \phi_{XX'AYY'B}$). Also, since $\hminn{Y}{ AXX'}_\phi  \geq k_2$, we have 
	$$ \dmax{\phi_{YAX_1X'}}{U_{Y} \otimes \phi_{AX_1X'}} \leq \vert Y \vert -k_2.$$ 
	Again from Claim~\ref{claim:measure1}, Bob can do a measurement such that on success $\phi_{X_1X'AY_1Y'B}$ is shared between Reference ($X_1Y_1$), Alice ($X'A$) and Bob ($Y'B$) (such that $\phi_{X_1X'AY_1Y'B} \equiv \phi_{XX'AYY'B}$). This completes the proof by noting probability of success in the first and second steps as $2^{k_1- \vert X \vert}$ and $2^{k_2- \vert Y \vert }$ respectively.
\end{proof}
\begin{claim}
	\label{claim:measure3}
	Let $\phi_{XX'AYY'B}$ be a pure state such that 
	$$\hminn{X}{BYY'}_\phi  \geq k_1 \quad \text{     and     } \quad \hminn{Y}{XX'A}_\phi \geq k_2 .$$ Let $X, Y$ be  classical registers (with copy $X'$ and $Y'$ respectively). Let $\theta_{X_1X_2}$ be a  canonical purification of $\theta_{X_2}$ such that $\theta_{X_2} \equiv U_X$. Let $\theta_{X_1X_2}$ be shared between Reference ($X_1$) and Alice ($X_2$). Let $\tau_{Y_1Y_2}$ be a  canonical purification of $\tau_{Y_2}$ such that $\tau_{Y_2} \equiv U_Y$.
	Let $\tau_{Y_1Y_2}$ be shared between Reference ($Y_1$) and Bob ($Y_2$). There exists a pure state $\sigma_{ABYY'}$ such that $\sigma_{BYY'} \equiv \phi_{BYY'}$ and  when shared between Alice ($A$) and Bob ($BYY'$), Alice and Bob can each perform a measurement which jointly succeeds with probability at least $2^{k_1+k_2- \vert X \vert- \vert Y \vert }$ and on success  the joint shared state is $\phi_{X_1X'AY_1Y'B}$ between Reference ($X_1Y_1$), Alice ($X'A$) and Bob ($Y'B$), such that $\phi_{X_1X'AY_1Y'B} \equiv \phi_{XX'AYY'B}$. 
\end{claim}
\begin{proof}

The proof follows in similar lines of Claim~\ref{claim:measure2} after a simple modification. We state the modification required and do not repeat the entire argument. Considering $\sigma_{ABYY'}$ to be any purification of $\sigma_{BYY'}$ such that $\sigma_{BYY'} \equiv \phi_{BYY'}$ in Claim~\ref{claim:measure2} and repeating the argument of Claim~\ref{claim:measure2}, the result follows.
\suppress{
    Simulation of the state $\phi$ in the model of $l\mhyphen\qma$ follows after a simple modification of Claim~\ref{claim:measure2}. We state the modification required. Claim~\ref{claim:measure2} holds even when $\hmin{}{}$ bound requirement is replaced by $\hminn{}{}$ bound requirement and considering $\sigma_{ABYY'}$ to be any purification of $\sigma_{BYY'}$ such that $\sigma_{BYY'} = \phi_{BYY'}$ in Claim~\ref{claim:measure2}.}
\end{proof}

\begin{lemma}\label{lemma:simeverything} Let $\rho_{X \hat{X} N Y \hat{Y} M}$ be a pure state such that $\vert X \vert = \vert \hat{X} \vert= \vert Y \vert= \vert \hat{Y} \vert =n$, $XY$ classical (with copies $\hat{X}\hat{Y}$ respectively) and 
\[\hmin{X}{Y\hat{Y}M}_\rho \geq k_1  \quad ; \quad \hminn{Y}{X\hat{X}N}_\rho \geq k_2. \]
Then $\rho$ is also an $l\mhyphen\qmas$ (see Definition~\ref{qmadv}) for some $l \leq 2n- k_1 - k_2.$
\end{lemma}
\begin{proof}
Simulation of the state $\rho$ in the model of $l\mhyphen\qma$ follows from Claim~\ref{claim:measure2}. Using Claim~\ref{claim:measure2},  with the following assignment of registers (below the registers on the left are from Claim~\ref{claim:measure2} and the registers on the right are the registers in this proof)
 $$(X, Y, X', Y', A, B, \Phi) \leftarrow (X, Y, \hat{X}, \hat{Y},  N, M, \rho),$$we have $$l  \leq  \log (2^{n-k_1+n-k_2}) = 2n-k_1-k_2.$$
\end{proof}

\begin{corollary}\label{lem:weakqma}
Let $\rho$ be a  $(k_1,k_2)\mhyphen\qpasw$ such that $\vert X \vert =\vert Y \vert =n$. Then $\rho$ is also an $l \mhyphen \qmas$ for the parameters $l \leq 2n-k_1-k_2$. 
\end{corollary}
\begin{proof}
    Note $\rho_{X \hat{X}NMY\hat{Y}} = \rho_{X \hat{X}NM} \otimes \rho_{Y\hat{Y}}$, where $\rho_{X \hat{X}NM}$ is the purification of $\rho_{XM}$ such that $\hmin{X}{M}_\rho \geq k_1$ and $\Hmin(Y)_\rho \geq k_2$. Since $ \rho_{X\hat{X}NMY\hat{Y}}  = \rho_{X\hat{X}NM}  \otimes \rho_{\hat{Y}Y},$ we have  $$\hmin{X}{Y\hat{Y}M}_\rho \geq k_1$$ and   
\[  \dmax{\rho_{YX\hat{X}N}  }{U_Y \otimes \rho_{X\hat{X}N} }  \leq  \log ( \dim(\cH_{Y}) ) -k_2=n-k_2.\] The first inequality follows since 
$\rho_{YX\hat{X}N}   = \rho_Y \otimes \rho_{X\hat{X}N}  $ and $\Hmin(Y)_\rho \geq k_2.$ Thus, $\hminn{Y}{X\hat{X}N}_\rho \geq k_2$. Simulation of the $(k_1, k_2)$-$\qpasw$ in the model of $l\mhyphen\qma$ now follows from
Lemma~\ref{lemma:simeverything}.
\suppress{
Claim~\ref{claim:measure2}. From Claim~\ref{claim:measure2},  in the simulation of $\rho_{X\hat{X}NMY\hat{Y}}$, in the model of $l\mhyphen\qma$,  we have $$l =  \log \left( \frac{1}{\Pr(A=1,B=1)}\right) \leq  \log (2^{n-k_1+n-k_2}) = 2n-k_1-k_2.$$
    }
\end{proof}

\suppress{
\begin{claim}
	\label{claim:measure1}
Let $\phi_{XX'AB}$ be a pure state  such that $\Hmin({X \vert B})_\phi  \geq k$. Let $X$ be a classical register (with copy $X'$). Let $\theta_{X_1X_2}$ be the canonical purification of $\theta_{X_2}$ such that $\theta_{X_2} = U_X$. Let $\theta_{X_1X_2}$ be shared between Reference ($X_1$) and Alice ($X_2$). There exists a pure state $\sigma_{AB}$ such that when shared between Alice ($A$) and Bob ($B$), Alice can perform a measurement which succeeds with probability at least $2^{k- \log (\dim(\cH_{X}) )  }$ and on success  joint shared state is $\phi_{X_1X'AB}$ between Reference ($X_1$), Alice ($X'A$) and Bob ($B$) such that  $\phi_{X_1X'AB} \equiv \phi_{XX'AB}$.
\end{claim}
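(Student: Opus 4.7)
The plan is to reduce the claim to a direct application of the rejection-sampling fact (Fact~\ref{rejectionsampling}), with Reference and Bob treated jointly as the ``receiving'' party so that all the work is done by Alice alone.

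First, let $n = \dim(\cH_{X})$ and fix a state $\sigma_{B}^{\star}\in\cD(\cH_{B})$ attaining the min-entropy bound, so that
\[
\phi_{XB} \;\le\; 2^{-k}\,\id_{X}\otimes\sigma_{B}^{\star}.
\]
Since $\theta_{X_1}=\frac{1}{n}\id_{X_1}$, this can be rewritten, after renaming $X$ to $X_1$, as
\[
\phi_{X_{1}B} \;\le\; 2^{\log n - k}\bigl(\theta_{X_{1}}\otimes\sigma_{B}^{\star}\bigr),
\qquad\text{i.e.,}\qquad
\dmax{\phi_{X_{1}B}}{\theta_{X_{1}}\otimes\sigma_{B}^{\star}}\le \log n - k.
\]
This is the quantity that will control the success probability.

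Next I choose the starting entangled resource: let $\sigma_{\tilde{A}B}$ be any purification of $\sigma_{B}^{\star}$, with $\tilde{A}$ on Alice's side and $B$ on Bob's side (the same register that will carry Bob's final share). The initial joint pure state is then
\[
\theta_{X_{1}X_{2}}\otimes \sigma_{\tilde{A}B},
\]
with Reference holding $X_{1}$, Alice holding $X_{2}\tilde{A}$, and Bob holding $B$. Now view the world under the bipartite cut Alice vs.\ (Reference$+$Bob): Alice holds $X_{2}\tilde{A}$ and the ``other side'' holds $X_{1}B$. The marginal of the starting state on the other side is $\theta_{X_{1}}\otimes\sigma_{B}^{\star}$, whereas the marginal of the target state $\phi_{X_{1}X'AB}\equiv \phi_{XX'AB}$ on the other side is exactly $\phi_{X_{1}B}$. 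Both the starting and target states are pure across this cut, and the $\dmax$ gap between their ``other side'' marginals is at most $\log n - k$ by the inequality above.

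Apply Fact~\ref{rejectionsampling} with ``Alice'' instantiated as the actual Alice (register $X_{2}\tilde{A}$) and ``Bob'' instantiated as Reference$+$Bob (register $X_{1}B$, on which no operation is performed). The fact yields an isometry $V:\cH_{X_{2}}\otimes\cH_{\tilde{A}}\to \cH_{X'}\otimes\cH_{A}\otimes\cH_{C}$ on Alice's side, with $C$ a single qubit, such that a standard-basis measurement of $C$ returns $1$ with probability at least $2^{-(\log n - k)} = 2^{k-\log n}$, and conditioned on $C=1$ the joint pure state shared between Alice ($X'A$), Bob ($B$), and Reference ($X_{1}$) is precisely the desired $\phi_{X_{1}X'AB}$. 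Since $X_{1}$ is never touched and $\phi_{X_{1}X'AB}\equiv \phi_{XX'AB}$, this is exactly the conclusion of the claim.

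The only subtlety, and the place where a careful writeup is warranted, is the bookkeeping of the bipartite cut: the isometry from Fact~\ref{rejectionsampling} is constructed with respect to a two-party split, so one must verify that treating Reference and Bob as a single non-acting party is legitimate (it is, because neither Reference nor Bob applies any operation, and the fact's proof only uses that the combined marginal on the non-acting side dominates the target marginal in the $\dmax$ sense). Once this identification is fixed, the calculation $\dmax{\phi_{X_{1}B}}{\theta_{X_{1}}\otimes\sigma_{B}^{\star}}\le \log n-k$ is the entire content of the proof.
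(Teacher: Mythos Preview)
Your proposal is correct and follows essentially the same argument as the paper: convert the min-entropy bound into $\dmax{\phi_{X_1B}}{\theta_{X_1}\otimes\sigma_B^{\star}}\le \log n-k$, take any purification of $\sigma_B^{\star}$ as the shared entanglement, and invoke Fact~\ref{rejectionsampling} across the cut Alice versus (Reference$+$Bob). The paper's proof is terser but makes the identical identifications; your extra discussion of the bipartite cut is sound and only makes the application of Fact~\ref{rejectionsampling} more explicit.
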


\begin{proof}
	Since $\Hmin({X \vert B})_\phi  \geq k$, we have 
	$$ \inf_{\sigma_{B}} \dmax{\phi_{XB}}{U_{X} \otimes \sigma_{B}} \leq \log( \dim(\cH_{X}) ) -k.$$ 
	Let the infimum above be achieved by $\sigma_{B}$ and  let  $\sigma_{AB}$ be its purification shared between Alice ($A$) and Bob ($B$). The desired now follows from Fact~\ref{fact:rejectionsampling} by treating Bob (in Fact~\ref{fact:rejectionsampling}) as Reference and Bob (here), state $\sigma_{AB}$ (in Fact~\ref{fact:rejectionsampling}) as $\theta_{X_1X_2} \otimes \sigma_{AB}$ (here) and state $\rho_{AB}$ (in Fact~\ref{fact:rejectionsampling}) as $\phi_{X_1X'AB}$ (here).
\end{proof}

\begin{claim}
	\label{claim:measure2}
	Let $\phi_{XX'AYY'B}$ be a pure state such that 
	$$\Hmin({X \vert BYY'})_\phi  \geq k_1 \quad \text{     and     } \quad \hminn{Y}{XX'A}_\phi \geq k_2 .$$ Let $X, Y$ be  classical registers (with copy $X'$ and $Y'$ respectively). Let $\theta_{X_1X_2}$ be a  canonical purification of $\theta_{X_2}$ such that $\theta_{X_2} = U_X$. Let $\theta_{X_1X_2}$ be shared between Reference ($X_1$) and Alice ($X_2$). Let $\tau_{Y_1Y_2}$ be a  canonical purification of $\tau_{Y_2}$ such that $\tau_{Y_2} = U_Y$.
	Let $\tau_{Y_1Y_2}$ be shared between Reference ($Y_1$) and Bob ($Y_2$). There exists a pure state $\sigma_{ABYY'}$ such that when shared between Alice ($A$) and Bob ($BYY'$), Alice and Bob can each perform a measurement which jointly succeeds with probability at least $2^{k_1+k_2- \log (\dim(\cH_{X}) ) - \log (\dim(\cH_{Y}) ) }$ and on success  the joint shared state is $\phi_{X_1X'AY_1Y'B}$ between Reference ($X_1Y_1$), Alice ($X'A$) and Bob ($Y'B$), such that $\phi_{X_1X'AY_1Y'B} \equiv \phi_{XX'AYY'B}$. 
\end{claim}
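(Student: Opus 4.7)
The plan is to prove Claim~\ref{claim:measure2} by iterating Claim~\ref{claim:measure1} twice in sequence: once for Bob (using the modified min-entropy condition on $Y$, which by Fact~\ref{fact101} implies the ordinary min-entropy condition needed by Claim~\ref{claim:measure1}), and once for Alice (using the min-entropy condition on $X$). Since the initial pure state $\sigma_{ABYY'}$ in the statement is existentially quantified, we have the freedom to pick it so that the outputs of the two steps chain together correctly.

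First, I would apply Claim~\ref{claim:measure1} to $\phi_{YY'BXX'A}$ with the roles of Alice and Bob exchanged (our Bob plays the Alice role, our Alice plays the Bob role), taking $Y$ as the source and $XX'A$ as the conditioning system. Using that $\tilde{H}_{\min}(Y|XX'A)_\phi \geq k_2$ implies $\hmin{Y}{XX'A}_\phi \geq k_2$ via Fact~\ref{fact101}, this yields a pure state shared between our Alice and our Bob such that, together with $\tau_{Y_1Y_2}$, our Bob can perform a measurement succeeding with probability at least $2^{k_2 - \log|Y|}$; on success the joint state becomes $\phi_{Y_1Y'BXX'A}$, with Reference now holding $Y_1$, our Bob holding $Y'B$, and our Alice still holding $XX'A$.

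Next, I would apply Claim~\ref{claim:measure1} a second time, now with our Alice in the Alice role and the combined system of Reference ($Y_1$) and our Bob ($Y'B$) playing the Bob role, with $X$ as the source and $Y_1Y'B$ as the conditioning system. Using that $\hmin{X}{BYY'}_\phi \geq k_1$ together with $Y \equiv Y_1$, this yields a pure state shared between our Alice and (Reference + our Bob) such that, together with $\theta_{X_1X_2}$, our Alice can perform a measurement succeeding with probability at least $2^{k_1 - \log|X|}$; on success the joint state is $\phi_{X_1X'AY_1Y'B}$, the desired target. Multiplying the two conditional success probabilities delivers the claimed bound $2^{k_1 + k_2 - \log|X| - \log|Y|}$.

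The hard part is enforcing compatibility between the two applications. Step~2 demands that the post-step-1 state be a specific pure state---namely, by inspection of the proof of Claim~\ref{claim:measure1}, a purification on our Alice's side of the state on (Reference + our Bob) that achieves the infimum defining $\hmin{X}{Y_1Y'B}$. Since we are free to choose $\sigma_{ABYY'}$ (and also the particular purification used inside Claim~\ref{claim:measure1} during step~1), we set $\sigma_{ABYY'}$ so that the conditional state produced by step~1 is exactly the pure state needed for step~2; this is also what dictates the presence of the $Y, Y'$ registers on Bob's side in $\sigma_{ABYY'}$. Once this bookkeeping is done, the probability bound drops out immediately as the product of the two independent Claim~\ref{claim:measure1} success probabilities.
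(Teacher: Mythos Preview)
Your two-step plan is the same idea as the paper's, but you apply the steps in the \emph{opposite order} (Bob first, then Alice), and that reversal breaks the argument. The asymmetry in the hypotheses --- ordinary $\hmin{X}{BYY'}\ge k_1$ versus modified $\hminn{Y}{XX'A}\ge k_2$ --- is exactly what dictates the order. In the paper's proof Alice measures first: because her hypothesis is an infimum, she may choose $\sigma_{ABYY'}$ to be a purification of the minimizing $\sigma_{BYY'}$, which matches the register layout of Claim~\ref{claim:measure2}. After Alice succeeds, the marginal on Reference$+$Alice is the \emph{actual} $\phi_{X_1X'A}$, and the modified-min-entropy hypothesis says precisely $\dmax{\phi_{YAX_1X'}}{U_Y\otimes\phi_{AX_1X'}}\le\log|Y|-k_2$; so Bob can reject-sample directly against the state that is already there.

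In your order the chaining fails on two counts. First, if Bob's step~1 targets $\phi_{Y_1Y'BXX'A}$ via Claim~\ref{claim:measure1}, then the passive party in that application (your Alice) must hold $XX'A$ throughout, contradicting the stipulated layout of $\sigma_{ABYY'}$ with Alice on $A$ only. Second --- and this is the essential obstruction --- after step~1 the marginal on Reference$+$Bob is $\phi_{Y_1Y'B}$, and for Alice's step~2 rejection sampling you would need $\dmax{\phi_{XY_1Y'B}}{U_X\otimes\phi_{Y_1Y'B}}\le\log|X|-k_1$, i.e.\ $\hminn{X}{YY'B}\ge k_1$; you only have the ordinary $\hmin{X}{YY'B}\ge k_1$, and Fact~\ref{fact101} points the wrong way. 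Your proposed fix (``choose $\sigma_{ABYY'}$ so that step~1's output equals step~2's required input'') cannot work: the conditional output of a Claim~\ref{claim:measure1} application is the \emph{fixed} target state, whereas step~2 needs a purification of the (generally different) optimizer $\sigma^*_{Y_1Y'B}$; and if you instead make that optimizer the step-1 target, you have no control over $\hmin{Y}{A}$ in that new target and lose the $k_2$ success bound. In short, weakening $\hminn{Y}{XX'A}$ to $\hmin{Y}{XX'A}$ via Fact~\ref{fact101} discards precisely the information --- that the bound holds against the true marginal --- needed to make the second measurement chain onto the first.
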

\begin{proof}
Since $\Hmin({X \vert BYY'})_\phi  \geq k_1$, we have 
	$$ \inf_{\sigma_{BYY'}} \dmax{\phi_{XBYY'}}{U_{X} \otimes \sigma_{BYY'}} \leq \log( \dim(\cH_{X}) ) -k_1.$$ Let the infimum above be achieved by $\sigma_{BYY'}$ and  let  $\sigma_{ABYY'}$ be its purification shared between Alice ($A$) and Bob ($BYY'$). From Claim~\ref{claim:measure1}, Alice can do a measurement such that on success $\phi_{X_1X'AYY'B}$ is shared between Reference ($X_1$), Alice ($X'A$) and Bob ($YY'B$) (such that $\phi_{X_1X'AYY'B} \equiv \phi_{XX'AYY'B}$). Also, since $\hminn{Y}{ AXX'}_\phi  \geq k_2$, we have 
	$$ \dmax{\phi_{YAX_1X'}}{U_{Y} \otimes \phi_{AX_1X'}} \leq \log( \dim(\cH_{Y}) ) -k_2.$$ 
	Again from Claim~\ref{claim:measure1}, Bob can do a measurement such that on success $\phi_{X_1X'AY_1Y'B}$ is shared between Reference ($X_1Y_1$), Alice ($X'A$) and Bob ($Y'B$) (such that $\phi_{X_1X'AY_1Y'B} \equiv \phi_{XX'AYY'B}$). This completes the proof by noting probability of success in the first and second steps as $2^{k_1- \log (\dim(\cH_{X}) ) }$ and $2^{k_2- \log (\dim(\cH_{Y}) ) }$ respectively.
\end{proof}
\begin{lemma}\label{lemma:simeverything} Let $\rho_{X \hat{X} N Y \hat{Y} M}$ be a pure state such that $\vert X \vert = \vert \hat{X} \vert= \vert Y \vert= \vert \hat{Y} \vert =n$, $XY$ classical (with copies $\hat{X}\hat{Y}$ respectively) and 
\[\hmin{X}{Y\hat{Y}M}_\rho \geq k_1  \quad ; \quad \hminn{Y}{X\hat{X}N}_\rho \geq k_2. \]
Then $\rho$ is also an $l\mhyphen\qmas$ (see Definition~\ref{qmadv}) for some $l \leq 2n- k_1 - k_2.$
\end{lemma}
\begin{proof}
Simulation of the state $\rho$ in the model of $l\mhyphen\qma$ follows from Claim~\ref{claim:measure2}. Using Claim~\ref{claim:measure2},  with the following assignment of registers (below the registers on the left are from Claim~\ref{claim:measure2} and the registers on the right are the registers in this proof)
 $$(X, Y, X', Y', A, B, \Phi) \leftarrow (X, Y, \hat{X}, \hat{Y},  N, M, \rho),$$we have $$l  \leq  \log (2^{n-k_1+n-k_2}) = 2n-k_1-k_2.$$
\end{proof}}
\subsection*{Extractors} 

\begin{definition}[Quantum secure seeded extractor]
\label{qseeded}
	An $(n,d,m)$-seeded extractor $\Ext : \{0,1\}^n \times \{0,1\}^d \to \{0,1\}^m$  is said to be $(k,\eps)$-quantum secure if for every state $\rho_{XES}$, such that $\Hmin(X|E)_\rho \geq k$ and $\rho_{XES} = \rho_{XE} \otimes U_d$, we have 
	$$  \| \rho_{\Ext(X,S)E} - U_m \otimes \rho_{E} \|_1 \leq \eps.$$
	In addition, the extractor is called strong if $$  \| \rho_{\Ext(X,S)SE} - U_m \otimes U_d \otimes \rho_{E} \|_1 \leq \eps .$$
	$S$ is referred to as the {\em seed} for the extractor.
	\end{definition}

\begin{fact}[\cite{DPVR09}~\cite{CV16}]
	\label{trevext}
	There exists an explicit $(2m,\eps)$-quantum secure strong $(n,d,m)$-seeded extractor $\Ext : \{ 0,1\}^n \times  \{ 0,1\}^d \to  \{ 0,1\}^m$ for parameters $d = O( \log^2(n/\eps) \log m )$. 
\end{fact}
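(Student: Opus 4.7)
The plan is to give an explicit construction via Trevisan's extractor paradigm and establish quantum security through a quantum reconstruction / hybrid argument, following \cite{DPVR09}. The construction takes a seed $Y \in \{0,1\}^d$ together with the source $X \in \{0,1\}^n$ and outputs
\[
\Ext(X,Y) \;=\; C(X,Y_{S_1}) \, C(X,Y_{S_2}) \, \cdots \, C(X,Y_{S_m}),
\]
where $S_1,\ldots,S_m \subseteq [d]$ is a weak design with $|S_i| = t$ and pairwise overlap at most $r$, and $C : \{0,1\}^n \times \{0,1\}^t \to \{0,1\}$ is a strong one-bit seeded extractor built from an efficiently list-decodable binary code composed with a pairwise-independent hash. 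By the one-bit extractor result of K\"onig--Terhal (every strong one-bit classical extractor is automatically quantum-secure up to essentially the same parameters), $C$ is $(k',\eps')$-quantum-secure strong for appropriate $(k',\eps')$.

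The quantum-security argument proceeds by contradiction through a hybrid argument over the $m$ output bits. Suppose $\Ext(X,Y) Y \rho_E$ is $\eps$-far from $U_m \otimes Y \rho_E$. A standard hybrid decomposition then yields an index $i \in [m]$ and a distinguisher for the $i$-th output bit given $Y$, $\rho_E$, and the previous output bits, with advantage at least $\eps/m$. The core step is to argue that conditioning on the complementary seed bits $Y_{[d] \setminus S_i}$ together with the $m-1$ other output bits degrades the conditional min-entropy of $X$ by only a bounded amount: since each $S_j$ ($j \ne i$) overlaps $S_i$ in at most $r$ positions, each bit $C(X,Y_{S_j})$ for $j \ne i$ is a function of $X$, the fixed bits $Y_{[d] \setminus S_i}$, and at most $r$ bits of $Y_{S_i}$. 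The latter act as a short \emph{classical leakage} on the state $\rho_{XE}$, and by a chain-rule / leakage bound for conditional min-entropy (Fact~\ref{fact2} applied to the leakage register), $X$ still has conditional min-entropy at least $k - rm - O(m) \ge k'$ given the augmented quantum side-information.

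Invoking the $(k',\eps')$-quantum security of $C$ on this augmented side-information then contradicts the assumed distinguishing advantage, provided $\eps/m > \eps'$. Setting $\eps' = \eps / m$ and using the classical Trevisan parameters --- a weak design of dimension $t = O(\log(n/\eps))$ with overlap $r = O(\log(n/\eps))$ obtainable in seed length $d = O(t^2 / \log m) = O(\log^2(n/\eps)\log m)$, together with the choice $k' = k - O(rm)$ --- produces the stated parameters $m = k - 4\log(1/\eps) - O(1)$ and $d = O(\log^2(n/\eps)\log m)$, and the construction is clearly efficient.

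The main obstacle, and the novelty of \cite{DPVR09} over earlier Trevisan-based constructions, is precisely the quantum version of the reconstruction / hybrid step: classically one simply predicts a bit of $X$ from the leakage, but against a quantum adversary one must express the previous output bits as genuine classical leakage on the register $E$ and track the induced loss in $\hmin{X}{E}$ using a chain rule rather than a union bound over classical conditioning. The rest of the argument is then a faithful translation of the classical analysis.
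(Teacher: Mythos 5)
The paper does not prove this statement; it is cited verbatim as a black box from De, Portmann, Vidick and Renner~\cite{DPVR09}, so there is no in-paper proof to compare against. Your sketch is a faithful summary of the \cite{DPVR09} argument (Trevisan's construction with a weak design, a quantum-secure one-bit extractor via K\"onig--Terhal, a hybrid over output bits, and a min-entropy chain rule for classical leakage), with one imprecision worth noting: what must be handed to the adversary as leakage in the hybrid step is not merely the $m-1$ other output bits but the full table of values $C(X,\cdot)$ restricted to each $S_j\cap S_i$, whose total size is bounded by the weak-design parameter $\sum_{j\neq i} 2^{|S_i\cap S_j|}=O(m)$ rather than by the pairwise overlap $r$ times $m$; this is exactly what makes the entropy loss $O(m)$ and yields $m=k-4\log(1/\eps)-O(1)$.
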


\begin{definition}[$2$-source extractor against an adversary~\cite{KK10,CLW14,APS16}]\label{iadv2source}
An $(n,n,m)$-$2$-source extractor $2\Ext : \{0,1\}^n \times \{0,1\}^n \to \{0,1\}^m$ is $(\cdot,\eps)$-quantum~\footnote{$"\cdot"$ in $(\cdot ,\eps)$ indicates the parameters of adversary model used to define $\rho_{XEY}$. For example, $(l,\eps)$- quantum secure against $\qma$ (for $l\mhyphen\qmas$'s chosen by $\qma$).} secure against an adversary if for every $\rho_{XYE}$ (generated appropriately in the adversary model, adversary holds $E$), we have
$$  \| \rho_{2\Ext(X,Y)E} - U_m \otimes \rho_{E} \|_1 \leq \eps. $$ 
The extractor is called $Y$-strong if $$  \| \rho_{2\Ext(X,Y)EY} - U_m \otimes \rho_{EY} \|_1 \leq \eps,$$
and $X$-strong if $$  \| \rho_{2\Ext(X,Y)EX} - U_m \otimes \rho_{EX} \|_1 \leq \eps.$$
\end{definition}

\begin{definition}\label{def:mac}
	A function $\mac:\{0,1\}^{2m} \times\{0,1\}^m \to \{0,1\}^m$ is an \emph{$\eps$-information-theoretically secure one-time message authentication code} if for any function $\mathcal{A}:\{0,1\}^m \times \{0,1\}^m \to \{0,1\}^m\times \{0,1\}^m$ it holds that for all $\mu \in \{0,1\}^m$
	$$\Pr_{k\leftarrow \{0,1\}^{2m}}\big[ (\mac(k,\mu') = \sigma' ) \, \wedge \, (\mu'\neq \mu) : (\mu',\sigma') \leftarrow \mathcal{A}(\mu,\mac(k,\mu))\big] \,\leq\,\eps.$$  
\end{definition}
Efficient constructions of $\mac$ satisfying the conditions of Definition~\ref{def:mac} are known. The following fact summarizes some parameters that are achievable using a construction based on polynomial evaluation.
\begin{fact}[Proposition 1 in~\cite{KR09}]\label{prop:mac}
	For any integer $m > 0$, there exists an efficient family of  $2^{-m}$-information-theoretically secure one-time message authentication codes $$\mac:\{0,1\}^{2m} \times\{0,1\}^m \to \{0,1\}^m.$$
\end{fact}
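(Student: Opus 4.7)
The plan is to instantiate the MAC via polynomial evaluation over $\mathbb{F}_{2^m}$. Fix any bijection $\{0,1\}^m \leftrightarrow \mathbb{F}_{2^m}$ and view the key space $\{0,1\}^{2m}$ as $\mathbb{F}_{2^m}^2$. For a key $k = (a,b)$ and message $\mu$, define
$$\mac((a,b), \mu) \defeq a \cdot \mu + b,$$
with arithmetic in $\mathbb{F}_{2^m}$. Since addition and multiplication in $\mathbb{F}_{2^m}$ run in $\mathrm{poly}(m)$ bit operations (after choosing any fixed irreducible polynomial of degree $m$ over $\mathbb{F}_2$), both the tagging and verification procedures are manifestly efficient, covering the existence and efficiency claim.

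For security, fix any $\mu \in \{0,1\}^m$ and adversary $\mathcal{A}$; I may assume $\mathcal{A}$ is deterministic, since randomized $\mathcal{A}$ can be handled by averaging over its coins. Since $\mu$ is fixed, $\mathcal{A}(\mu, \cdot)$ depends only on its second argument; write its output as $(\mu'(\sigma), \sigma'(\sigma))$. The forgery event becomes
$$E \defeq \left\{ (a,b) \in \mathbb{F}_{2^m}^2 \,:\, \mu'(a\mu+b) \neq \mu \ \text{and} \ \sigma'(a\mu+b) = a \cdot \mu'(a\mu+b) + b \right\}.$$
The main step is to partition $E$ by the tag $\sigma = a\mu + b$ observed by $\mathcal{A}$. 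Once $\sigma$ is fixed, $\mu'(\sigma)$ and $\sigma'(\sigma)$ are determined; the constraint $b = \sigma - a\mu$ expresses $b$ as a linear function of $a$; and the forgery condition collapses to the single affine equation
$$\sigma'(\sigma) - \sigma \;=\; a\bigl(\mu'(\sigma) - \mu\bigr)$$
in $a$. Because $\mu'(\sigma) - \mu$ is a nonzero element of $\mathbb{F}_{2^m}$, this equation has exactly one solution in $a$, which then pins $b$. So each $\sigma \in \mathbb{F}_{2^m}$ contributes at most one pair $(a,b)$ to $E$, giving $|E| \le 2^m$ and hence $\Pr_{k \leftarrow \{0,1\}^{2m}}[E] \leq 2^m / 2^{2m} = 2^{-m}$, as required.

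The only real obstacle is bookkeeping: the crucial (but slight) observation is that the adversary's forgery $(\mu', \sigma')$ depends on $(a,b)$ only through the tag $\sigma = a\mu + b$ it sees, which is exactly what allows the forgery condition to be reduced to a single affine equation in $a$ for each fixed $\sigma$. Everything else is elementary field arithmetic. I would also remark in passing that this is the degree-one instance of the Wegman--Carter paradigm, and the argument extends to messages of length $\ell \cdot m$ by taking the tag to be a higher-degree polynomial in $a$ evaluated on the $\ell$ message blocks, at the cost of a security loss linear in $\ell$; for the $m$-bit message space in the statement, however, the degree-one construction above already suffices.
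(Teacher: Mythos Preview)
Your proof is correct and complete. The paper does not actually prove this statement: it is stated as a Fact with a citation to~\cite{RK05}, preceded only by the remark that the parameters are ``achievable using a construction based on polynomial evaluation.'' Your argument supplies precisely that construction (the affine map $(a,b,\mu)\mapsto a\mu+b$ over $\mathbb{F}_{2^m}$) together with the standard security analysis, so there is nothing to compare against beyond noting that you have filled in what the paper leaves to the reference.
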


\subsection*{Quantum communication complexity}
\label{sec:comm}
In a quantum communication protocol $\Pi$ for computing a function $f : \X \times \Y \to \Z$,  the inputs $x\in \X$ and $y\in\Y$ are given to Alice and Bob respectively. They also start with an entangled pure state independent of the inputs. They perform local operations and exchange quantum messages. The goal is to minimize the communication between them. Please refer to preliminaries of~\cite{Yao93, CB97} for details of an entanglement-assisted quantum communication protocol. Let $O(x,y)$ refer to the output of the protocol, on input $(x,y)$. Let $C_i$ (on $c_i$ qubits) refer to the $i$-th message sent in the protocol. 
\begin{definition}
    \label{def:qcc} Define,
\begin{align*}
	\textrm{Worst-case error:} & \quad \err(\Pi) \defeq \max_{(x,y)} \{ \Pr[O(x,y) \ne f(x,y)] \} .\\
	\textrm{Communication of a quantum protocol:}&\quad  \QCC(\Pi) \defeq \sum_i  c_i . \\
	\textrm{Entanglement-assisted communication complexity of $f$:}&\quad  \Q_\gamma(f) \defeq \min_{\Pi: \err(\Pi) \leq \gamma} \QCC(\Pi).  \\
\end{align*}
\end{definition}
%
%
\subsubsection*{Protocols with abort and efficiency}
\label{sec:eff}
 Consider the following zero-communication protocol with abort for a function $f :  \X \times \Y   \to \Z$. Let the inputs $X \in \X$ and $Y\in\Y$ be given to Alice and Bob respectively according to distribution $\mu$. They also start with an entangled pure state $\tau_{NM}$ independent of the inputs (Alice holds $N$ and Bob holds $M$). They apply local operations and measurements and are allowed to abort with some probability. Let the state  shared between Alice and Bob after their local operations be $\tau_{XN'M'YAB}$. Let $\perp$ represent the abort symbol. Let $\Pr(A= \perp  \vee B= \perp)_\tau \leq 1- \eta,$ and  $$\Phi_{XN'M'YAB} = (\tau_{XN'M'YAB}|A \ne \perp \wedge B \ne \perp)\enspace,$$ where Alices holds $XN'A$ and Bob holds $M'YB$. Let  $\gamma >0$. The goal of Alice and Bob is to maximize $\eta$ such that $$ \Pr(B \ne f(X,Y))_{\Phi}  \le \gamma.$$  
 \begin{definition}
     \label{def:eff} 
     Define:
\begin{align*}
\textrm{Error of $\Pi$ under $\mu$ on non-abort:}&\quad  \err(\Pi,\mu) \defeq  \Pr(B \ne f(X,Y))_{\Phi}. \\
 	\textrm{Efficiency of $\Pi$ under $\mu$:}&\quad  \eff(\Pi, \mu) \defeq \frac{1}{\eta} \\
	\textrm{Efficiency of $f$ under $\mu$:}&\quad \eff_\gamma(f,\mu) \defeq \min_{\Pi :  \err(\Pi,\mu) \leq \gamma} \eff(\Pi,\mu) \\
		\textrm{Efficiency of $f$:}&\quad  \eff_\gamma(f) \defeq  \max_{\mu}\eff_\gamma(f,\mu).
\end{align*}
 \end{definition}

\begin{fact}[Theorem 5 in \cite{SVJ12}] 
	\label{qcclowereff}
	For a function $f :  \X \times \Y   \to \Z$ and $\gamma > 0$,
	$$ \Q_{\gamma}(f) \geq \frac{1}{2}  \log (\eff_{\gamma}(f)) .$$
\end{fact}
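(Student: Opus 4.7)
The plan is to transform an optimal entanglement-assisted quantum protocol $\Pi$ for $f$ of cost $q = \Q_\gamma(f)$ and worst-case error at most $\gamma$ into a zero-communication protocol $\Pi'$ with abort such that (i) on non-abort, $\Pi'$ produces the same output distribution as $\Pi$ on every input pair and hence has error at most $\gamma$, and (ii) the non-abort probability is exactly $2^{-2q}$ on every input. Since both properties hold uniformly over inputs and hence over any input distribution $\mu$, one concludes $\eff_\gamma(f,\mu) \leq 2^{2q}$ for every $\mu$, so $\eff_\gamma(f) \leq 2^{2q}$, which rearranges to the claimed $\Q_\gamma(f) \geq \tfrac{1}{2}\log \eff_\gamma(f)$.

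The transformation I have in mind is the standard ``teleportation with guessing'' reduction. Recall that any $c_i$-qubit message in $\Pi$ can be implemented by pre-sharing $c_i$ EPR pairs, performing a Bell measurement on the $c_i$ qubits to be sent together with the sender's halves of the EPR pairs, and transmitting the resulting $2c_i$ classical outcome bits so the receiver can apply the appropriate Pauli correction. In $\Pi'$, Alice and Bob pre-share the same EPR pairs as input-independent entanglement and perform the same Bell measurements locally, but \emph{never send} the outcome bits; each party declares $\perp$ if any of its own Bell outcomes is non-zero and otherwise outputs whatever $\Pi$ would have produced. Conditioned on every Bell outcome being $0^{2c_i}$, the required Pauli correction is $\id$, so the receiver's EPR halves already hold exactly the teleported state, and by induction over the rounds the global joint state immediately before each local unitary of $\Pi'$ agrees with that of $\Pi$. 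Hence, conditioned on the event that all outcomes are zero, the final output distribution of $\Pi'$ matches that of $\Pi$ exactly, giving error at most $\gamma$ on non-abort.

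To finish, one computes the non-abort probability. A Bell measurement on one half of a maximally entangled pair produces a uniformly random $2$-bit outcome regardless of the other qubit, and, conditional on all earlier outcomes being zero (so that the state going into the next round is the same as in $\Pi$), the next Bell outcome is again uniform on $\{0,1\}^{2c_i}$. Chaining across rounds gives joint success probability $\prod_i 2^{-2c_i} = 2^{-2q}$ on every input, so $\Pi'$ fits Definition~\ref{def:eff} with $\eta = 2^{-2q}$ and error at most $\gamma$ on non-abort, yielding $\eff(\Pi',\mu) \leq 2^{2q}$ as required. The main delicate point is precisely this round-by-round conditional-uniformity argument: one must verify that conditioning on earlier zero-outcomes does not secretly bias later Bell outcomes, which follows because, conditioned on all-zero outcomes so far, the joint state in $\Pi'$ is identical to that in $\Pi$ just before the next message, so the next Bell measurement sees a fresh EPR pair whose outcome is uniform and independent of the register being teleported.
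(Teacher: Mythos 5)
The paper does not prove this statement: it is imported verbatim as Theorem~5 of the cited reference~\cite{SVJ12}, and no argument appears in the present manuscript. Your task is therefore to compare against that external reference rather than against an internal proof, and the teleportation-with-postselection argument you give is in fact the standard route to this inequality, so on the approach there is no divergence to report.

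The argument itself is sound. The key computation you flag as ``delicate'' is correct: for a $c$-qubit Bell measurement on $\rho_{N}\otimes(\id/2)^{\otimes c}$ (the sender's message qubits together with her EPR halves), each Bell projector traces out on the EPR side to $(\id/2)^{\otimes c}$, so the outcome is exactly uniform on $\{0,1\}^{2c}$ regardless of $\rho_N$ and of any entanglement between $N$ and the rest of the world. Conditioning on the all-zeros outcome moreover places the receiver's EPR halves in exactly the teleported state, so the global state agrees with that of $\Pi$ register-for-register, and the inductive step for the next round is legitimate. Since this holds input-by-input, the non-abort probability is exactly $2^{-2q}$ for every $(x,y)$ and the conditional error inherits $\Pi$'s worst-case error $\gamma$ under every prior $\mu$, giving $\eff(\Pi',\mu)=2^{2q}$ uniformly and hence $\eff_\gamma(f)\le 2^{2\Q_\gamma(f)}$ as required.

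One small gap worth closing explicitly: Definition~\ref{def:eff} charges error only against Bob's output register $B$, whereas in Definition~\ref{def:qcc} the output $O(x,y)$ of $\Pi$ is not tied to a specific party. Your construction needs $\Pi$ to output at Bob's side, so you should state that this is without loss of generality (swap the roles of Alice and Bob in $\Pi$ if necessary; this does not change $\QCC(\Pi)$ or its worst-case error). With that sentence added, the proof is complete.
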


\section{Inner-product is secure against $\qma$}\label{sec3:ip}
We show that the inner-product extractor of Chor and Goldreich~\cite{CG85} is secure against $\qma$ (with nearly the same parameters as that of classical adversary). More generally we show any $\X$-two-wise independent function (Definition~\ref{def:infoquant}~[\ref{2wisefunction}]) continues to be secure against $l \mhyphen\qma$. We first show that the security of $\X $-two-wise independent function against $l \mhyphen \qma$.

\begin{theorem}
	\label{modgame} Let $f : \X \times \Y \to \Z$ be a $\X $-two-wise independent function such that $\vert \X \vert = \vert \Y \vert$. 
		 
	\begin{enumerate}
		\item Let $\tau = \tau_{XX_1} \otimes \tau_{NM} \otimes \tau_{YY_1}$, where $\tau_{XX_1}$ is the canonical purification of $\tau_{X}$ (maximally mixed in $\X$), $\tau_{YY_1}$ is canonical purification of $\tau_{Y}$ (maximally mixed in $\Y$)  and $\tau_{NM}$ is a pure state. 
		\item Let $V_A : \cH_{X_1} \otimes \cH_{N} \rightarrow   \cH_{X_1} \otimes \cH_{N'} \otimes \cH_{A}$ be a (safe) isometry. Let $\rho = (V_A \tau V^\dagger_A~|~A=1).$
		\item  Let  $V_B :     \cH_Y \otimes \cH_{M} \rightarrow   \cH_{Y} \otimes \cH_{M'} \otimes \cH_{B}$ be a (safe) isometry. Let $\Theta = V_B \rho V^\dagger_B$ and $\Phi= (\Theta~|~B=1)$. 
		\item Let $Z= f(X, Y)\in \Z$ and $\eps \defeq \| \Phi_{Z Y M'} - U_Z  \otimes \Phi_{YM'} \|_1 $.
	\end{enumerate}
	Then 
	$$\hminn{X}{M}_\rho -\log \vert \Z \vert  +  \log \left(\Pr(B=1)_\Theta \right) \leq   2 \log \frac{1}{\eps}.$$
	Additionally if $\tau_M= \rho_M$, we further have,
	$$ \log \vert \X \vert - \log \vert \Z \vert  +  \log \left(\Pr(A=1, B=1)_{(V_A \otimes V_B)\tau(V_A \otimes V_B)^\dagger} \right) \leq   2 \log \frac{1}{\eps}.$$Symmetric results follow for a $\Y$-two-wise independent function $f : \X \times \Y \to \Z$ by exchanging $(N,A,X) \leftrightarrow (M,B,Y)$ above. 
\end{theorem}
\begin{proof}
From Fact~\ref{measurediso}, there exists operator $C_{YM}$ such that $0 \leq   C^\dagger_{YM}C_{YM}  \leq \id_{YM}$ and,
	\[  	\Phi_{ZXYM'} =     \frac{C_{YM}   \rho_{ZXYM}C_{YM}^{\dagger} }{\tr C_{YM}   \rho_{ZXYM}C_{YM}^{\dagger} } \quad  ; \quad  \Pr(B=1)_\Theta = \tr C_{YM}   \rho_{YM}C_{YM}^{\dagger}.\]  
This implies,
\[  	\Phi_{ZYM'} =     \frac{ C_{YM}   \rho_{ZYM}C_{YM} ^{\dagger} }{\tr C_{YM}   \rho_{ZYM} C_{YM}^{\dagger} }. \]
\suppress{
Let $\sigma_M \in \mathcal{D}(\cH_M)$ be a state that achieves the infimum in Definition~\ref{def:infoquant}~[\ref{def:condminentropy}] for state $\rho_{XM}$. Let $\sigma_{YM} =U_Y \otimes \sigma_M$.} Define $C \defeq \id_Z \otimes C_{YM}$ and $D \defeq \id_Z \otimes \rho_{YM}^{1/4}$.
	Consider (below we use cyclicity of trace, Fact~\ref{cyctrace}, several times without mentioning),
	\begin{align*}
		\|  CD \|^4_4 &=   \tr  (\id_Z \otimes \rho_{YM}^{1/4} C^\dagger_{YM}C_{YM}\rho_{YM}^{1/4} )^2           & \mbox{(Fact~\ref{fact:lp})}\\
		&=   \vert \Z \vert  \cdot \tr  ( \rho_{YM}^{1/2} C^\dagger_{YM}C_{YM}\rho_{YM}^{1/2} C^\dagger_{YM}C_{YM})         \\
		&\le    \vert \Z \vert  \cdot \|\rho_{YM}^{1/2} C^\dagger_{YM}C_{YM} \rho_{YM}^{1/2} \|_1 \| C^\dagger_{YM}C_{YM}  \|_\infty  & \mbox{(Fact~\ref{holders2})}   \\ 
		&\le    \vert \Z \vert  \cdot \|\rho_{YM}^{1/2} C^\dagger_{YM}C_{YM} \rho_{YM}^{1/2} \|_1  & \mbox{($ \|  C^\dagger_{YM}C_{YM} \|_\infty \le 1 $)}  \\
				&=    \vert \Z \vert  \cdot \tr(\rho_{YM}^{1/2} C^\dagger_{YM}C_{YM} \rho_{YM}^{1/2})   & \mbox{($\rho_{YM}^{1/2} C^\dagger_{YM}C_{YM} \rho_{YM}^{1/2}\geq 0$)}  \\
		&=    \vert \Z \vert  \cdot \tr(C_{YM} \rho_{YM} C^\dagger_{YM})   \\
		&=     \vert \Z \vert  \cdot \Pr(B=1)_\Theta. 
	\end{align*}Consider,\begin{align*}
		& \| \Phi_{ZYM'} - U_Z  \otimes \Phi_{YM'}  \|^2_1 \\
		& = \frac{1}{(\Pr(B=1)_\Theta)^2} \| C(\rho_{ZYM} - U_Z  \otimes \rho_{YM} )C^\dagger \|^2_1       \\                                                                                                  
		& = \frac{1}{(\Pr(B=1)_\Theta)^2}  \|  C D D^{-1}(\rho_{ZYM} - U_Z  \otimes \rho_{YM}) D^{-1}DC^\dagger \|^2_1                                                                                                                            &                             \\
		& \le \frac{1}{(\Pr(B=1)_\Theta)^2}  \|  CD  \|^4_4    \ \| D^{-1} (\rho_{ZYM} - U_Z  \otimes \rho_{YM}) D^{-1} \|^2_2                                                                                                                    & \mbox{(Fact~\ref{holders})} \\
		& \le \frac{\vert \Z \vert  \cdot \Pr(B=1)_\Theta}{ (\Pr(B=1)_\Theta)^2}   \cdot  \| D^{-1} (\rho_{ZYM} - U_Z  \otimes \rho_{YM}) D^{-1} \|^2_2   \\
		& \le \frac{\vert \Z \vert }{\Pr(B=1)_\Theta} \cdot \|(\id_Z  \otimes \rho_{YM}^{-1/4})(\rho_{ZYM} - U_Z  \otimes  \rho_{YM})(\id_Z  \otimes \rho_{YM}^{-1/4}) \|^2_2                   \\
		& = \frac{\vert \Z \vert }{\Pr(B=1)_\Theta} \cdot \mathbb{E}_{y \leftarrow \rho_Y} \left[ \|(\id_Z  \otimes \rho_{M}^{-1/4})(\rho^y_{ZM} - U_Z  \otimes  \rho_{M})(\id_Z  \otimes \rho_{M}^{-1/4}) \|^2_2    \right]                    \\
		& =  \frac{\vert \Z \vert }{\Pr(B=1)_\Theta} \cdot \mathbb{E}_{y \leftarrow \rho_Y} \left[ \tr \left((\id_Z  \otimes \rho_{M}^{-1/2})(\rho^y_{ZM} - U_Z  \otimes  \rho_{M}) \right)^2 \right]                     &                             \\
		& \le \ \frac{\vert \Z \vert }{\Pr(B=1)_\Theta}   \cdot 2^{- \hminn{X}{M}_\rho}.                                                                                                                                   & \mbox{(Fact~\ref{renato})}
	\end{align*}
	We get the first result by taking $\log$ on both sides and rearranging terms. 
	
	Now, let $\rho_M= \tau_M$. Noting $\tau_M= (V_A \tau V^\dagger_A)_{M}$ and using Fact~\ref{fact:minentropydecrease_on_measuremen},  with the following assignment of terms (below the terms on the left are from Fact~\ref{fact:minentropydecrease_on_measuremen} and the terms on the right are from this proof)
 $$(\rho_{ABC}, \hat{\rho}_{AB}) \leftarrow ( (V_A \tau V^\dagger_A)_{XMA}, \rho_{XM}),$$we get $$\hminn{X}{M}_\rho \ge \hminn{X}{M}_{(V_A \tau V^\dagger_A)}+ \log(\Pr(A=1)_{(V_A \tau V^\dagger_A)}) = \log \vert \X \vert  + \log(\Pr(A=1)_{(V_A \tau V^\dagger_A)}).$$We get the second result now by noting, 
$$\Pr(A=1)_{(V_A \tau V^\dagger_A)} \cdot \Pr(B=1)_\Theta =  \Pr(A=1, B=1)_{(V_A \otimes V_B)(\tau)(V_A \otimes V_B)^\dagger}.$$

\end{proof}

\begin{lemma}\label{lemma:simeverything1}Let $p=2^m$ and $n'= n \log p$. Let $\rho_{X \hat{X} N Y \hat{Y} M}$ be a pure state such that $\vert X \vert = \vert \hat{X} \vert= \vert Y \vert= \vert \hat{Y} \vert =n'$, $XY$ classical (with copies $\hat{X}\hat{Y}$ respectively) and 
\[\hminn{X}{Y\hat{Y}M}_\rho \geq k_1  \quad ; \quad \hminn{Y}{X\hat{X}N}_\rho \geq k_2. \]
 Let $Z= \IP^{n}_p(X,Y)$. Then $\rho$ is also an $l\mhyphen\qmas$ (see Definition~\ref{qmadv}) for some $l \leq 2n'- k_1 - k_2.$ Furthermore,
 \[\|\rho_{Z X N} - U_m  \otimes \rho_{XN} \|_1 \leq \eps \quad ; \quad \|\rho_{Z Y M} - U_m  \otimes \rho_{YM} \|_1 \leq \eps, \]for parameters 
$k_1+k_2 \geq n'+m+2 \log \left( \frac{1}{\eps} \right).$
 
\end{lemma}
\begin{proof}Simulation of the state $\rho$ in the model of $l\mhyphen\qma$ follows from Claim~\ref{claim:measure3}. Using Claim~\ref{claim:measure3},  with the following assignment of registers (below the registers on the left are from Claim~\ref{claim:measure3} and the registers on the right are the registers in this proof)
 $$(X, Y, X', Y', A, B, \Phi) \leftarrow (X, Y, \hat{X}, \hat{Y},  N, M, \rho),$$we have $$l  \leq  \log (2^{n'-k_1+n'-k_2}) = 2n'-k_1-k_2.$$

 Let $X \in \X, Y \in \Y$. Note that $\IP^{n}_p$ is a $\X$-two-wise independent function. Further state $\rho_{X \hat{X}NY\hat{Y}M}$ is an $l\mhyphen\qmas$ with $(\tau_M,\rho_M)$ in Theorem~\ref{modgame} equivalent to  $(\rho_{Y\hat{Y}M},\rho_{Y\hat{Y}M})$ in the simulation of the state $\rho$ using Claim~\ref{claim:measure3}. 
Let $\eps' \defeq 2\Delta( \rho_{Z Y M} \; ; \; U_Z  \otimes \rho_{YM}) $. Using Theorem~\ref{modgame},  
we get $\eps' \leq 2^{\frac{-(k_1+k_2-n'-m )}{2} }$. For the choice of  parameters 
$k_1+k_2 \geq n'+m+2 \log \left( \frac{1}{\eps} \right),$ we get $\eps' \leq \eps.$ Thus, 
$\|\rho_{Z Y M} - U_Z  \otimes \rho_{YM} \|_1 \leq \eps.$ Symmetric result follows noting $\IP^{n}_p$ is also a $\Y$-two-wise independent function.

\end{proof}

Using Lemma~\ref{lemma:simeverything1} as a key ingredient,~\cite{BJK21} showed that inner-product extractor (in fact a general class of $\X$-two-wise independent function (Definition~\ref{def:infoquant}~[\ref{2wisefunction}])) of Chor and Goldreich~\cite{CG85} is secure against $\qpas$. We state the result from~\cite{BJK21} as follows. 
\begin{fact}[\cite{BJK21}]\label{corr:iphminhminintro}
Let $p=2^m$ and $n'= n \log p$. Let $\rho_{X \hat{X} N Y \hat{Y} M}$ be a $(k_1,k_2)\mhyphen\qpas$ such that $\vert X \vert = \vert \hat{X} \vert= \vert Y \vert= \vert \hat{Y} \vert =n'$ and $XY$ classical (with copies $\hat{X}\hat{Y}$ respectively). Let $f : \X \times \Y \to \Z$ be a $\X $-two-wise independent function such that $\X  =  \Y = \mathbb{F}_p^{n}$, $\Z = \mathbb{F}_p$ and $(X,Y) \in (\X,\Y)$. Let $Z= f(X, Y)\in \Z$. Then, \[\|\rho_{Z Y M} - U_m  \otimes \rho_{YM} \|_1 \leq \eps,  \]for parameters 
$k_1+k_2 \geq n'+m+40+8 \log \left( \frac{1}{\eps} \right).$ 

Symmetric results follow for a $\Y$-two-wise independent function $f : \X \times \Y \to \Z$ by exchanging $(N,X) \leftrightarrow (M,Y)$ above.
\end{fact}

\begin{theorem}\label{corr:iphminhminintro2}
Let $p=2^m$ and $n'= n \log p$.  Let $\rho_{X \hat{X} N Y \hat{Y} M}$ be an $l\mhyphen\qmas$ such that $\vert X \vert = \vert \hat{X} \vert= \vert Y \vert= \vert \hat{Y} \vert =n'$ and $XY$ classical (with copies $\hat{X}\hat{Y}$ respectively). Let $f : \X \times \Y \to \Z$ be a $\X $-two-wise independent function such that $\X  =  \Y = \mathbb{F}_p^{n}$, $\Z = \mathbb{F}_p$ and $(X,Y) \in (\X,\Y)$. Let $Z= f(X, Y)\in \Z$. Then, \[ \|\rho_{Z Y M} - U_m  \otimes \rho_{YM} \|_1 \leq \eps,  \]for parameters 
$l \leq \left(n'-m-40+8 \log \left( \eps \right)\right)/2.$

Symmetric results follow for a $\Y$-two-wise independent function $f : \X \times \Y \to \Z$ by exchanging $(N,X) \leftrightarrow (M,Y)$ above.
\end{theorem}
\begin{proof}
    The proof follows from Fact~\ref{corr:iphminhminintro} after noting Fact~\ref{fact:lqmaisk1k2qma}.
\end{proof}

We also get a  tight efficiency (Definition~\ref{def:eff}) lower bound for a $\X$-two-wise independent function.
\begin{corollary}[Efficiency lower bound for a $\X$-two-wise independent function]\label{refipbound_new} Let function  $f : \X \times \Y \to \Z$ be a $\X$-two-wise independent function such that $\vert \X \vert = \vert \Y \vert$. Let $U$ be the uniform distribution on $\X \times \Y$.
		 For any $\gamma > 0$,
		 $$ \log \left( \eff_{\gamma}(f,U) \right) \geq  \frac{1}{2} \left(  \log \vert \X \vert - \log \vert \Z \vert -40+ 8 \log \left( 1-\gamma-\frac{1}{ \vert \Z \vert } \right) \right).$$
\end{corollary}
\begin{proof}
	Let the inputs $X \in \X$ and $Y \in \Y$ be given to Alice and Bob respectively according to distribution $U$. Consider an optimal zero-communication protocol $\Pi$ with error of protocol under $U$ on non-abort being $\gamma$. Let the state  shared between Alice and Bob after their local operations be $\tau_{XN'M'YAB}$. Let $\perp$ represent the abort symbol. Let $\Pr(A= \perp  \vee B= \perp)_\tau \leq 1- \eta,$ and  $$\Phi_{XN'M'YAB} = (\tau_{XN'M'YAB}|A \ne \perp \wedge B \ne \perp)\enspace,$$ where Alices holds $XN'A$ and Bob holds $M'YB$. We have,
\[ \Pr(B \ne f(X,Y))_{\Phi}  \le \gamma  \tab \implies \tab  \Pr(B = f(X,Y))_{\Phi}  \ge 1-\gamma  . \]
Let $\| \Phi_{BYM'}- U_{ \log \vert\Z\vert} \otimes \Phi_{YM'} \|_1 \defeq \eps $. This implies,   $1-\gamma \leq  \Pr(B = f(X,Y))_{\Phi}  \leq \frac{1}{\vert \Z \vert}+\eps$. Noting $A \ne \perp$ (here) as $A=1$ (in Definition~\ref{qmadv}), $B \ne \perp$ (here) as $B=1$ (in Definition~\ref{qmadv}), state $\Phi$ is an $l \mhyphen \qmas$ with $l =\log \left( \frac{1}{\Pr(A \ne \perp  \wedge B \ne \perp)_\tau}\right).$ Since, $\| \Phi_{BYM'}- U_{ \log \vert\Z\vert} \otimes \Phi_{YM'} \|_1 = \eps \geq 1-\gamma- \frac{1}{\vert\Z \vert} $, using Theorem~\ref{corr:iphminhminintro2} we have 
 $$ \log \left( \eff_{\gamma}(f,U) \right) \geq \log \left( \frac{1}{\Pr(A \ne \perp  \wedge B \ne \perp)_\tau}\right)   \geq  \frac{1}{2} \left( \log \vert \X \vert - \log \vert \Z \vert -40+8 \log \left( 1-\gamma-\frac{1}{ \vert \Z \vert } \right) \right)$$
 which gives the desired.

\end{proof}
	From Fact~\ref{qcclowereff}, and noting that the  (generalized) inner-product function is a $\X$-two-wise independent function (note $\X = \mathbb{F}^n_p$), we have,
\begin{corollary}\label{ipsecuritycorr:}	Let $\IP^n_p:  \mathbb{F}^n_p \times\mathbb{F}^n_p \to \mathbb{F}_p$ be defined as, 
	$$\IP_p^n(x,y) = \sum_{i=1}^{n}x_iy_i \mod p\enspace.$$  We have,
	$$ \Q_{\gamma}(\IP_p^n) \geq \frac{(n-1) \log p}{4} + \log \left(1-\gamma - \frac{1}{p}\right) -20\enspace.$$
\end{corollary}

\suppress{
\begin{fact}\label{lemma:nearby_rho_prime_prime} Let $\rho_{X \hat{X} N Y \hat{Y} M}$ be a $(k_1,k_2)\mhyphen\qpas$ such that $\vert X \vert = \vert \hat{X} \vert= \vert Y \vert= \vert \hat{Y} \vert = n \log p$. Let $f : \X \times \Y \to \Z$ be a $\X $-two-wise independent function such that $\X  =  \Y = \mathbb{F}_p^{n}$, $\Z = \mathbb{F}_p$ and $(X,Y) \in (\X,\Y)$. Let $Z= f(X, Y)\in \Z$. Then,  $ \| \rho_{Z Y M} - U_Z  \otimes \rho_{YM}\|_1 \leq 24\eps$ for parameters 
$k_1+k_2 \geq (n+1) \log p+4+8 \log \left( \frac{1}{\eps} \right).$

Symmetric results follow for a $\Y$-two-wise independent function $f : \X \times \Y \to \Z$ by exchanging $(N,X) \leftrightarrow (M,Y)$ above.
\end{fact}
\suppress{
\begin{proof}
For the ease of notation, let us denote $\tilde{A}= X \hat{X} N$ and $\tilde{B}= Y \hat{Y} M$.
Since, $\hmin{X}{\tilde{B}}_\rho \geq k_1$, using Fact~\ref{lem:hmin_and_tilde_relation} (on state $\rho_{\tilde{B}X}$) with the assignment of registers $(A,B)  \leftarrow (X,\tilde{B})$, we know that there exists a state $\rho^\prime_{\tilde{B}X}$, such that
    \begin{equation} \label{rho_prime_bound}
        \Delta_B\left(\rho_{\tilde{B}X}; \rho^\prime_{\tilde{B}X}\right) \leq \eps \quad; \quad
    \dmax{\rho^\prime_{X\tilde{B}}}{U_{ X } \otimes \rho^\prime_{\tilde{B}}} \leq n\log p-k_1+ 2 \log \left( \frac{1}{\eps} \right)  \defeq c_1. 
    \end{equation}Consider a purification of $\rho^{\prime}_{\tilde{B}X}$ denoted as $\rho^{\prime}_{\tilde{B}XE}$. 
Using Fact~\ref{uhlmann} with the following assignment of registers,
\[\left( \sigma_{A}, \rho_{A}, \sigma_{AC}, \rho_{AB}, \theta_{AB} \right) \leftarrow \left(\rho^{\prime}_{\tilde{B}X}, \rho_{\tilde{B}X},  \rho^{\prime}_{\tilde{B}XE}, \rho_{\tilde{A}\tilde{B}}, \rho^{\prime}_{\tilde{A}\tilde{B}}\right),\] 
there exists a state $ \rho^{\prime}_{\tilde{A} \tilde{B}}$ such that,
 \begin{equation} \label{eq:dmax_rho_1}
\Delta_B \left( \rho^{\prime}_{\tilde{A} \tilde{B}} ; \rho_{\tilde{A}\tilde{B}} \right) \leq {\eps}  \quad;\quad \dmax{\rho^{\prime}_{X\tilde{B}}}{U_{X} \otimes \rho^\prime_{\tilde{B}}} \leq c_1,
\end{equation}
where the inequalities follow from Eq.~\eqref{rho_prime_bound} and noting that isometry taking $\rho^{\prime}_{\tilde{B}XE}$ to $\rho^{\prime}_{\tilde{A}\tilde{B}}$ acts trivially on registers $\tilde{B}X$.

Also, since $\hmin{Y}{\tilde{A}}_\rho \geq k_2$, using above argument involving Fact~\ref{lem:hmin_and_tilde_relation} and  Fact~\ref{uhlmann}, we can conclude there exists a state $ \rho^{\prime\prime}_{\tilde{A} \tilde{B}}$ such that,
 \begin{equation} \label{eq:dmax_rho_2}
\Delta_B \left( \rho^{\prime\prime}_{\tilde{A} \tilde{B}} ; \rho_{\tilde{A}\tilde{B}} \right) \leq {\eps}  \quad;\quad \dmax{\rho^{\prime\prime}_{Y\tilde{A}}}{U_{Y} \otimes \rho^{\prime\prime}_{\tilde{A}}} \leq n\log p-k_2+2\log \left( \frac{1}{\eps} \right) \defeq c_2.
\end{equation}

 Consider the following state:
\[ \theta = \tau_{X \hat{X}} \otimes \rho^\prime_{\tilde{A}\tilde{B}} \otimes \tau_{Y_1 \hat{Y}_1} \]
where $\tau_{X\hat{X}}, \tau_{Y_1\hat{Y_1}}$ are canonical purifications of $\tau_{X} \equiv U_{X}$, $\tau_{Y_1} \equiv U_{Y}$ respectively. Let Alice hold registers $ \tilde{A} \hat{X}$, Bob hold registers $\tilde{B} \hat{Y_1}$ and Referee hold registers $XY_1$. Now using Fact~\ref{fact:rejectionsampling} with the following assignment of registers (below the registers on the left are from Fact~\ref{fact:rejectionsampling} and the registers on the right are the registers in this proof)
    \[\left(\rho_B, \sigma_B, \rho_{A^{\prime}B}, \sigma_{AB} \right)
    \leftarrow 
    \left( \rho^\prime_{X \tilde{B}} ,  \tau_{X} \otimes \rho^\prime_{\tilde{B}}, \rho^\prime_{X\hat{X}N\tilde{B}},\tau_{X \hat{X}} \otimes \rho^\prime_{\tilde{A} \tilde{B}} \right), \]
it follows from Fact~\ref{fact:rejectionsampling} that there exists an isometry $V_{Alice}: \mathcal{H}_{\tilde{A}\hat{X}} \rightarrow \mathcal{H}_{\hat{X}N} \otimes \mathcal{H}_{C_A}$  such that the following hold:  
\begin{align}
   & \phi_{\tilde{B}X \hat{X} N C_A} = \left( V_{Alice} \otimes \mathbb{I}_{X \tilde{B}} \right) \left( \rho^\prime_{\tilde{A} \tilde{B}} \otimes \tau_{X \hat{X}}\right) \left(V_{Alice} \otimes \mathbb{I}_{X \tilde{B}} \right)^\dagger. \label{eq:Alice_Set_1}\\
    &\Pr\left( C_A=1 \right)_{\phi}= p_1 \geq 2^{-c_1} \label{eq:Alice_Set_2}\\
    &\left(\phi \vert C_A=1\right)= \rho^\prime_{\tilde{A} \tilde{B}}\label{eq:Alice_Set_3}.
    \end{align}
Thus starting from state $\theta$, there exists an isometry $V_{Alice}$ (acting solely on Alice's registers) followed by measuring $C_A$, to get a state which we will denote as $\theta^{(1)}$.
Hence, we get the following:
\begin{align}
    &\phi^{(1)}_{\tilde{B}X \hat{X} N C_A Y_1 \hat{Y}_1} = \left( V_{Alice} \otimes \mathbb{I}_{X \tilde{B} Y_1 \hat{Y}_1} \right) \theta \left(V_{Alice} \otimes \mathbb{I}_{X \tilde{B} Y_1 \hat{Y}_1} \right)^\dagger \label{eq:Alice_Set_4}\\
    &\Pr\left( C_A=1 \right)_{\phi^{(1)}}= p_1 \geq 2^{- c_1} \label{eq:Alice_Set_5}\\
    &\theta^{(1)}= \left(\phi^{(1)} \vert C_A=1\right)= \rho^\prime_{\tilde{A} \tilde{B}} \otimes \tau_{Y_1 \hat{Y}_1}\label{eq:Alice_Set_6}.
    \end{align}
Note that Eq.~\eqref{eq:Alice_Set_4}-\eqref{eq:Alice_Set_6} additionally contain $\tau_{Y_1 \hat{Y}_1}$ when compared to Eq.~\eqref{eq:Alice_Set_1}-\eqref{eq:Alice_Set_3}. But as the isometry acts trivially on $\tau_{Y_1 \hat{Y}_1}$, they follow trivially from Eq.~\eqref{eq:Alice_Set_1}-\eqref{eq:Alice_Set_3}. 

\noindent Using Eq.~\eqref{eq:dmax_rho_1}~and Eq.~\eqref{eq:dmax_rho_2} along with triangle inequality, we have $\Delta_B \left( \rho^{\prime\prime}_{\tilde{A} \tilde{B}} ; \rho^\prime_{\tilde{A}\tilde{B}} \right) \leq 2{\eps}.$ Using Fact~\ref{data}, we further have $\Delta_B \left( \rho^{\prime\prime}_{\tilde{A} } ; \rho^\prime_{\tilde{A}} \right) \leq 2{\eps}.$  From Eq.~\eqref{eq:dmax_rho_2}, we also have 
\begin{equation} 
 \dmax{\rho^{\prime\prime}_{Y\tilde{A}}}{U_{Y} \otimes \rho^{\prime\prime}_{\tilde{A}}} \leq  c_2.
\end{equation}

Now,  using Fact~\ref{fact:substate_perturbation} with the following assignment,
    \[ \left( \sigma_{XB}, \psi_X, \rho_B, \rho^{\prime}_{XB}, c,  \delta_0, \delta_1
    \right) \leftarrow \left( \rho^{\prime\prime}_{Y\tilde{A}} , U_{Y}, \rho^{\prime}_{\tilde{A}}, \rho^{(0)}_{Y\tilde{A}},c_2, {\eps}, 2\eps \right)\]
    there exists a state $\rho^{(0)}_{\tilde{A}Y}$ such that,
    \begin{equation*}
        \Delta_B\left(\rho^{(0)}_{\tilde{A} Y} ; \rho^{\prime \prime}_{\tilde{A}Y} \right) \leq  {3\eps}  \quad; \quad \rho^{(0)}_{\tilde{A} Y} \leq 2^{c_2+1} \left(1+ \frac{4}{\eps^2}\right) \cdot (U_{Y} \otimes \rho^\prime_{\tilde{A}}) \leq 2^{c'} \cdot (U_{Y} \otimes \rho^\prime_{\tilde{A}}),
    \end{equation*} 
where $c' \defeq c_2+4+2 \log\left( \frac{1}{\eps}\right)$. Using Eq.~\eqref{eq:dmax_rho_2} (along with Fact~\ref{data} and triangle inequality) and above, we get,
\begin{equation} \label{eq:rho_0_bounds}
\Delta_B \left( \rho^{(0)}_{\tilde{A} Y} ; \rho_{\tilde{A}Y} \right) \leq { 4 \eps} \quad;\quad \rho^{(0)}_{\tilde{A} Y} \leq  2^{c'} \cdot (U_{ Y } \otimes \rho^\prime_{\tilde{A}}).
\end{equation}
Consider a purification of $\rho^{(0)}_{\tilde{A}Y}$ denoted as $\rho^{(0)}_{\tilde{A}YE}$. 
Using Fact~\ref{uhlmann} with the following assignment of registers,
\[\left( \sigma_{A}, \rho_{A}, \sigma_{AC}, \rho_{AB}, \theta_{AB} \right) \leftarrow \left(\rho^{(0)}_{\tilde{A}Y}, \rho_{\tilde{A}Y},  \rho^{(0)}_{\tilde{A}YE}, \rho_{\tilde{A}\tilde{B}}, \rho^{(1)}_{\tilde{A} \tilde{B}}\right),\] 
there exists a state $ \rho^{(1)}_{\tilde{A} \tilde{B}}$ such that,\begin{equation*} 
\Delta_B \left( \rho^{(1)}_{\tilde{A} \tilde{B}} ; \rho_{\tilde{A}\tilde{B}} \right) \leq { {4}\eps}  \quad;\quad \dmax{\rho^{(1)}_{\tilde{A}Y}}{U_{Y} \otimes \rho^\prime_{\tilde{A}}} \leq c^\prime,
\end{equation*}
where the inequalities follow from Eq.~\eqref{eq:rho_0_bounds} and noting that isometry taking $\rho^{(0)}$ to $\rho^{(1)}$ acts trivially on registers $\tilde{A}Y$. Consider Fact~\ref{fact:rejectionsampling} with the following assignment of registers,
\[\left(  \rho_B, \sigma_B, \rho_{A'B}, \sigma_{AB}\right) \leftarrow \left( \rho^{(1)}_{\tilde{A}Y}, \rho^\prime_{\tilde{A} } \otimes U_{Y}, \rho^{(1)}_{\tilde{A}\tilde{B}}, \rho^\prime_{\tilde{A} \tilde{B}} \otimes \tau_{Y_1 \hat{Y}_1} \right).\] 
From Fact~\ref{fact:rejectionsampling}, there exists an isometry $V_{Bob}: \mathcal{H}_{ \tilde{B} Y_1} \rightarrow \mathcal{H}_{MY_1} \otimes \mathcal{H}_{C_B}$ such that the following hold:
\begin{align}
&\phi^{(2)}_{\tilde{A}M\hat{Y}_1Y_1C_B} = \left(V_{Bob} \otimes \mathbb{I}_{\tilde{A}\hat{Y}_1} \right) \theta^{(1)} \left(V_{Bob} \otimes \mathbb{I}_{\tilde{A} \hat{Y}_1 } \right)^\dagger \label{eq:Bob_Set_1}\\
&\Pr\left(C_B=1\right)_{\phi^{(2)}} =p_2 \geq 2^{-c^\prime}\label{eq:Bob_Set_2}\\
&\rho^{(1)}_{\tilde{A}\tilde{B}} \equiv \left(\phi^{(2)} \vert C_B=1\right)\label{eq:Bob_Set_3}
\end{align}
For the ease of notation, let us set $\zeta= \left( V_{Alice} \otimes V_{Bob} \right) \theta   \left( V_{Alice} \otimes V_{Bob} \right)^\dagger$.
From Eq.~\eqref{eq:Alice_Set_4}-\eqref{eq:Alice_Set_6} and Eq.~\eqref{eq:Bob_Set_1}-\eqref{eq:Bob_Set_3}, it follows that, 
\begin{align*}
& \rho^{(1)}_{\tilde{A}\tilde{B}} \equiv \left(\zeta  \vert C_A=1, C_B=1 \right) & \mbox(\text{From Eq.~\eqref{eq:Alice_Set_4},\eqref{eq:Alice_Set_6},\eqref{eq:Bob_Set_1} and \eqref{eq:Bob_Set_3}})\\ 
&\Pr\left( C_A=1, C_B=1\right)_{\zeta} \geq  2^{- c_1} 2^{-c^\prime} & \mbox(\text{From Eq.~\eqref{eq:Alice_Set_2} and \eqref{eq:Bob_Set_2}}).
\end{align*}

Let $\eps' \defeq 2\Delta( \rho^{(1)}_{Z Y M} \; ; \; U_Z  \otimes \rho^{(1)}_{YM}) $. Note that the state $\rho^{(1)}_{X \hat{X}NY\hat{Y}M}$ is an $l\mhyphen\qmas$ (with $(\tau_M,\rho_M)$ in Theorem~\ref{modgame} equivalent to  $(\theta_{\tilde{B}},\theta^{(1)}_{\tilde{B}})$ in this proof) with $\theta_{\tilde{B}}=\theta^{(1)}_{\tilde{B}} = \rho^\prime_{\tilde{B}}$ for \[ l= \log \left( \frac{1}{p_1 \cdot p_2}\right)  \leq {n\log p-k_1 +n\log p - k_2+ 4 + 6 \log \left( \frac{1}{\eps}\right)} \quad ; \quad  \Delta_B(\rho^{(1)}, \rho) \leq {4}\eps.\]
Using Theorem~\ref{modgame},  
we get $\eps' \leq 2^{\frac{-(k_1+k_2-(n+1)\log p-4-6 \log \left( \frac{1}{\eps}\right) )}{2} }$. For the choice of  parameters 
$k_1+k_2 \geq (n+1)\log p+4+8 \log \left( \frac{1}{\eps} \right),$ we get $\eps' \leq \eps.$ Since
$ \Delta_B(\rho^{(1)}; \rho) \leq {4}\eps,$
using Fact~\ref{fidelty_trace}, we get $ \Delta(\rho^{(1)}; \rho) \leq {4 \sqrt{2}}\eps.$
Noting $\Delta( \rho^{(1)}_{Z Y M} \; ; \; U_Z  \otimes \rho^{(1)}_{YM}) \leq \eps/2$ and 
using Fact~\ref{claim:traingle_rho_rho_prime}, we get $ \Delta( \rho^{}_{Z Y M} \; ; \; U_Z  \otimes \rho^{}_{YM}) \leq {8 \sqrt{2}}\eps+\eps/2 \leq 12 \eps.$

\end{proof}
}
}
\begin{corollary}\label{corr:iphminhmin}
Let $p=2^m$ and $n'= n \log p$.  Let $\rho_{X \hat{X} N Y \hat{Y} M}$ be a $(k_1,k_2)\mhyphen\qpas$ such that $\vert X \vert = \vert \hat{X} \vert= \vert Y \vert= \vert \hat{Y} \vert =n'$, $XY$ classical (with copies $\hat{X}\hat{Y}$ respectively). Let $Z= \IP^{n}_p(X,Y)$. Then, \[\|\rho_{Z X N} - U_m  \otimes \rho_{XN} \|_1 \leq \eps \quad ; \quad \|\rho_{Z Y M} - U_m  \otimes \rho_{YM} \|_1 \leq \eps,  \]for parameters 
$k_1+k_2 \geq n'+m+40+8 \log \left( \frac{1}{\eps} \right).$
\end{corollary}

\begin{proof}
Let $X \in \X, Y \in \Y$ such that $\X =\Y=\mathbb{F}_p^n$.
    The result follows from Fact~\ref{corr:iphminhminintro} and noting that $\IP^{n}_p$ is both $\X$-two-wise independent function and $\Y$-two-wise independent function.
\end{proof}

\begin{corollary}\label{corr:wqma}
Let $p=2^m$ and $n'= n \log p$. Let $\sigma_{XEY} = \sigma_{XE} \otimes \sigma_Y$ such that $\hmin{X}{E}_\sigma \geq k_1$,  $\Hmin(Y)_\sigma \geq k_2$ and $\vert X \vert=\vert Y \vert ={n'}$. Let $Z= \IP^{n}_p(X,Y)$. Then,
$$\Vert \sigma_{ZXE}-U_m \otimes \sigma_{XE} \Vert_1 \leq \eps,$$
for parameters $k_1+k_2   \geq  n'+m+40+ 8 \log \left(\frac{1}{\eps}\right).$
\end{corollary}
\begin{proof}
Consider
$\sigma_{X\hat{X}\hat{E} EY\hat{Y}} =\sigma_{X\hat{X}\hat{E} E} \otimes \sigma_{Y\hat{Y}}$, the purification of $\sigma_{XEY}$ such that $\sigma_{X\hat{X}\hat{E} E}$ is the canonical purification of $\sigma_{XE}$ and $\sigma_{Y\hat{Y}}$ is the canonical purification of $\sigma_{Y}$. Note $\sigma$ is a $(k_1,k_2)\mhyphen\qpas$. Now, the result follows from Corollary~\ref{corr:iphminhmin}.\suppress{Thus, from Lemma~\ref{lem:weakqma}, we have $\sigma$ is an $l \mhyphen \qmas$ for $l \leq 2n'-k_1-k_2.$

Note $\IP^{n}_p$ is both $\X$-two-wise independent function and $\Y$-two-wise independent function (note $\X =\Y= \mathbb{F}^n_p$). Let  $ \Vert \sigma_{ZX\hat{E}} -U_m \otimes \sigma_{X\hat{E}} \Vert_1 = \eps'$. From Theorem~\ref{thm1-intro}, we have $n'-m-l \leq 2 \log \left(\frac{1}{\eps'}\right)$.
     Since $l \leq 2n'-k_1-k_2,$ we further have $\eps' \leq 2^{\frac{n'+m-k_1-k_2}{2}}.$ For the choice of parameters $k_1+k_2   \geq  n'+m+ 2 \log \left(\frac{1}{\eps}\right),$ we get  $ \Vert \sigma_{ZX\hat{E}} -U_m \otimes \sigma_{X\hat{E}} \Vert_1 = \eps' \leq \eps.$ 
     Noting $\sigma_{XE} =\sigma_{X\hat{E}}$, we have the desired.}
\end{proof}

\section{$\qma$ can simulate other adversaries}\label{sec:prevext}

We show how $\qma$ can simulate all adversaries known in the literature and $\qca$ as well. By simulation we mean that the quantum side information of an adversary can be generated by $\qma$. More precisely, 
\begin{definition}[Simulation]\label{def:simulation}
    Let $\rho_{XEY}$ be the final state generated appropriately in the adversary model, adversary holds $\rho_E$. We say $\rho_{XEY}$ can be simulated by $\qma$ for parameter $l$, if there exists an $l \mhyphen \qmas,$ $\Phi_{X \hat{X}N MEY\hat{Y}}$~($\qma$ gets registers $EM$~\footnote{This amounts to giving more quantum side information ($ME$) than  other adversary model provide ($E$).}) and $\Phi_{XYE} =\rho_{XYE}$. Analogously, we say $\rho_{XEY}$ can be simulated $\eps \mhyphen$approximately by $\qma$ for parameter $l$, if there exists an $l \mhyphen \qmas,$ $\Phi_{X \hat{X}N MEY\hat{Y}}$ and $\Phi_{XYE}  \approx_{\eps}\rho_{XYE}$. 
\end{definition}

\begin{theorem}\label{thm2-intro}
	Quantum side information of $(b_1, b_2)$-$\qbsa$ (see Definition~\ref{qbsadv}), $(k_1, k_2)$-$\qia$ (see Definition~\ref{qiadv}), $(k_1, k_2) $-$\gea$ (see Definition~\ref{geadv}) acting on an $(n,k'_1,k'_2)$-source can be simulated by an $l\mhyphen\qma$ for some $l\leq 2\min \{b_1,b_2 \} +2n-k'_1-k'_2$, $l \leq 2n-k_1-k_2$, $l \leq 2n-k_1-k_2$ respectively.
	
	Quantum side information of  $(k_1, k_2) $-$\qmra$ (see Definition~\ref{madv}) and $(k_1, k_2) $-$\qca$ (see Definition~\ref{def:qcadv}) can be simulated $\eps$-approximately by an $l\mhyphen\qma$ for some $l \leq 2n-k_1-k_2+25+6 \log(1/\eps) $, $l \leq 2n-k_1-k_2+25+6 \log(1/\eps) $  respectively.
\end{theorem}
\suppress{
\begin{theorem}\label{thm2-intro}
		Quantum side information of any of $(b_1, b_2) $-$\qbsa$ (see Definition~\ref{qbsadv}), $(k'_1, k'_2)$-$\qia$ (see Definition~\ref{qiadv}), $(k'_1, k'_2) $-$\gea$ (see Definition~\ref{geadv}), $(k'_1, k'_2) $-$\qca$ (see Definition~\ref{def:qcadv}) acting on an $(n,k_1,k_2)$-source can be simulated by an $l\mhyphen\qma$ for some $l\leq \min \{b_1+2n-k_1-k_2,b_2+2n-k_1-k_2 \}$, $l \leq 2n-k'_1-k'_2$, $l \leq 2n-k'_1-k'_2$, $l \leq 2n-k'_1-k'_2$ respectively.
	
	Quantum side information of  $(k'_1, k'_2) $-$\qmra$ (see Definition~\ref{madv}) acting on an $(n,k_1,k_2)$-source can be simulated $\eps$-approximately by an $l\mhyphen\qma$ for some $l \leq 2n-k'_1-k'_2+16+4 \log(1/\eps) $ respectively.
\end{theorem}}
\begin{proof}
	The proof follows from Claims~\ref{sim1},~\ref{sim2},~\ref{sim3},~\ref{sim4} and~\ref{sim6}.
\end{proof}

%

\subsection{Quantum bounded storage adversary}\label{qqbsadv}
Kasher and Kempe~\cite{KK10} introduced the quantum bounded storage adversary ($\qbsa$) model, where the adversary obtains quantum side information of bounded memory from both sources. 
\begin{definition}[$(b_1, b_2) $-$\qbsa$~\cite{KK10}]\label{qbsadv} 
	Let $\tau_{X\hat{X}}$, $\tau_{Y\hat{Y}}$ be the canonical purifications of independent sources $X,Y$ respectively (registers $\hat{X}\hat{Y}$ with Reference). 
	\begin{enumerate}
		\item  Alice and Bob hold $X,Y$ respectively. They also share an entangled pure  state $\phi_{NM}$ (Alice holds $N$, Bob holds $M$).
		
		\item Alice applies a CPTP map $\psi_A : \mathcal{L} (\cH_{X} \otimes \cH_{N}) \rightarrow   \mathcal{L}(\cH_{X} \otimes \cH_{N'})$ and Bob applies a CPTP map $\psi_B :    \mathcal{L} (\cH_Y \otimes \cH_{M}) \rightarrow   \mathcal{L}(\cH_{Y} \otimes \cH_{M'})$. Let $$\rho_{X\hat{X}N'M'Y\hat{Y}} = (\psi_A \otimes \psi_B) (\tau_{X\hat{X}} \otimes \phi_{NM} \otimes \tau_{Y\hat{Y}}).$$

		\item  Adversary gets access to $\rho_{N'M'}$ with $ \log \dim(\cH_{N'}) \leq b_1$, $ \log \dim(\cH_{M'}) \leq b_2$ respectively.
	\end{enumerate}
\end{definition}
We show how to simulate a $(b_1, b_2)$-$\qbsa$ in the model of an $l\mhyphen\qma$.
\begin{claim}
	\label{sim1}
	A $(b_1, b_2) $-$\qbsa$ acting on an $(n,k_1,k_2)$-source can be simulated by an $l\mhyphen\qma$ for some $l \leq 2\min \{b_1,b_2 \} +2n-k_1-k_2$.
\end{claim}
\begin{proof}Let $V_A: \mathcal{H}_X \otimes \mathcal{H}_{N} \rightarrow \mathcal{H}_{X} \otimes \mathcal{H}_{N'} \otimes \mathcal{H}_{\hat{N'}}$, $V_B: \mathcal{H}_Y \otimes \mathcal{H}_{M} \rightarrow \mathcal{H}_{Y} \otimes \mathcal{H}_{M'} \otimes \mathcal{H}_{\hat{M'}}$ be the Stinespring isometry extensions of CPTP maps $\psi_A$, $\psi_B$ respectively i.e. $\psi_A(\theta)=  \tr_{\hat{N'}  }(V_A \theta V_A^\dagger)$ for every state $\theta_{XN}$ and  $\psi_B(\theta)=  \tr_{\hat{M'}  }(V_B \theta V_B^\dagger)$ for every state $\theta_{YM}$. Let
	$$\rho^A_{X\hat{X}N'\hat{N'} M} = (V_A \otimes \id) (\tau_{X\hat{X}}  \otimes \phi_{NM})(V_A \otimes \id)^\dagger,$$  $$\rho_{X\hat{X}N'\hat{N'} M'\hat{M'} Y\hat{Y}} = (V_A \otimes V_B) (\tau_{X\hat{X}} \otimes \phi_{NM} \otimes \tau_{Y\hat{Y}})(V_A \otimes V_B)^\dagger.$$Note $\rho^A_{\hat{X}N' M}$ is such that $\rho^A_{\hat{X} M} = \tau_{\hat{X}} \otimes \phi_M.$ Thus, from Fact~\ref{fact:boundnew}, we have $\imax(\hat{X}:N' M)_{\rho^A} \leq 2\log ( \dim(\cH_{N'}))  \leq  2b_1$. Let $\sigma_{N'M}$ be such that 
	\[ \dmax{\rho^A_{\hat{X}N' M}}{ \rho^A_{\hat{X} } \otimes \sigma_{N'M}}  \leq  2b_1 .\]Also, note $\rho^A_{\hat{X}} = \tau_{\hat{X}}  \leq 2^{-k_1} \id_{\hat{X}}.$ The inequality follows since the min-entropy of $\tau_X$ is at least $k_1$ and $\tau_{\hat{X}}$ is canonical purification of $\tau_X$. Thus, we further have 
	\[ \dmax{\rho^A_{\hat{X}MN'}}{ \id_{\hat{X}} \otimes \sigma_{MN'}  } \leq  2b_1-k_1 .\]
	Thus $\hmin{X}{N'M}_{\rho^A} =\hmin{\hat{X}}{N'M}_{\rho^A}\geq k_1-2b_1$. Note the first equality is because $\rho^A_{XN'M} =\rho^A_{\hat{X}N'M}$. Using Fact~\ref{fact102}, we further have 
	$$\hmin{X}{N'M'\hat{M'} Y\hat{Y}}_{\rho} \geq \hmin{X}{N'M}_{\rho^A} \geq k_1-2b_1.$$Note $ \dmax{\rho_{\hat{Y}X\hat{X} \hat{N'}}}{ U_{\hat{Y}} \otimes \rho_{X\hat{X}\hat{N'}}} \leq  n-k_2 $ since $\rho_{\hat{Y}X\hat{X}\hat{N'}} = \tau_{\hat{Y} } \otimes  \rho_{X\hat{X}\hat{N'}}$, the min-entropy of $\tau_Y$ is at least $k_2$ and $\tau_{\hat{Y}}$ is canonical purification of $\tau_Y$. Thus $\hminn{Y}{X\hat{X}\hat{N'} }_\rho =\hminn{\hat{Y}}{X\hat{X}\hat{N'} }_\rho \geq k_2$. Note the first equality is because $\rho_{YX\hat{X}\hat{N'}} =\rho_{\hat{Y}X\hat{X}\hat{N'}}$.
	
	Simulation of the state $\rho$ in the model of $l\mhyphen\qma$ follows from Lemma~\ref{lemma:simeverything}. Using Lemma~\ref{lemma:simeverything},  with the following assignment of registers (below the registers on the left are from Lemma~\ref{lemma:simeverything} and the registers on the right are the registers in this proof)
 $$(X, Y, \hat{X}, \hat{Y},  N, M, \rho) \leftarrow (X, Y, \hat{X}, \hat{Y}, \hat{N'}, N'M'\hat{M'}, \rho),$$we have $l \leq   2n+2b_1-k_1-k_2.$

	A similar argument can be given by exchanging the roles of Alice and Bob. The desired follows.
\end{proof}

As a corollary, we reproduce the security of one-bit output  inner-product against $(b_1,b_2)\mhyphen \qbsa$ acting on $(n,k_1,k_2)$-source from~\cite{KK10} as follows.
\begin{corollary}[\cite{KK10}]
An $(n,n,1)$-$2$-source extractor $\IP^n_2 : \{0,1\}^n \times \{0,1\}^n \to \{0,1\}$ is   $(b_1,b_2,\eps)$-quantum secure against $\qbsa$ on $(n,k_1,k_2)$-source for parameters $k_1+k_2- 2 \min{ \{b_1,b_2\}} \geq n+41 +8 \log(1/\eps).$

\end{corollary}
\begin{proof}
The proof follows from Claim~\ref{sim1} (noting $l \mhyphen \qma$ is also a $(k_1-2b_1,k_2) \mhyphen \qpa$ or $(k_1,k_2-2b_2) \mhyphen \qpa$) and Corollary~\ref{corr:iphminhmin}.
\end{proof}

%

\subsection{Quantum independent adversary}\label{qqiadv}
Kasher and Kempe~\cite{KK10} introduced the  quantum independent adversary ($\qia$) model, where the adversary obtains independent side information from both sources. 
\begin{definition}[$(k_1, k_2)$-$\qia$,~$(k_1,k_2)\mhyphen\qias$~\cite{KK10}]\label{qiadv} 
	Let $\tau_{X\hat{X}}$, $\tau_{Y\hat{Y}}$ be the canonical purifications of independent sources $X,Y$ respectively (registers $\hat{X}\hat{Y}$ with Reference). 
	\begin{enumerate}
		\item  Alice and Bob hold $X,Y$ respectively. They also share a product state $\phi_{NM}= \phi_{N} \otimes \phi_{M}$ (Alice holds $N$ and Bob holds $M$).
		\item Alice applies CPTP map $\psi_A : \mathcal{L} (\cH_{X} \otimes \cH_{N}) \rightarrow   \mathcal{L}(\cH_{X} \otimes \cH_{N'})$ and Bob applies CPTP map $\psi_B :    \mathcal{L} (\cH_Y \otimes \cH_{M}) \rightarrow   \mathcal{L}(\cH_{Y} \otimes \cH_{M'})$. Let \[ \rho_{X\hat{X}N'M'Y\hat{Y}} = (\psi_A \otimes \psi_B) (\tau_{X\hat{X}} \otimes \phi_{NM} \otimes \tau_{Y\hat{Y}}) \quad \quad  ;  \quad \quad  \rho_{X\hat{X}N'M'Y\hat{Y}}  = \rho_{X\hat{X}N'}  \otimes \rho_{M'\hat{Y}Y}, \]
		with $\hmin{X}{N'}_\rho \geq k_1$ and $\hmin{Y}{M'}_\rho \geq k_2$.
		\item  Adversary gets access to $\rho_{N'M'}$. The state $\rho$ is called a $(k_1,k_2)\mhyphen\qias$. 
	\end{enumerate}
\end{definition}
We show how to simulate  a $(k_1, k_2) $-$\qia$ in the model of an $l\mhyphen\qma$.
\begin{claim}
	\label{sim2}
	A $(k_1, k_2) $-$\qias$ is an $l\mhyphen\qmas$ for some $l \leq 2n-k_1-k_2$.
\end{claim}
\begin{proof}

	Let $V_A: \mathcal{H}_X \otimes \mathcal{H}_{N} \rightarrow \mathcal{H}_{X} \otimes \mathcal{H}_{N'} \otimes \mathcal{H}_{\hat{N'}}$, $V_B: \mathcal{H}_Y \otimes \mathcal{H}_{M} \rightarrow \mathcal{H}_{Y} \otimes \mathcal{H}_{M'} \otimes \mathcal{H}_{\hat{M'}}$ be the Stinespring isometry extensions of CPTP maps $\psi_A$, $\psi_B$ respectively. Let $\phi_{N \hat{N} M \hat{M}} =\phi_{N \hat{N} } \otimes \phi_{M \hat{M} }$ be the purification of $\phi_{NM}$. Let $$\rho_{X\hat{X}N'\hat{N'} \hat{N} M'\hat{M'}\hat{M} Y\hat{Y}} = (V_A \otimes V_B) (\tau_{X\hat{X}} \otimes \phi_{N \hat{N} M \hat{M}} \otimes \tau_{Y\hat{Y}})(V_A \otimes V_B)^\dagger.$$Since $ \rho_{X\hat{X}N'\hat{N'} \hat{N}M'\hat{M'} \hat{M}Y\hat{Y}}  = \rho_{X\hat{X}N'\hat{N'} \hat{N}}  \otimes \rho_{M'\hat{M'} \hat{M}\hat{Y}Y},$ we have the conditional-min-entropy bound  $$\hmin{X}{Y\hat{Y}M'\hat{M'}\hat{M}N'}_\rho =\hmin{X}{N'}_\rho \geq k_1.$$
	Also, since $\hmin{Y}{M'}_\rho \geq k_2$, using Fact~\ref{fact102}, we have $\Hmin({Y})_\rho \geq k_2$. Since $\rho_{YX\hat{X}\hat{N'}\hat{N}}  = \rho_Y \otimes \rho_{X\hat{X}\hat{N'}\hat{N}},$ we have 
	\[  \dmax{\rho_{YX\hat{X}\hat{N'}\hat{N}}  }{U_Y \otimes \rho_{X\hat{X}\hat{N'}\hat{N}} }  \leq  \log ( \dim(\cH_{Y}) ) -k_2=n-k_2.\]
	Thus, $\hminn{Y}{X\hat{X}\hat{N'}\hat{N}}_\rho \geq k_2$.

Simulation of the state $\rho$ in the model of $l\mhyphen\qma$ follows from Lemma~\ref{lemma:simeverything}. Using Lemma~\ref{lemma:simeverything},  with the following assignment of registers (below the registers on the left are from Lemma~\ref{lemma:simeverything} and the registers on the right are the registers in this proof)
 $$(X, Y, \hat{X}, \hat{Y},  N, M, \rho) \leftarrow (X, Y, \hat{X}, \hat{Y}, \hat{N'}\hat{N}, M'\hat{M'}\hat{M}N', \rho),$$we have $l \leq   2n-k_1-k_2.$

\end{proof}

\subsection{General entangled adversary}\label{qgeadv}
 Chung, Li and Wu~\cite{CLW14} introduced general entangled adversary ($\gea$) defined as follows:
\begin{definition}[$(k_1, k_2)$-$\gea$, $(k_1,k_2)\mhyphen\geas$~\cite{CLW14}]\label{geadv} 
	Let $\tau_{X\hat{X}}$, $\tau_{Y\hat{Y}}$ be the canonical purifications of independent sources $X,Y$ respectively (registers $\hat{X}\hat{Y}$ with Reference).  
	\begin{enumerate}
		\item  Alice and Bob hold $X,Y$ respectively. They also hold entangled pure state $\phi_{NM}$ (Alice holds $N$, Bob holds $M$).
		\item  Alice applies a CPTP map $\psi_A :  \mathcal{L}(  \cH_{X} \otimes \cH_{N}) \rightarrow   \mathcal{L}(  \cH_{X} \otimes \cH_{N'})$ and Bob applies a CPTP map $\psi_B :     \mathcal{L}(  \cH_Y \otimes \cH_{M}) \rightarrow   \mathcal{L}(  \cH_{Y} \otimes \cH_{M'})$. Let \[ \rho^A_{X\hat{X}N'MY\hat{Y}} = (\psi_A \otimes \id_{}) (\tau_{X\hat{X}} \otimes \phi_{NM} \otimes \tau_{Y\hat{Y}})  ,\] 
		\[   \rho^B_{X\hat{X}NM'Y\hat{Y}} = (\id_{} \otimes \psi_B) (\tau_{X\hat{X}} \otimes \phi_{NM} \otimes \tau_{Y\hat{Y}}) ,\]  
		$$\rho_{X\hat{X}N'M'Y\hat{Y}} = (\psi_A \otimes \psi_B) (\tau_{X\hat{X}} \otimes \phi_{NM} \otimes \tau_{Y\hat{Y}})  =  (\id_{} \otimes \psi_B) \rho^A_{XN'MY},$$
		with $\hmin{X}{N'M}_{\rho^A} \geq k_1$ and $\hmin{Y}{NM'}_{\rho^B} \geq k_2$. 
		\item  Adversary gets access to $\rho_{N'M'}$. The state $\rho$ is called a $(k_1,k_2)\mhyphen\geas$. 
	\end{enumerate}
\end{definition}
We show how to simulate a $(k_1, k_2)$-$\gea$ in the model of an $l\mhyphen\qma$.
\begin{claim}
	\label{sim3}
	A $(k_1, k_2) $-$\geas$ is an $l\mhyphen\qmas$ for some $l \leq 2n-k_1-k_2$.
\end{claim}
\begin{proof}
	
	Let $V_A: \mathcal{H}_X \otimes \mathcal{H}_{N} \rightarrow \mathcal{H}_{X} \otimes \mathcal{H}_{N'} \otimes \mathcal{H}_{\hat{N'}}$, $V_B: \mathcal{H}_Y \otimes \mathcal{H}_{M} \rightarrow \mathcal{H}_{Y} \otimes \mathcal{H}_{M'} \otimes \mathcal{H}_{\hat{M'}}$ be the Stinespring isometry extensions of CPTP maps $\psi_A$, $\psi_B$ respectively.  Let $$\rho_{X\hat{X}N'\hat{N'}  M'\hat{M'} Y\hat{Y}} = (V_A \otimes V_B) (\tau_{X\hat{X}} \otimes \phi_{N M } \otimes \tau_{Y\hat{Y}})(V_A \otimes V_B)^\dagger.$$Let
	\[\rho^A_{X\hat{X}N'\hat{N'} M} = (V_A \otimes \id) (\tau_{X\hat{X}}  \otimes \phi_{NM})(V_A \otimes \id)^\dagger \quad ; \quad \rho^B_{NY\hat{Y}M'\hat{M'}} = (\id \otimes V_B) (\phi_{NM} \otimes \tau_{Y\hat{Y}}  )(\id \otimes V_B)^\dagger .\]
	Note $\rho^A_{X\hat{X}N'\hat{N'} M}$ is the purification of $\rho^A_{X\hat{X}N'M}$. From $\hmin{X}{N'M}_{\rho^A} \geq k_1$ and
	from Fact~\ref{fact102} it follows that  $$\hmin{X}{N'M'\hat{M'}Y\hat{Y} }_{\rho} \geq k_1.$$

		Also, since $\hmin{Y}{NM'}_{\rho^B} \geq k_2$, using Fact~\ref{fact102}, we have $\Hmin({Y})_{\rho^B} \geq k_2$. Noting $\rho^B_{YN}  = \rho^B_Y \otimes \rho^B_{N}$ (since $V_B$ is safe on register $Y$), we have 
	\[  \dmax{\rho^B_{YN}  }{U_Y \otimes \rho^B_{N} }  \leq  \log ( \dim(\cH_{Y}) ) -k_2=n-k_2.\]
	Using Fact~\ref{data}, we have \[  \dmax{\rho_{YX\hat{X} \hat{N'}}  }{U_Y \otimes \rho_{X\hat{X}\hat{N'}} }  \leq n-k_2.\]
	Thus, $\hminn{Y}{X\hat{X}\hat{N'}}_\rho \geq k_2$.

Simulation of the state $\rho$ in the model of $l\mhyphen\qma$ follows from Lemma~\ref{lemma:simeverything}. Using Lemma~\ref{lemma:simeverything},  with the following assignment of registers (below the registers on the left are from Lemma~\ref{lemma:simeverything} and the registers on the right are the registers in this proof)
 $$(X, Y, \hat{X}, \hat{Y},  N, M, \rho) \leftarrow (X, Y, \hat{X}, \hat{Y}, \hat{N'}, N'M'\hat{M'}, \rho),$$we have $l \leq   2n-k_1-k_2.$

\end{proof}
\subsection{Quantum Markov adversary}\label{sec:madv}
Arnon-Friedman, Portmann and Scholz~\cite{APS16} introduced quantum Markov adversary\newline($\qmra$), such that  adversary's side information forms a Markov-chain with the both sources.
\begin{definition}[$(k_1, k_2)$-$\qmra$, $(k_1,k_2)\mhyphen \qmras$)~\cite{APS16}]\label{madv} 
	Let $\rho_{XEY}$ be a Markov-chain $(X-E-Y)_\rho$ with $\hmin{X}{E}_{\rho}  \geq k_1$ and $\hmin{Y}{E}_{\rho} \geq k_2$. Adversary gets access to quantum register $E$. The state $\rho$ is called a $(k_1,k_2)\mhyphen \qmras$.
\end{definition}
\begin{claim}
	\label{sim4}
	A $(k_1, k_2) $-$\qmras$ can be simulated $\eps$-approximately by an $l\mhyphen\qma$ for some $l \leq 2n-k_1-k_2+25+6 \log (1/\eps)$.
\end{claim}
\begin{proof}
	From Fact~\ref{fact:markov}, 
	$$\rho_{XEY} = \sum_{t}  \Pr(T=t) \ketbra{t} \otimes  \left(\rho^t_{XE_1} \otimes \rho^t_{YE_2} \right),$$ where $T$ is some classical register over finite alphabet. Let $\rho_{X\hat{X}T\hat{T} E_1 \hat{E_1}E_2 \hat{E_2}Y\hat{Y}}$ be a pure state extension of $\rho_{XEY} \equiv \rho_{XE_1TE_2Y}$ such that,
	\[\rho_{X\hat{X}T\hat{T} E_1 \hat{E_1}E_2 \hat{E_2}Y\hat{Y}} = \sum_{t}\sqrt{\Pr(T=t)} \ket{tt}_{T\hat{T}} \ket{\rho}_{X\hat{X}E_1 \hat{E_1}E_2\hat{E_2}Y\hat{Y}}^t,\]
	\[\hmin{X}{E}_\rho \geq k_1 \quad ; \quad \hmin{Y}{E}_\rho \geq k_2, \]
	registers ($XYT$) are classical (with copies $\hat{X}\hat{Y}\hat{T}$) and  $\ket{\rho}_{X\hat{X}E_1 \hat{E_1}E_2\hat{E_2}Y\hat{Y}}^t = \ket{\rho}_{X\hat{X}E_1 \hat{E_1}}^t \otimes \ket{\rho}_{E_2\hat{E_2}Y\hat{Y}}^t$. Additionally, note for every $T=t$, $\ket{\rho}_{X\hat{X}E_1 \hat{E_1}}^t \otimes \ket{\rho}_{E_2\hat{E_2}Y\hat{Y}}^t$ is the pure state extension of $\rho^t_{XE_1} \otimes \rho^t_{YE_2}$ with $\ket{\rho}_{X\hat{X}E_1 \hat{E_1}}^t, \ket{\rho}_{E_2\hat{E_2}Y\hat{Y}}^t$ canonical purifications of  $\rho^t_{XE_1}, \rho^t_{YE_2}$ respectively.  Since $E \equiv E_1TE_2$, using Fact~\ref{data}, we have 
	\begin{equation}\label{eq:corr3111}
		\hmin{X}{E_1T}_\rho \geq \hmin{X}{E}_\rho \geq k_1 \quad ; \quad \hmin{Y}{E_2T}_\rho \geq \hmin{Y}{E}_\rho \geq k_2.
	\end{equation}Consider,
	\begin{align*}
		\hmin{X}{Y\hat{Y}{E}_2E_1{T}}_\rho & \geq \hmin{X}{Y\hat{Y}\hat{E}_2E_2E_1{T}}_\rho  \\
		& = \hmin{X}{E_1T}_\rho  \\
		& \geq k_1. 
	\end{align*}
	The first equality is because conditioned on every $T=t$, $\rho^t_{XE_1E_2\hat{E}_2Y\hat{Y}} =\rho^t_{XE_1} \otimes \rho^t_{E_2\hat{E}_2Y\hat{Y}}$. The first inequality follows from Fact~\ref{fact102} and second inequality follows from Eq.~\eqref{eq:corr3111}. Consider,
	\begin{align*}
		\hmin{Y}{X\hat{X} \hat{E}_2 \hat{E}_1 \hat{T}}_\rho & \geq 	\hmin{Y}{X\hat{X} \hat{E}_2 E_1\hat{E}_1 \hat{T}}_\rho  \\
		& = \hmin{Y}{ \hat{E}_2 \hat{T}}_\rho  \\
		& = \hmin{Y}{{E}_2T}_\rho  \\
		& \geq k_2. 
	\end{align*}
	The first equality is because conditioned on every $\hat{T}=t$, $\rho^t_{Y E_1\hat{E}_1\hat{E}_2X\hat{X}} =\rho^t_{Y\hat{E}_2} \otimes \rho^t_{E_1\hat{E}_1X\hat{X}}$. The second equality is because $\rho_{Y\hat{E}_2\hat{T}} \equiv \rho_{YE_2T}$. The first inequality follows from Fact~\ref{fact102} and second inequality follows from Eq.~\eqref{eq:corr3111}.

	For the state $\rho$ with the following assignment (terms on the left are from Definition~\ref{qmadvk1k2} and on the right are from here),
	\[(X,\hat{X},N,M,Y,\hat{Y}) \leftarrow (X,\hat{X}, \hat{E}_2 \hat{E}_1\hat{T},{E}_2E_1{T},Y,\hat{Y}),\] we have $\rho$ is a $(k_1,k_2) \mhyphen \qpas$. Using Fact~\ref{lemma:nearby_rho_prime_prime4}, we have an $l \mhyphen \qmas$ $\rho'$ such that 
	$l \leq 2n-k_1-k_2+25+6 \log (1/\eps)$ and $\rho' \approx_{\eps} \rho.$  
\end{proof}
\subsection{Quantum communication adversary}\label{qcadv}
We show how to simulate  a $(k_1, k_2)$-$\qca$ (see Definition \ref{def:qcadv}) in the model of an  $l\mhyphen\qma$.
\begin{claim}
	\label{sim6}
	A $(k_1, k_2) $-$\qcas$ can be simulated $\eps$-approximately by an $l\mhyphen\qma$ for some $l \leq 2n-k_1-k_2+25+6 \log (1/\eps)$.
\end{claim}
\begin{proof}
Let $\Phi_{X\hat{X}N'M'Y\hat{Y}}$, be the end state after the action of $(k_1, k_2)$-$\qca$ (adversary gets registers either $M'Y$ or $N'X$ of his choice).

 We note to the reader that every $(k_1,k_2) \mhyphen \qcas$ is  a  
$(k_1,k_2)\mhyphen\qpas$.  Now using Fact~\ref{lemma:nearby_rho_prime_prime4}, we have an $l \mhyphen \qmas$ $\rho'$ such that 
	$l \leq 2n-k_1-k_2+25+6 \log (1/\eps)$ and $\rho' \approx_{\eps} \rho.$  This completes the proof. 
 \suppress{

Simulation of the state $\Phi$ in the model of $l\mhyphen\qma$ follows from Lemma~\ref{lemma:simeverything}. Using Lemma~\ref{lemma:simeverything},  with the following assignment of registers (below the registers on the left are from Lemma~\ref{lemma:simeverything} and the registers on the right are the registers in this proof)
 $$(X, Y, \hat{X}, \hat{Y},  N, M, \rho) \leftarrow (X, Y, \hat{X}, \hat{Y}, N', M' , \Phi),$$we have $l \leq   2n-k_1-k_2.$}
\end{proof}

\suppress{
\subsection{Quantum bounded storage adversary}\label{qqbsadv}

\begin{definition}[$(b_1, b_2) $-$\qbsa$~\cite{KK10}]\label{qbsadv} 
	Let $\tau_{X\hat{X}}$, $\tau_{Y\hat{Y}}$ be the canonical purifications of independent sources $X,Y$ respectively (registers $\hat{X}\hat{Y}$ with Reference). 
	\begin{enumerate}
		\item  Alice and Bob hold $X,Y$ respectively. They also share an entangled pure  state $\phi_{NM}$ (Alice holds $N$, Bob holds $M$).
		
		\item Alice applies a CPTP map $\psi_A : \mathcal{L} (\cH_{X} \otimes \cH_{N}) \rightarrow   \mathcal{L}(\cH_{X} \otimes \cH_{N'})$ and Bob applies a CPTP map $\psi_B :    \mathcal{L} (\cH_Y \otimes \cH_{M}) \rightarrow   \mathcal{L}(\cH_{Y} \otimes \cH_{M'})$. Let $$\rho_{X\hat{X}N'M'Y\hat{Y}} = (\psi_A \otimes \psi_B) (\tau_{X\hat{X}} \otimes \phi_{NM} \otimes \tau_{Y\hat{Y}}).$$

		\item  Adversary gets access to $\rho_{N'M'}$ with $ \log \dim(\cH_{N'}) \leq b_1$, $ \log \dim(\cH_{M'}) \leq b_2$ respectively.
	\end{enumerate}
\end{definition}
We show how to simulate a $(b_1, b_2)$-$\qbsa$ in the model of an $l\mhyphen\qma$.
\begin{claim}
	\label{sim1}
	A $(b_1, b_2) $-$\qbsa$ acting on an $(n,k_1,k_2)$-source can be simulated by an $l\mhyphen\qma$ for some parameter $l \leq \min \{b_1+2n-k_1-k_2,b_2+2n-k_1-k_2 \}$.
\end{claim}
\begin{proof}Let $V_A: \mathcal{H}_X \otimes \mathcal{H}_{N} \rightarrow \mathcal{H}_{X} \otimes \mathcal{H}_{N'} \otimes \mathcal{H}_{\hat{N'}}$, $V_B: \mathcal{H}_Y \otimes \mathcal{H}_{M} \rightarrow \mathcal{H}_{Y} \otimes \mathcal{H}_{M'} \otimes \mathcal{H}_{\hat{M'}}$ be the Stinespring isometry extensions of CPTP maps $\psi_A$, $\psi_B$ respectively i.e. $\psi_A(\theta)=  \tr_{\hat{N'}  }(V_A \theta V_A^\dagger)$ for every state $\theta_{XN}$ and  $\psi_B(\theta)=  \tr_{\hat{M'}  }(V_B \theta V_B^\dagger)$ for every state $\theta_{YM}$. Let
	$$\rho^A_{X\hat{X}N'\hat{N'} M} = (V_A \otimes \id) (\tau_{X\hat{X}}  \otimes \phi_{NM})(V_A \otimes \id)^\dagger,$$  $$\rho_{X\hat{X}N'\hat{N'} M'\hat{M'} Y\hat{Y}} = (V_A \otimes V_B) (\tau_{X\hat{X}} \otimes \phi_{NM} \otimes \tau_{Y\hat{Y}})(V_A \otimes V_B)^\dagger.$$From Fact~\ref{rhoablessthanrhoaidentity},
	\[ \dmax{\rho^A_{\hat{X}N' M}}{ \rho^A_{\hat{X} M} \otimes U_{N'} } \leq \log ( \dim(\cH_{N'}))  \leq  b_1 .\]Also, note $\rho^A_{\hat{X}M} = \tau_{\hat{X}} \otimes \phi_M  \leq 2^{n-k_1} U_{\hat{X}} \otimes \phi_M.$ The inequality follows since the min-entropy of $\tau_X$ is at least $k_1$ and $\tau_{\hat{X}}$ is canonical purification of $\tau_X$. Thus, we further have 
	\[ \dmax{\rho^A_{\hat{X}MN'}}{ U_{\hat{X}} \otimes \phi_M \otimes U_{N'} } \leq  b_1+n-k_1 .\]This further implies \[ \inf_{\sigma_{MN'}}\dmax{\rho^A_{\hat{X}MN'}}{\id_{\hat{X}} \otimes \sigma_{MN'}} \leq \dmax{\rho^A_{\hat{X}MN'}}{ \id_{\hat{X}} \otimes \phi_M \otimes U_{N'} } \leq  b_1-k_1 .\] 
	Thus $\hmin{X}{N'M}_{\rho^A} =\hmin{\hat{X}}{N'M}_{\rho^A}\geq k_1-b_1$. Note the first equality is because $\rho^A_{XN'M} =\rho^A_{\hat{X}N'M}$. Using Fact~\ref{fact102}, we further have 
	$$\hmin{X}{N'M'\hat{M'} Y\hat{Y}}_{\rho} \geq \hmin{X}{N'M}_{\rho^A} \geq k_1-b_1.$$
	
	Note $ \dmax{\rho_{\hat{Y}X\hat{X} \hat{N'}}}{ U_{\hat{Y}} \otimes \rho_{X\hat{X}\hat{N'}}} \leq  n-k_2 $ since $\rho_{\hat{Y}X\hat{X}\hat{N'}} = \tau_{\hat{Y} } \otimes  \rho_{X\hat{X}\hat{N'}}$, the min-entropy of $\tau_Y$ is at least $k_2$ and $\tau_{\hat{Y}}$ is canonical purification of $\tau_Y$. Thus $\hminn{Y}{X\hat{X}\hat{N'} }_\rho =\hminn{\hat{Y}}{X\hat{X}\hat{N'} }_\rho \geq k_2$. Note the first equality is because $\rho_{YX\hat{X}\hat{N'}} =\rho_{\hat{Y}X\hat{X}\hat{N'}}$.
	
	Simulation of the state $\rho$ in the model of $l\mhyphen\qma$ follows from Lemma~\ref{lemma:simeverything}. Using Lemma~\ref{lemma:simeverything},  with the following assignment of registers (below the registers on the left are from Lemma~\ref{lemma:simeverything} and the registers on the right are the registers in this proof)
 $$(X, Y, \hat{X}, \hat{Y},  N, M, \rho) \leftarrow (X, Y, \hat{X}, \hat{Y}, \hat{N'}, N'M'\hat{M'}, \rho),$$we have $l \leq   2n+b_1-k_1-k_2.$

	A similar argument can be given by exchanging the roles of Alice and Bob. The desired follows.
\end{proof}

%

\subsection{Quantum independent adversary}\label{qqiadv}

\begin{definition}[$(k'_1, k'_2)$-$\qia$~\cite{KK10}]\label{qiadv} 
	Let $\tau_{X\hat{X}}$, $\tau_{Y\hat{Y}}$ be the canonical purifications of independent sources $X,Y$ respectively (registers $\hat{X}\hat{Y}$ with Reference). 
	\begin{enumerate}
		\item  Alice and Bob hold $X,Y$ respectively. They also share a product state $\phi_{NM}= \phi_{N} \otimes \phi_{M}$ (Alice holds $N$ and Bob holds $M$).
		\item Alice applies CPTP map $\psi_A : \mathcal{L} (\cH_{X} \otimes \cH_{N}) \rightarrow   \mathcal{L}(\cH_{X} \otimes \cH_{N'})$ and Bob applies CPTP map $\psi_B :    \mathcal{L} (\cH_Y \otimes \cH_{M}) \rightarrow   \mathcal{L}(\cH_{Y} \otimes \cH_{M'})$. Let \[ \rho_{X\hat{X}N'M'Y\hat{Y}} = (\psi_A \otimes \psi_B) (\tau_{X\hat{X}} \otimes \phi_{NM} \otimes \tau_{Y\hat{Y}}) \quad \quad  ;  \quad \quad  \rho_{X\hat{X}N'M'Y\hat{Y}}  = \rho_{X\hat{X}N'}  \otimes \rho_{M'\hat{Y}Y}, \]
		with $\hmin{X}{N'}_\rho \geq k'_1$ and $\hmin{Y}{M'}_\rho \geq k'_2$.
		\item  Adversary gets access to $\rho_{N'M'}$. 
	\end{enumerate}
\end{definition}
We show how to simulate  a $(k'_1, k'_2) $-$\qia$ in the model of an $l\mhyphen\qma$.
\begin{claim}
	\label{sim2}
	A $(k'_1, k'_2) $-$\qia$ acting on an $(n,k_1,k_2)$-source can be simulated by an $l\mhyphen\qma$ for some $l \leq 2n-k'_1-k'_2$.
\end{claim}
\begin{proof}

	Let $V_A: \mathcal{H}_X \otimes \mathcal{H}_{N} \rightarrow \mathcal{H}_{X} \otimes \mathcal{H}_{N'} \otimes \mathcal{H}_{\hat{N'}}$, $V_B: \mathcal{H}_Y \otimes \mathcal{H}_{M} \rightarrow \mathcal{H}_{Y} \otimes \mathcal{H}_{M'} \otimes \mathcal{H}_{\hat{M'}}$ be the Stinespring isometry extensions of CPTP maps $\psi_A$, $\psi_B$ respectively. Let $\phi_{N \hat{N} M \hat{M}} =\phi_{N \hat{N} } \otimes \phi_{M \hat{M} }$ be the purification of $\phi_{NM}$. Let $$\rho_{X\hat{X}N'\hat{N'} \hat{N} M'\hat{M'}\hat{M} Y\hat{Y}} = (V_A \otimes V_B) (\tau_{X\hat{X}} \otimes \phi_{N \hat{N} M \hat{M}} \otimes \tau_{Y\hat{Y}})(V_A \otimes V_B)^\dagger.$$Since $ \rho_{X\hat{X}N'\hat{N'} \hat{N}M'\hat{M'} \hat{M}Y\hat{Y}}  = \rho_{X\hat{X}N'\hat{N'} \hat{N}}  \otimes \rho_{M'\hat{M'} \hat{M}\hat{Y}Y},$ we have the conditional-min-entropy bound  $$\hmin{X}{Y\hat{Y}M'\hat{M'}\hat{M}N'}_\rho =\hmin{X}{N'}_\rho \geq k'_1.$$
	Also, since $\hmin{Y}{M'}_\rho \geq k_2'$, using Fact~\ref{fact102}, we have $\Hmin({Y})_\rho \geq k_2'$. Since $\rho_{YX\hat{X}\hat{N'}\hat{N}}  = \rho_Y \otimes \rho_{X\hat{X}\hat{N'}\hat{N}},$ we have 
	\[  \dmax{\rho_{YX\hat{X}\hat{N'}\hat{N}}  }{U_Y \otimes \rho_{X\hat{X}\hat{N'}\hat{N}} }  \leq  \log ( \dim(\cH_{Y}) ) -k'_2=n-k'_2.\]
	Thus, $\hminn{Y}{X\hat{X}\hat{N'}\hat{N}}_\rho \geq k'_2$.

Simulation of the state $\rho$ in the model of $l\mhyphen\qma$ follows from Lemma~\ref{lemma:simeverything}. Using Lemma~\ref{lemma:simeverything},  with the following assignment of registers (below the registers on the left are from Lemma~\ref{lemma:simeverything} and the registers on the right are the registers in this proof)
 $$(X, Y, \hat{X}, \hat{Y},  N, M, \rho) \leftarrow (X, Y, \hat{X}, \hat{Y}, \hat{N'}\hat{N}, M'\hat{M'}\hat{M}N', \rho),$$we have $l \leq   2n-k'_1-k'_2.$

\end{proof}
\subsection{General entangled adversary} \label{qgeadv}
\begin{definition}[$(k'_1, k'_2)$-$\gea$~\cite{CLW14}]\label{geadv} 
	Let $\tau_{X\hat{X}}$, $\tau_{Y\hat{Y}}$ be the canonical purifications of independent sources $X,Y$ respectively (registers $\hat{X}\hat{Y}$ with Reference).  
	\begin{enumerate}
		\item  Alice and Bob hold $X,Y$ respectively. They also hold entangled pure state $\phi_{NM}$ (Alice holds $N$, Bob holds $M$).
		\item  Alice applies a CPTP map $\psi_A :  \mathcal{L}(  \cH_{X} \otimes \cH_{N}) \rightarrow   \mathcal{L}(  \cH_{X} \otimes \cH_{N'})$ and Bob applies a CPTP map $\psi_B :     \mathcal{L}(  \cH_Y \otimes \cH_{M}) \rightarrow   \mathcal{L}(  \cH_{Y} \otimes \cH_{M'})$. Let \[ \rho^A_{X\hat{X}N'MY\hat{Y}} = (\psi_A \otimes \id_{}) (\tau_{X\hat{X}} \otimes \phi_{NM} \otimes \tau_{Y\hat{Y}})  ,\] 
		\[   \rho^B_{X\hat{X}NM'Y\hat{Y}} = (\id_{} \otimes \psi_B) (\tau_{X\hat{X}} \otimes \phi_{NM} \otimes \tau_{Y\hat{Y}}) ,\]  
		$$\rho_{X\hat{X}N'M'Y\hat{Y}} = (\psi_A \otimes \psi_B) (\tau_{X\hat{X}} \otimes \phi_{NM} \otimes \tau_{Y\hat{Y}})  =  (\id_{} \otimes \psi_B) \rho^A_{XN'MY},$$
		with $\hmin{X}{N'M}_{\rho^A} \geq k'_1$ and $\hmin{Y}{NM'}_{\rho^B} \geq k'_2$. 
		\item  Adversary gets access to $\rho_{N'M'}$. 
	\end{enumerate}
\end{definition}
We show how to simulate a $(k'_1, k'_2)$-$\gea$ in the model of an $l\mhyphen\qma$.
\begin{claim}
	\label{sim3}
	A $(k'_1, k'_2) $-$\gea$ acting on an $(n,k_1,k_2)$-source can be simulated by an $l\mhyphen\qma$ for some $l \leq 2n-k'_1-k'_2$.
\end{claim}
\begin{proof}
	
	Let $V_A: \mathcal{H}_X \otimes \mathcal{H}_{N} \rightarrow \mathcal{H}_{X} \otimes \mathcal{H}_{N'} \otimes \mathcal{H}_{\hat{N'}}$, $V_B: \mathcal{H}_Y \otimes \mathcal{H}_{M} \rightarrow \mathcal{H}_{Y} \otimes \mathcal{H}_{M'} \otimes \mathcal{H}_{\hat{M'}}$ be the Stinespring isometry extensions of CPTP maps $\psi_A$, $\psi_B$ respectively.  Let $$\rho_{X\hat{X}N'\hat{N'}  M'\hat{M'} Y\hat{Y}} = (V_A \otimes V_B) (\tau_{X\hat{X}} \otimes \phi_{N M } \otimes \tau_{Y\hat{Y}})(V_A \otimes V_B)^\dagger.$$Let
	\[\rho^A_{X\hat{X}N'\hat{N'} M} = (V_A \otimes \id) (\tau_{X\hat{X}}  \otimes \phi_{NM})(V_A \otimes \id)^\dagger \quad ; \quad \rho^B_{NY\hat{Y}M'\hat{M'}} = (\id \otimes V_B) (\phi_{NM} \otimes \tau_{Y\hat{Y}}  )(\id \otimes V_B)^\dagger .\]
	Note $\rho^A_{X\hat{X}N'\hat{N'} M}$ is the purification of $\rho^A_{X\hat{X}N'M}$. From $\hmin{X}{N'M}_{\rho^A} \geq k'_1$ and
	from Fact~\ref{fact102} it follows that  $$\hmin{X}{N'M'\hat{M'}Y\hat{Y} }_{\rho} \geq k'_1.$$

		Also, since $\hmin{Y}{NM'}_{\rho^B} \geq k_2'$, using Fact~\ref{fact102}, we have $\Hmin({Y})_{\rho^B} \geq k_2'$. Noting $\rho^B_{YN}  = \rho^B_Y \otimes \rho^B_{N}$ (since $V_B$ is safe on register $Y$), we have 
	\[  \dmax{\rho^B_{YN}  }{U_Y \otimes \rho^B_{N} }  \leq  \log ( \dim(\cH_{Y}) ) -k'_2=n-k'_2.\]
	Using Fact~\ref{data}, we have \[  \dmax{\rho_{YX\hat{X} \hat{N'}}  }{U_Y \otimes \rho_{X\hat{X}\hat{N'}} }  \leq n-k'_2.\]
	Thus, $\hminn{Y}{X\hat{X}\hat{N'}}_\rho \geq k'_2$.

Simulation of the state $\rho$ in the model of $l\mhyphen\qma$ follows from Lemma~\ref{lemma:simeverything}. Using Lemma~\ref{lemma:simeverything},  with the following assignment of registers (below the registers on the left are from Lemma~\ref{lemma:simeverything} and the registers on the right are the registers in this proof)
 $$(X, Y, \hat{X}, \hat{Y},  N, M, \rho) \leftarrow (X, Y, \hat{X}, \hat{Y}, \hat{N'}, N'M'\hat{M'}, \rho),$$we have $l \leq   2n-k'_1-k'_2.$

\end{proof}
\subsection{Quantum Markov adversary}\label{sec:madv}

\begin{definition}[$(k_1, k_2)$-$\qmra$)~\cite{APS16}]\label{madv} 
	Let $\rho_{XEY}$ be a Markov-chain $(X-E-Y)_\rho$ with $\hmin{X}{E}_{\rho}  \geq k_1$ and $\hmin{Y}{E}_{\rho} \geq k_2$. Adversary gets access to quantum register $E$. 
\end{definition}
\begin{claim}
	\label{sim4}
	A $(k_1, k_2) $-$\qmra$ acting on $n$-bit sources $X,Y$ can be simulated $\eps$-approximately by an $l\mhyphen\qma$ for some $l \leq 2n-k_1-k_2+16+4 \log (1/\eps)$.
\end{claim}
\begin{proof}
	From Fact~\ref{fact:markov}, 
	$$\rho_{XEY} = \sum_{t}  \Pr(T=t) \ketbra{t} \otimes  \left(\rho^t_{XE_1} \otimes \rho^t_{YE_2} \right),$$ where $T$ is some classical register over finite alphabet. Let $\rho_{X\hat{X}T\hat{T} E_1 \hat{E_1}E_2 \hat{E_2}Y\hat{Y}}$ be a pure state extension of $\rho_{XEY} \equiv \rho_{XE_1TE_2Y}$ such that,
	\[\rho_{X\hat{X}T\hat{T} E_1 \hat{E_1}E_2 \hat{E_2}Y\hat{Y}} = \sum_{t}\sqrt{\Pr(T=t)} \ket{tt}_{T\hat{T}} \ket{\rho}_{X\hat{X}E_1 \hat{E_1}E_2\hat{E_2}Y\hat{Y}}^t,\]
	\[\hmin{X}{E}_\rho \geq k_1 \quad ; \quad \hmin{Y}{E}_\rho \geq k_2, \]
	registers ($XYT$) are classical (with copies $\hat{X}\hat{Y}\hat{T}$) and  $\ket{\rho}_{X\hat{X}E_1 \hat{E_1}E_2\hat{E_2}Y\hat{Y}}^t = \ket{\rho}_{X\hat{X}E_1 \hat{E_1}}^t \otimes \ket{\rho}_{E_2\hat{E_2}Y\hat{Y}}^t$. Additionally, note for every $T=t$, $\ket{\rho}_{X\hat{X}E_1 \hat{E_1}}^t \otimes \ket{\rho}_{E_2\hat{E_2}Y\hat{Y}}^t$ is the pure state extension of $\rho^t_{XE_1} \otimes \rho^t_{YE_2}$ with $\ket{\rho}_{X\hat{X}E_1 \hat{E_1}}^t, \ket{\rho}_{E_2\hat{E_2}Y\hat{Y}}^t$ canonical purifications of  $\rho^t_{XE_1}, \rho^t_{YE_2}$ respectively.  Since $E \equiv E_1TE_2$, using Fact~\ref{data}, we have 
	\begin{equation}\label{eq:corr3111}
		\hmin{X}{E_1T}_\rho \geq \hmin{X}{E}_\rho \geq k_1 \quad ; \quad \hmin{Y}{E_2T}_\rho \geq \hmin{Y}{E}_\rho \geq k_2.
	\end{equation}Consider,
	\begin{align*}
		\hmin{X}{Y\hat{Y}{E}_2E_1{T}}_\rho & \geq \hmin{X}{Y\hat{Y}\hat{E}_2E_2E_1{T}}_\rho  \\
		& = \hmin{X}{E_1T}_\rho  \\
		& \geq k_1. 
	\end{align*}
	The first equality is because conditioned on every $T=t$, $\rho^t_{XE_1E_2\hat{E}_2Y\hat{Y}} =\rho^t_{XE_1} \otimes \rho^t_{E_2\hat{E}_2Y\hat{Y}}$. The first inequality follows from Fact~\ref{fact102} and second inequality follows from Eq.~\eqref{eq:corr3111}. Consider,
	\begin{align*}
		\hmin{Y}{X\hat{X} \hat{E}_2 \hat{E}_1 \hat{T}}_\rho & \geq 	\hmin{Y}{X\hat{X} \hat{E}_2 E_1\hat{E}_1 \hat{T}}_\rho  \\
		& = \hmin{Y}{ \hat{E}_2 \hat{T}}_\rho  \\
		& = \hmin{Y}{{E}_2T}_\rho  \\
		& \geq k_2. 
	\end{align*}
	The first equality is because conditioned on every $\hat{T}=t$, $\rho^t_{Y E_1\hat{E}_1\hat{E}_2X\hat{X}} =\rho^t_{Y\hat{E}_2} \otimes \rho^t_{E_1\hat{E}_1X\hat{X}}$. The second equality is because $\rho_{Y\hat{E}_2\hat{T}} \equiv \rho_{YE_2T}$. The first inequality follows from Fact~\ref{fact102} and second inequality follows from Eq.~\eqref{eq:corr3111}.

	For the state $\rho$ with the following assignment (terms on the left are from Definition~\ref{qmadvk1k2} and on the right are from here),
	\[(X,\hat{X},N,M,Y,\hat{Y}) \leftarrow (X,\hat{X}, \hat{E}_2 \hat{E}_1\hat{T},{E}_2E_1{T},Y,\hat{Y}),\] we have $\rho$ is a $(k_1,k_2) \mhyphen \qmas$. Using Fact~\ref{lemma:nearby_rho_prime_prime}, we have an $l \mhyphen \qmas$ $\rho'$ such that 
	$l \leq 2n-k_1-k_2+16+4 \log (1/\eps)$ and $\rho' \approx_{\eps} \rho.$

\end{proof}
\subsection{Quantum communication adversary}\label{qcadv}
We show how to simulate  a $(k'_1, k'_2)$-$\qca$ (see Definition \ref{def:qcadv}) in the model of an  $l\mhyphen\qma$.
\begin{claim}
	\label{sim6}
	A $(k'_1, k'_2) $-$\qca$ acting on an $(n,k_1,k_2)$-source can be simulated by an $l\mhyphen\qma$ for some $l \leq 2n- k'_1-k'_2$.
\end{claim}
\begin{proof}
Let $\Phi_{X\hat{X}N'M'Y\hat{Y}}$, be the end state after the action of $(k'_1, k'_2)$-$\qca$ (adversary gets registers either $M'Y$ or $N'X$ of his choice).

Simulation of the state $\Phi$ in the model of $l\mhyphen\qma$ follows from Lemma~\ref{lemma:simeverything}. Using Lemma~\ref{lemma:simeverything},  with the following assignment of registers (below the registers on the left are from Lemma~\ref{lemma:simeverything} and the registers on the right are the registers in this proof)
 $$(X, Y, \hat{X}, \hat{Y},  N, M, \rho) \leftarrow (X, Y, \hat{X}, \hat{Y}, N', M' , \Phi),$$we have $l \leq   2n-k'_1-k'_2.$
\end{proof}

}
\section{A quantum secure weak-seeded non-malleable extractor and privacy-amplification}\label{sec:app}

\subsection{A quantum secure weak-seeded  non-malleable extractor}\label{qnmadv}
\begin{theorem}\label{nmext_new}
		Let $p \ne 2$ be a prime, $n$ be an even integer and $\eps > 0$. The function  $\nmext(X,Y)$ as defined in Definition~\ref{ipnme} is a $(k_1,k_2)$-quantum secure weak-seeded non-malleable extractor against $\nma$ for the parameters $k_1 + k_2 \geq (n+17) \log p +33+16 \log \left( \frac{1}{\eps}  \right)$.
\end{theorem}
\begin{proof}Let  $\rho$ be a $(k_1,k_2)\mhyphen\nmasw$. 
From \cite{ACLV18} (see Theorem~1), to show,
$$  \| \rho_{ \nmext(X,Y)\nmext(X,Y') YY'M'} - U_{\log p}\otimes \rho_{ \nmext(X,Y') YY'M'} \|_1 \leq \eps, $$
for $\nmext(X,Y)$ as defined in Definition~\ref{ipnme} it is enough to show, 
$$  \| \rho_{ \langle X,g(Y,Y') \rangle YY'M'} - U_{\log p} \otimes \rho_{YY'M'} \|_1 \leq  \frac{2\eps^2}{p^2},$$ where $g : \mathbb{F}^{n/2}_p \times  \mathbb{F}^{n/2}_p \to  \mathbb{F}^n_p$ is an (appropriately defined) function such that
for any $z \in  \mathbb{F}^n_p$ there are at most two possible pairs $(y,y')$ and $y \ne y'$ such that $g(y,y')=z$. 
Let  $U: \cH_Y \otimes \cH_{Y'} \rightarrow \cH_Y \otimes \cH_{Y'} \otimes \cH_Z \otimes \cH_{\hat{Z}}$ be a safe isometry such that $Z=g(Y,Y')$.  Let  $\theta=U \rho U^\dagger.$  Noting $\theta_{XM'YY'} =\rho_{XM'YY'}$, we have $ \| \theta_{ \langle X,Z \rangle YY'M'} - U_{\log p} \otimes \theta_{YY'M'} \|_1 =  \| \rho_{ \langle X,g(Y,Y') \rangle YY'M'} - U_{\log p} \otimes \rho_{YY'M'} \|_1.$ From Claim~\ref{comm_game}, for parameters
$$ k_1+k_2 \geq (n+1) \log p + 41+8  \log \left(\frac{p^2}{2\eps^2} \right),$$we get 
$\Vert \theta_{\langle X,Z \rangle YM'Y'} - U_{\log p} \otimes  \theta_{YM'Y'} \Vert_1 \leq \frac{2\eps^2}{p^2}.$ Rearranging terms we get the desired which completes the proof.

\end{proof}

\begin{claim}
\label{comm_game}
 $\Vert \theta_{\langle X,Z \rangle YM'Y'} - U_{\log p} \otimes  \theta_{YM'Y'} \Vert_1 \leq \eps'',$  for parameters, $$k_1+k_2 \geq (n+1) \log p + 41+8  \log \left(\frac{1}{\eps''} \right).$$
\end{claim}

\begin{proof}

Note $\theta$ is a $(k_1,k_2)$-$\nmasw$. Note,  
 $$\hmin{X}{M'YY'\hat{Y}\hat{Y}'Z\hat{Z}}_{\theta} \geq k_1.$$
Also,
$$\dmax{ \rho_{YY'NX\hat{X}}}{\id_{YY'}\otimes \rho_{NX\hat{X}}}  \leq   \dmax{ \rho_{YNX\hat{X}}}{\id_Y  \otimes \rho_{NX\hat{X}}}  \leq -k_2.$$
The first inequality follows from Fact~\ref{rhoablessthanrhoaidentity}.
 The second inequality follows since $\Hmin(Y)_{\rho} \geq k_2$ and $\rho_{YNX\hat{X}} =\rho_{Y} \otimes \rho_{NX\hat{X}}$. Thus,
$$ \rho_{YY'NX\hat{X}} \leq 2^{-k_2+ n \log p } (U_{YY'}\otimes \rho_{NX\hat{X}}),$$
which from Fact~\ref{data} implies 
$$ \theta_{ZNX\hat{X}} \leq 2^{-k_2+ n \log p }  \left(\tau_{Z}\otimes \theta_{NX\hat{X}} \right),$$
where $\tau_Z = g(U_{YY'})$.
Since there are at most two possible pairs $(y,y')$ and $y \ne y'$ such that $g(y,y')=z$, we have $\tau_{Z} \le 2 \cdot U_Z$.
Thus,
$$   \dmax{ \theta_{ZNX\hat{X}}}{U_Z \otimes \theta_{NX\hat{X}}}  \leq  n \log p -k_2+1.$$ This implies, $$\hmin{Z}{NX\hat{X}}_{\theta} \geq \hminn{Z}{NX\hat{X}}_{\theta} \geq k_2-1.$$

We note that the state $\theta$ is a $(k_1,k_2-1)\mhyphen \qpas$.  Using Corollary~\ref{corr:iphminhmin},  with the following assignment of registers (below the registers on the left are from Corollary~\ref{corr:iphminhmin} and the registers on the right are the registers in this proof)
 $$(X, Y, \hat{X}, \hat{Y},  N, M, \rho) \leftarrow (X, Z, \hat{X}, \hat{Z}, N, M'YY'\hat{Y}\hat{Y}', \theta),$$ we get 
$\Vert \theta_{\langle X,Z \rangle YM'Y' \hat{Y}\hat{Y}'Z} - U_{\log p} \otimes  \theta_{YM'Y' \hat{Y}\hat{Y}'Z} \Vert_1 \leq \eps''$ for parameters, $$k_1+k_2 \geq (n+1) \log p + 41+8  \log \left(\frac{1}{\eps''} \right).$$

Since $\Vert \theta_{\langle X,Z \rangle YM'Y'} - U_{\log p} \otimes  \theta_{YM'Y'} \Vert_1 \leq \Vert \theta_{\langle X,Z \rangle YM'Y' \hat{Y}\hat{Y}'Z} - U_{\log p} \otimes  \theta_{YM'Y' \hat{Y}\hat{Y}'Z} \Vert_1,$ we get the desired.
\end{proof}
\subsection{Privacy-amplification with local weak-sources}


 Let $n, m, d, l$ be positive integers and $k, k_1, k_2, \eps, \delta>0$. We start with the definition of a quantum secure privacy-amplification protocol against active adversaries. The following description is from~\cite{ACLV18}. 
A privacy-amplification protocol $(P_A, P_B)$ is defined as follows. The protocol is executed by two parties Alice and Bob sharing a
secret $X\in \{0,1\}^n$, whose actions are described by $P_A$, $P_B$ respectively~\footnote{It is not necessary for the definition to specify exactly how the protocols are formulated; informally, each player's actions is described by a sequence of efficient algorithms that compute the player's next message, given the past interaction.}. In addition there is an active, computationally
unbounded adversary Eve, who might have some quantum side information $E$
correlated with $X$ but satisfying $\Hmin(X \vert E)_{\rho} \ge k$, where $\rho_{XE}$ denotes the initial state at beginning of the protocol. 

Informally, the goal for the protocol is that
whenever a party (Alice or Bob) does not reject, the key $R$ output by
this party is random and statistically independent of Eve's
view. Moreover, if both parties do not reject, they must output the
same keys $R_A=R_B$ with overwhelming probability.

More formally, we assume that Eve is in full control of the
communication channel between Alice and Bob, and can arbitrarily
insert, delete, reorder or modify messages sent by Alice and Bob to
each other. 
At the end of the
protocol, Alice outputs a key $R_A\in
\{0,1\}^l \cup \{\perp\}$, where $\perp$ is a special symbol indicating
rejection. Similarly, Bob outputs a key $R_B \in \{0,1\}^l \cup
\{\perp\}$. For a random variable $R\in
\{0,1\}^l \cup \{\perp\}$, let $\mathsf{purify}(R)$ be a random variable on $l$-bit strings that is deterministically equal to $\perp$ if $R=\perp$, and is otherwise uniformly distributed over $\lbrace 0, 1\rbrace ^l$. The following definition generalizes the classical definition in \cite{DLWZ14}.

\begin{definition}\label{privamp}
	Let $k,l$ be integer and $\eps > 0$. A privacy-amplification protocol $(P_A, P_B)$ is
	a $(k, l, \eps)$-\emph{privacy-amplification protocol secure against active quantum adversaries} if it satisfies
	the following properties for any initial state $\rho_{XE}$ such that $\Hmin(X|E)_\rho \geq k$, and $\sigma$ being the joint state of Alice, Bob and Eve at the end of the protocol given by $(P_A,P_B)$ including $\mathsf{purify}(R_A)$ and $\mathsf{purify}(R_B)$. 
	\begin{enumerate}
		\item \emph{Correctness.} If the adversary does not interfere with the protocol, then $\Pr[R_A=R_B \land~ R_A\neq \perp \land~ R_B\neq \perp]_\sigma=1$. 
		\item \emph{Robustness.} This property states that even in the presence of an active adversary, $\Pr[R_A \neq R_B \land~ R_A \neq \perp \land~ R_B \neq \perp]_\sigma \le \eps$. 
		
		\item \emph{Extraction.}  Let $\sigma_{\tilde{E}}$ be the final quantum state possessed by Eve (including the transcript of the protocol). The following should hold: 
		\[\Vert \sigma_{R_A  \tilde{E}}- \sigma_{\mathsf{purify}(R_A) \tilde{E}} \Vert_1 \leq \eps		~~~~\mbox{and}~~~~
	\Vert \sigma_{R_B \tilde{E}}- \sigma_{\mathsf{purify}(R_B) \tilde{E}} \Vert_1 \leq \eps\;.\]
		
			In other words, whenever a party does not reject, the party's key is (approximately) indistinguishable from a fresh random string to the adversary.
	
		%
		%
	\end{enumerate}
	The quantity $k-l$ is called the \emph{entropy loss}. 
\end{definition}
\paragraph{Our protocol.} 
\begin{Protocol}[htb]
	\begin{center}
		\begin{tabular}{l c l}
			Alice:  $X$ & Eve: $E$ & ~~~~~~~~~~~~Bob: $X$ \\
			
			\hline\\
			Sample local weak-source $A$& &Sample local weak-source $B$\\
			& $A \longrightarrow A'$ & \\
			&& \\
			$S = \mathsf{nmExt}(X,A)$ &&  $S' = \mathsf{nmExt}(X,A')$\\
			&&  $W' = \mathsf{Ext}( X, B)$\\
			& $W,T \longleftarrow W',T'=\mathsf{MAC}_{S'}(W')$ & \\

			{\bf If} $T \neq \mathsf{MAC}_{S}(W)$ {\em reject}&&\\
			Set final $R_A = \mathsf{Tre}(X,W)$&& Set final $R_B = \mathsf{Tre}(X,W')$\\
			\hline
		\end{tabular}
		{\small {\caption{\label{prot:priv-amp-DW}
					New $2$-round privacy-amplification protocol with weak local sources.}}
		}
	\end{center}
\end{Protocol}
In Protocol~\ref{prot:priv-amp-DW}, we describe a slight modification of the DW protocol~\cite{DW09}, that achieves PA in the setting where Alice and Bob only have weak local randomness $A$ and $B$ respectively, such that $A, B, (X,E)$ are independent. Let $A$ be $(n/2,k_1) \mhyphen$source and $B$ be $(n,k_2) \mhyphen$source.  We need the following primitives.


	
\begin{itemize}
	\item Let $\nmext:\{0,1\}^n \times\{0,1\}^{n/2} \to \{0,1\}^{2m}$ be  a $(k,k_1,\eps)$-quantum secure  $(n,n/2,2m)$-weak-seeded non-malleable extractor against $\nma$. Then we have from Theorem~\ref{nmext_new}, $$k +k_1 \geq n+34m+33+16 \log \left( \frac{1}{\eps}  \right).$$ 
	\item Let $\Ext:\{0,1\}^n\times\{0,1\}^{n} \to \{0,1\}^m$ be $\IP_{2^m}^{n/m}$. For any state $\sigma_{XEB}$ such that $\sigma_{XEB} =\sigma_{XE} \otimes \sigma_B$, $\hmin{X}{E}_\sigma \geq k$ and $\Hmin(B)_\sigma \geq k_2$, we have 
	$$   \| \sigma_{\Ext(X,B)XE}- U_m \otimes \sigma_{XE}   \|_1 \leq \eps , $$
	for the parameters $k +k_2 \geq n+m+40+8 \log \left( \frac{1}{\eps}  \right)$ from Corollary~\ref{corr:wqma}. 
	\item Let $\mathsf{MAC}:\{0,1\}^{2m} \times \{0,1\}^{m} \to \{0,1\}^m$ be a one-time $ 2^{-m}$-information-theoretically secure message authentication code from Fact~\ref{prop:mac} for $m= O \left(\log^3(\frac{n}{\eps}) \right)$. Note $2^{-m} \le \eps$. 
	\item Let $\mathsf{Tre}:\{0,1\}^n\times\{0,1\}^{m} \to \{0,1\}^\ell$ be a $(2l,\eps)$-quantum secure strong $(n,m,l)$-seeded extractor for $m = O(\log^3(n/\eps))$ from Fact~\ref{trevext}. Taking $l= (1- \delta)k$ for any small constant $\delta>0$ suffices for the privacy-amplification application.
\end{itemize}

In Protocol~\ref{prot:priv-amp-DW}, there are only two messages exchanged, $A$ from Alice to Bob and $(W',T')$ from Bob to Alice. To each of these messages the adversary may apply an arbitrary transformation, that may depend on its side information. 
\begin{definition}[Active attack]
An active attack against PA protocol is described by 3 parameters.
\begin{itemize}
\item A c-q state $\rho_{XE}$ (of adversary choice) such that $\hmin{X}{E}_{\rho} \geq k$.
\item A CPTP map $\mathsf{T}_1: \mathcal{H}_E \otimes \mathcal{H}_{A} \rightarrow \mathcal{H}_{E^\prime} \otimes \mathcal{H}_A \otimes \mathcal{H}_{A^\prime}$.
\item A CPTP map $\mathsf{T}_2: \mathcal{H}_{E^\prime} \otimes \mathcal{H}_{W^\prime} \otimes \mathcal{H}_{T^\prime} \rightarrow \mathcal{H}_{E^{\prime \prime}}  \otimes \mathcal{H}_{W'} \otimes \mathcal{H}_{T'} \otimes \mathcal{H}_{W} \otimes \mathcal{H}_{T}$.
\end{itemize} 
\end{definition}

\begin{theorem}
\label{thm:PA_added1}For any active attack $(\rho_{XE},\mathsf{T}_1,\mathsf{T}_2)$, Protocol~\ref{prot:priv-amp-DW} is $\left(k, \left(1 -\delta \right)k, O({\eps}) \right)$-secure as defined in Definition~\ref{privamp} as long as,
\[ k +k_2 \geq n+m+40+8 \log \left( \frac{1}{\eps}  \right) \quad ;\quad  k +k_1 \geq n+34m+33+16 \log \left( \frac{1}{\eps}  \right) \quad ; \quad m = O(\log^3(n/\eps)).\]

\end{theorem}
\begin{proof}
    The security of Protocol~\ref{prot:priv-amp-DW}  follows from the observation that the protocol is nearly identical to that in~\cite{ACLV18}~\cite{BJK21}. There are two main differences:
\begin{itemize}
	\item The seed $A$ for the non-malleable extractor is a weak-source. This is secure via Theorem~\ref{nmext_new}.
	\item In the protocol from~\cite{ACLV18}~\cite{BJK21}, Bob has immediate access to uniform $W'$. Here, we obtain $W'$ using strong extractor property of inner-product from Corollary~\ref{corr:wqma}. Since $B$ is independent of $X, A$, we get that $W'$ is independent of $X, A$, and hence independent of $X, S$, which is required in the proof of security in~\cite{ACLV18}~\cite{BJK21}. 
\end{itemize}
We refer the reader to Appendix.D~of~\cite{BJK21} for the complete proof of privacy-amplification when the sources $A,B$ are completely uniform. The proof of security follows similar to that of~\cite{BJK21} privacy-amplification protocol, with the following assignment of terms (terms on left are from privacy-amplification protocol of ~\cite{BJK21} and terms on right are from Protocol~\ref{prot:priv-amp-DW}) 
$$(X,Y,S,S',B',T',B,T,R_A,R_B) \leftarrow (X,A,S,S',W',T',W,T,R_A,R_B).$$

\end{proof}

\subsection*{Open questions}\label{sec:conclusion}
\begin{enumerate}
 

	\item  We have shown that the inner-product function is secure against $\qma$. It is natural to ask if the other known  $2$-source extractors are secure against $\qma$. For example is Bourgain's extractor~\cite{Bou05}, which works for $2$-sources with min-entropy $(\frac{1}{2}-\delta)n$ (for constant $\delta$), secure against $\qma$? 
 
	\item Are the multi-source extractor constructions of \cite{Raz05,CZ19,Li19,BIRW06,Li12c,Li11,Li13,R09} secure against (a natural multi-source extension of) $\qma$ ?
	\item We have shown the security of Li's non-malleable extractor against quantum side information. Recently~\cite{BJK21} have shown another non-malleable extractor secure against quantum side information in the $2$-source setting. Several near optimal non-malleable extractor constructions are known in the classical setting, for example constructions of \cite{Li12c,Li17,Li19}. Are they secure against quantum side information? 
	
\end{enumerate} 

	\subsection*{Acknowledgment}\label{sec:acknowledgement}
 The work of NGB was done while he was a graduate student at the Centre for Quantum Technologies, NUS, Singapore. 
 
 This work is supported by the NRF grants NRF-NRFF2013-13 and NRF2021-QEP2-02-P05; the Prime Minister's Office and the Ministry of Education, Singapore, under the Research Centers of Excellence program and grants MOE2012-T3-1-009 and MOE2019-T2-1-145; the NRF2017-NRF-ANR004 {\em VanQuTe} grant and the {\em VAJRA} grant, Department of Science and Technology, Government of India.

\bibliographystyle{alpha}
\bibliography{References}

\end{document}